\newcommand{\techreport}[1]{}
\newcommand{\nontechreport}[1]{#1}
  \renewcommand{\techreport}[1]{#1}
  \renewcommand{\nontechreport}[1]{}
\DeclareFontFamily{U}{MnSymbolC}{}
\DeclareSymbolFont{MnSyC}{U}{MnSymbolC}{m}{n}
\DeclareMathSymbol{\Diamonddot}{\mathbin}{MnSyC}{"7E}
\DeclareFontShape{U}{MnSymbolC}{m}{n}{
    <-6>  MnSymbolC5
   <6-7>  MnSymbolC6
   <7-8>  MnSymbolC7
   <8-9>  MnSymbolC8
   <9-10> MnSymbolC9
  <10-12> MnSymbolC10
  <12->   MnSymbolC12}{}
\lstdefinelanguage{SPL}{
  morekeywords={acc, method, struct,if,else,returns,procedure,def,requires,ensures,:=,var,
    new,old,free,implicit,modifies,call,locals,assume,assert,choose,havoc,ghost,
    predicate,function,invariant,while, return,atomic, split, type, field, result,
    define, datatype, domain, axiom, val, lock, unlock, not, restart, private, public, match, with, case},
  deletekeywords={union,int},
  numbers=left,
  xleftmargin=2em,
  escapeinside={@}{@},
  numberstyle=\tiny,
  basicstyle=\footnotesize\ttfamily,
  columns=flexible,
  morecomment=[s][\color{green!60!black}]{/*}{*/},
  morecomment=[l][\color{green!60!black}]{//},
  moredelim=[is][\underbar]{`}{'},
  mathescape=true,
}
\lstdefinelanguage{compactSPL}{
  language=SPL,
  aboveskip=0pt,
  belowskip=5pt,
  xleftmargin=1.5em,
  numbersep=5pt,
    moredelim=**[is][\color{colorBug}]{|<}{>|},
}
\lstdefinestyle{proof}{
  language=Java,
  basicstyle=\linespread{1.15}\footnotesize\ttfamily,
  commentstyle=\color{gray}\ttfamily,
  tabsize=2,
  numberbychapter=false,
  keepspaces=true,
  numbers=left,
  numberstyle=\scriptsize,
  numbersep=7pt,
  firstnumber=last,
  mathescape=true,
  keywords={},
  xleftmargin=3pt,
}
\lstdefinestyle{inline}{
  basicstyle=\relscale{.9}\ttfamily,
  keywords={},
}
\newcommand{\code}[2][]{\lstinline[style=inline,#1]!#2!}
\tikzset{%
  array/.style={matrix of nodes,nodes={draw, minimum size=5mm, anchor=center},column sep=-\pgflinewidth, row sep=-\pgflinewidth, nodes in empty cells,anchor=center},
  ptr/.style={*->, shorten <=-(1.8pt+1.4\pgflinewidth)},
  edge/.style={->, thick},
  dedge/.style={<->, dashed},
  fedge/.style={->, dashed},
  unode/.style={circle, draw=black, thick, minimum size=8mm},
  mnode/.style={circle, draw=black, thick, fill=gray!20, minimum size=8mm, font=\scriptsize},
  stackVar/.style={circle, fill=none, inner sep=0pt, minimum size=8mm, font=\normalsize},
  gnode/.style={circle, draw=black, thick, minimum size=8mm},
  pnode/.style={circle, draw=black, thick, minimum size=8mm},
  rnode/.style={draw=black, thick, minimum size=8mm},
  lbl/.style={circle, fill=none, inner sep=0pt, minimum size=8mm},
  dnode/.style={circle, draw=black, thick, dotted, minimum size=8mm},
  inflow/.style={circle, fill=none, inner sep=0pt, minimum size=5mm, font=\normalsize},
  phantomNode/.style={circle, fill=none, inner sep=0pt, minimum
    size=0pt},
  treenode/.style = {align=center, inner sep=0pt, text centered, scale=.8, circle, thick, draw=black, minimum width=5mm},
  treeptr/.style = {line width=1.2pt, draw=colorTree, ->, >=stealth},
  listptr/.style = {line width=1pt, densely dashed, draw=colorList, -, >=stealth},
  trees/.style = { level/.style={sibling distance = 1.0cm, level distance = 1.15cm} },
  flattrees/.style = { trees, level/.style={sibling distance = 1.5cm, level distance = .7cm} },
  interval/.style = {above,rotate=42,anchor=south west,inner sep=0pt,font=\footnotesize},
  hinterval/.style = {font=\footnotesize},
}
\definecolor{colorUnklar}{RGB}{150,30,190}
\definecolor{colorTodo}{RGB}{200,30,30}
\definecolor{colorHighlight}{RGB}{30,30,200}
\definecolor{colorMyGreen}{RGB}{80,170,0}
\definecolor{colorMyRed}{RGB}{180,20,20}
\definecolor{colorMyBlue}{RGB}{40,40,220}
\definecolor{colorMyPink}{RGB}{220,40,220}
\newcommand{\makePurple}[1]{\textcolor{purple}{#1}}
\newcommand{\makeBlue}[1]{\textcolor{colorMyBlue}{#1}}
\newcommand{\makePink}[1]{\textcolor{colorMyPink}{#1}}
\newcommand{\makeTeal}[1]{\textcolor{teal}{\addtolength{\jot}{-1pt}#1}}
\colorlet{colorTree}{colorMyBlue}
\colorlet{colorList}{purple}
\colorlet{colorBug}{purple}
\newcommand{\makeBug}[1]{{\color{colorBug}#1}}
\definecolor{colorLogic}{RGB}{20,40,180}
\newcommand{\makeColorLogic}[1]{\textcolor{colorLogic}{#1}}
\newcommand{\setColorLogic}{\color{colorLogic}}
\newcommand{\rulelabel}[2]{%
   \protected@write \@auxout {}{\string \newlabel {#1}{{#2}{\thepage}{#2}{#1}{}} }%
   \hypertarget{#1}{}
}
\newcommand{\mkrulelabel}[1]{\rulelabel{rule:#1}{\textsc{(#1)}}}
\newcommand{\infrule}[3][]{\ifthenelse{\equal{#1}{}}{\inferrule{#2}{#3}}{\inferrule[\textsc{(#1)}\mkrulelabel{#1}]{#2}{#3}}}
\newcommand\xxrightarrow[1]{\raisebox{-.85pt}{\ensuremath{\smash{\mathrel{%
  \setbox2=\hbox{\stackon{\scriptstyle#1}{\scriptstyle#1}}%
  \stackon[-4.0pt]{%
    \xrightarrow{\makebox[\dimexpr\wd2\relax]{}}%
  }{%
   \scriptstyle#1\,%
  }%
}}}}}
\newcommand{\tinyskip}{\vspace{1pt plus 1pt minus 1pt}}
\newcommand{\smartparagraph}[1]{\smallskip\noindent{\sffamily\emph{#1.}}\ }
\newcommand{\boxedInline}[1]{{\setlength{\fboxsep}{.75\fboxsep}\boxed{#1}}}
\newcommand{\mymathtt}[1]{\text{\relscale{.9}\ttfamily#1}}
\def\prallspacing{\mskip 2mu plus 2mu minus 3mu}
\newcommand{\prall}[1]{{\prallspacing#1\prallspacing}}
\newcommand{\true}{\mathit{true}}
\newcommand{\false}{\mathit{false}}
\renewcommand{\emptyset}{\varnothing}
\newcommand{\setcompact}[1]{\{#1\}}
\newcommand{\set}[1]{\{\,#1\,\}}
\newcommand{\setnd}[1]{\{#1\}}
\newcommand{\setcond}[2]{\set{#1\;\mid\;#2}}
\newcommand{\powerset}[1]{\mathbb{P}(#1)}
\newcommand{\nat}{\mathbb{N}}
\newcommand{\pto}{\rightharpoonup}
\newcommand{\ite}[3]{#1\:?\;#2\::\:#3}
\newcommand{\bnf}{\;\mid\;}
\newcommand{\defeq}{\triangleq} 
\newcommand{\defebnf}{\Coloneqq}
\newcommand{\acom}{\mymathtt{com}} 
\newcommand{\astmt}{\mymathtt{st}} 
\newcommand{\cskip}{\mymathtt{skip}}
\newcommand{\seqof}[2]{#1;#2}
\newcommand{\choiceof}[2]{#1+#2}
\newcommand{\loopof}[1]{{#1}^{*}}
\newcommand{\setcom}{\mathtt{COM}}
\newcommand{\setstates}{\Sigma}
\newcommand{\localindex}{\mathsf{L}}
\newcommand{\sharedindex}{\mathsf{G}}
\newcommand{\setshared}{\Sigma_{\sharedindex}}
\newcommand{\setlocal}{\Sigma_{\localindex}}
\newcommand{\sharedemp}{\emp_{\sharedindex}}
\newcommand{\localemp}{\emp_{\localindex}}
\newcommand{\acfg}{\mathsf{cf}} 
\newcommand{\aconfig}{\acfg} 
\newcommand{\apc}{\mathsf{pc}}
\newcommand{\alocal}{\mathsf{l}}
\newcommand{\ashared}{\mathsf{g}}
\newcommand{\alocalseq}{\lambda}
\newcommand{\asharedseq}{\gamma}
\newcommand{\acomp}{\sigma}
\newcommand{\firstof}[1]{\mathsf{first}(#1)}
\newcommand{\lastof}[1]{\mathsf{last}(#1)}
\newcommand{\astate}{\mathsf{s}}
\newcommand{\astatep}{\mathsf{t}}
\newcommand{\semCom}[1]{\llbracket#1\rrbracket}
\newcommand{\pcStepOf}[3]{#1\,\xxrightarrow{\vphantom{pt}#2}\,#3}
\newcommand{\progStepRel}{\rightarrow}
\newcommand{\anop}{\oplus}
\newcommand{\abort}{\textsf{abort}}
\newcommand{\stateunit}{1}
\newcommand{\statemult}{\mathop{*}}
\newcommand{\sharedmult}{\mathop{{\mstar}_{\sharedindex}}}
\newcommand{\localmult}{\mathop{{\mstar}_{\localindex}}}
\newcommand{\statemultdef}{\mathop{\#}}
\newcommand{\apred}{\mathit{p}}
\newcommand{\aguard}{\mathit{g}}
\newcommand{\apredp}{\mathit{q}}
\newcommand{\apredpp}{\mathit{o}}
\newcommand{\apredppp}{\mathit{r}}
\newcommand{\aninvpred}{\mathit{inv}}
\newcommand{\semof}[1]{\llbracket#1\rrbracket}
\newcommand{\semOf}[1]{\llbracket#1\rrbracket}
\newcommand{\csemOf}[1]{\llbracket\mkern-7mu\semOf{#1}\mkern-7mu\rrbracket}
\newcommand{\csemof}[1]{\llbracket\mkern-7mu\semOf{#1}\mkern-7mu\rrbracket}
\newcommand{\acpred}{{\mathit{a}}}
\newcommand{\acpredp}{\mathit{b}}
\newcommand{\acpredpp}{\mathit{c}}
\newcommand{\acpredppp}{\mathit{d}}
\newcommand{\OBLname}{\mathsf{OBL}}
\newcommand{\FULname}{\mathsf{RCT}}
\newcommand{\mstar}{\mathop{*}}
\newcommand{\MSTAR}{\,\mstar\,}
\DeclareMathOperator*{\bigmstar}{\scalerel*{\ast}{\sum}}
\newcommand{\sepimp}{\mathrel{-\!\!*}}
\newcommand{\emp}{\mathsf{emp}}
\newcommand{\hoareof}[3]{\set{#1}\:#2\:\set{#3}}
\newcommand{\hoareOf}[3]{\set{#1}\:#2\:\set{#3}}
\newcommand{\subModels}{\models}
\newcommand{\semCalc}{\Vdash}
\newcommand{\semcalc}{\Vdash}
\newcommand{\past}{\Diamonddot}
\newcommand{\pastof}[1]{\past#1}
\newcommand{\pastOf}[1]{\past{\left(#1\right)}}
\newcommand{\pastOF}[1]{\past{\bigl(\mkern+1mu#1\mkern+1mu\bigr)}}
\newcommand{\nowof}[1]{\_{#1}}
\newcommand{\nowOf}[1]{\_\left(#1\right)}
\newcommand{\theInterferenceVar}{\mathbb{X}}
\newcommand{\weakhypof}[3]{\mathit{h}(#1, #2, #3)}
\newcommand{\hypof}[3]{\ahyp(#1, #2, #3)}
\newcommand{\proghypof}[4]{\mathit{hcf}(#1, #2, #3, #4)}
\newcommand{\hypholdsof}[2]{#1\, \checkmark\, #2}
\newcommand{\weakpastof}[1]{\nowof{\!\!\pastof{#1}}}
\newcommand{\weakpastOf}[1]{\weakpastof{\left(#1\right)}}
\newcommand{\initset}[2]{\mathsf{Init}_{#1, #2}}
\newcommand{\acceptset}[1]{\mathsf{Acc}_{#1}}
\newcommand{\reachset}[1]{\mathsf{Reach}(#1)}
\newcommand{\theInterference}{\mathbb{I}}
\newcommand{\theSelfInterference}{\mathbb{S}}
\newcommand{\thePredicates}{\mathbb{P}}
\newcommand{\isInterferenceFreeOf}[2][\theInterference]{\boxast_{#1}\,#2}
\newcommand{\nullptr}{\nil}
\newcommand{\keysel}{\mymathtt{key}}
\newcommand{\marksel}{\mymathtt{mark}}
\newcommand{\selof}[2]{#1\mkern+2mu{.}\mkern+2mu#2}
\newcommand{\keyof}[1]{\selof{#1}{\keysel}}
\newcommand{\markof}[1]{\selof{#1}{\marksel}}
\newcommand{\astateseq}{\sigma}
\newcommand{\astateseqp}{\tau}
\newcommand{\contentsof}[1]{\mathsf{C}(#1)}
\newcommand{\keysetof}[1]{\mathsf{KS}(#1)}
\newcommand{\insetof}[1]{\mathsf{IS}(#1)}
\newcommand{\annot}[1]{\makeTeal{\bigl\{\,#1\,\bigr\}}}
\newcommand{\annotml}[1]{\makeTeal{\left\{\,\begin{aligned}#1\end{aligned}\,\right\}}}
\newcommand{\MC}[2][]{\makeBlue{\ifthenelse{\equal{#1}{}}{#2}{\underbracket{#2}_{\text{\textnormal{#1}}}}}}
\newcommand{\MCS}[2][]{\makeBlue{\ifthenelse{\equal{#1}{}}{#2}{\smash{\underbracket{#2}_{\text{\textnormal{#1}}}}}}}
\newcommand{\MF}[2][]{\makePurple{\ifthenelse{\equal{#1}{}}{#2}{\underbracket{#2}_{\text{\textnormal{#1}}}}}}
\newcommand{\MH}[2][]{\makePink{\ifthenelse{\equal{#1}{}}{#2}{\underbracket{#2}_{\text{\textnormal{#1}}}}}}
\newcommand{\anode}{\mathit{x}}
\newcommand{\anodep}{\mathit{y}}
\newcommand{\inflow}{\mathit{in}}
\newcommand{\nil}{\texttt{nil}}
\newcommand{\inflowfld}{\mymathtt{in}}
\newcommand{\vres}{\mathit{res}}
\newcommand{\keyvarof}[1]{\mathit{key}(#1)}
\newcommand{\inflowvarof}[1]{\inflow(#1)}
\newcommand{\markvarof}[1]{\mathit{mark}(#1)}
\newcommand{\inv}{\mathsf{Inv}}
\newcommand{\old}[1]{{}^\backprime\mkern-1mu#1}
\newcommand{\nodeof}[1]{\mathsf{N}(#1)}
\newcommand{\somenodes}{\mathit{N}}
\newcommand{\somenodesp}{\mathit{M}}
\newcommand{\atoolname}[1]{\code{#1}\xspace}
\newcommand{\plankton}{\atoolname{plankton}}
\newcommand{\rawsymbolYes}{\smash{\ding{51}}}
\newcommand{\rawsymbolNo}{\smash{\ding{55}}}
\newcommand{\commandof}[1]{\mymathtt{2com}(#1)}
\newcommand{\stmtof}[1]{\mymathtt{2stmt}(#1)}
\newcommand{\governed}[2]{\mathsf{Gov}(#1)}
\newcommand{\governeddef}{\governed{\theInterference}{\theSelfInterference}}
\newcommand{\governeddefnb}{\governed{\theInterference}{\theSelfInterference}}
\newcommand{\semcalcti}[1][]{\semcalc_{#1\mathit{ti}}}
\newcommand{\semcalctinf}{\semcalc_{\mathit{ti}, \mathit{nf}}}
\newcommand{\theHyp}{\mathbb{H}}
\newcommand{\ahyp}{\mathit{h}}
\newcommand{\histof}[2]{\langle#1\rangle#2}
\newcommand{\ahist}{\mathit{hst}}
\newcommand{\sethist}{\mathsf{HST}}
\newcommand{\emphist}{\mathsf{emphst}}
\newcommand{\sethisteps}{\sethist_{\varepsilon}}
\newcommand{\enrichof}[2]{\mathsf{enrich}(#1, #2)}
\newcommand{\decori}[3]{#1\times(#2\rightsquigarrow#3)}
\newcommand{\decorc}[3]{#1\times(#2\rightsquigarrow#3)}
\newcommand{\isonflowpath}[1]{\textit{indegree-one}(#1)}
\newcommand{\anodeval}{\mathit{v}}
\newcommand{\anodevalp}{\mathit{u}}
\newcommand{\avalval}{\mathit{t}}
\newcommand\xxmapsto[1]{\raisebox{-.85pt}{\ensuremath{\smash{\mathrel{%
  \setbox2=\hbox{\stackon{\scriptstyle#1}{\scriptstyle#1}}%
  \stackon[-4.0pt]{%
    \xmapsto{\makebox[\dimexpr\wd2\relax]{}}%
  }{%
   \scriptstyle#1\,%
  }%
}}}}}
\DeclareDocumentCommand\fracto{ g g }{\IfValueTF{#1}{\xxmapsto{%
  ~\text{\nicefrac{\IfValueTF{#2}{#1}{1}}{\IfValueTF{#2}{#2}{#1}}}~%
}}{\xxmapsto{~1~}}}
\newcommand{\persto}{\xxmapsto{~\square~}}
\newcommand{\bigleadsto}{\scalebox{1.6}{\textbf{$\rightsquigarrow$}}}
\newcommand{\bigleadsfrom}{\reflectbox{\scalebox{1.6}{\textbf{$\rightsquigarrow$}}}}
\newcommand{\inslock}{\rotatebox[origin=c]{180}{\textcolor{colorTree}{\faLock}}}
\newcommand{\dellock}{\rotatebox[origin=c]{0}{\textcolor{colorTree}{\faLock}}}
\newcommand{\textlistptr}[1][]{%
  \raisebox{3pt}{\begin{tikzpicture}[baseline=(current bounding box.center)]
    \draw (0,0) edge[listptr,#1] (.5,0);
  \end{tikzpicture}}
}
\newcommand{\texttreeptr}{%
  \raisebox{3pt}{\begin{tikzpicture}[baseline=(current bounding box.center)]
    \path[treeptr] (0,0) -- (.5,0);
  \end{tikzpicture}}  
}
\DeclareDocumentCommand\anobl{ g }{\OBLname_{\IfValueT{#1}{#1}}}
\DeclareDocumentCommand\aful{ m g }{\FULname_{#1\IfValueT{#2}{,#2}}}
\newcommand{\asspec}{\Psi}
\newcommand{\acss}{\mathsf{CSS}}
\newcommand{\acssup}{\mathsf{UP}}
\newcommand{\abskeyset}{\mathcal{K}}
\newcommand{\abscontent}{\mathcal{C}}
\newcommand{\abscontentp}{\mathcal{C}'}
\newcommand{\absop}{\mathit{op}}
\newcommand{\semcalclin}{\semcalcti^\mathit{lin}}
\newcommand{\sharedinv}{\mathsf{SInv}}
\newcommand{\nodeinv}{\mathsf{NInv}}
\newcommand{\succsel}{\mymathtt{succ}}
\newcommand{\predsel}{\mymathtt{pred}}
\newcommand{\leftsel}{\mymathtt{left}}
\newcommand{\rightsel}{\mymathtt{right}}
\newcommand{\parentsel}{\mymathtt{parent}}
\newcommand{\treesel}{\mymathtt{treeLock}}
\newcommand{\listsel}{\mymathtt{listLock}}
\newcommand{\succof}[1]{\selof{#1}{\succsel}}
\newcommand{\predof}[1]{\selof{#1}{\predsel}}
\newcommand{\leftof}[1]{\selof{#1}{\leftsel}}
\newcommand{\rightof}[1]{\selof{#1}{\rightsel}}
\newcommand{\parentof}[1]{\selof{#1}{\parentsel}}
\newcommand{\listlockof}[1]{\selof{#1}{\listsel}}
\newcommand{\treelockof}[1]{\selof{#1}{\treesel}}
\newcommand{\ptrmin}{\mymathtt{min}}
\newcommand{\ptrmax}{\mymathtt{max}}
\newcommand{\ptrroot}{\mymathtt{min}}
\newcommand{\OBLc}[1]{\mathsf{CTN}_{#1}}
\newcommand{\OBLi}[1]{\mathsf{INS}_{#1}}
\newcommand{\OBLd}[1]{\mathsf{DEL}_{#1}}
\newcommand{\FULcid}[1]{\FULname_{#1}}
\newcommand{\FULi}[1]{\FULcid{#1}}
\newcommand{\leftvarof}[1]{\mathit{left}(#1)}
\newcommand{\rightvarof}[1]{\mathit{right}(#1)}
\newcommand{\parentvarof}[1]{\mathit{parent}(#1)}
\newcommand{\predvarof}[1]{\mathit{pred}(#1)}
\newcommand{\succvarof}[1]{\mathit{succ}(#1)}
\newcommand{\treevarof}[1]{\mathit{tlock}(#1)}
\newcommand{\listvarof}[1]{\mathit{llock}(#1)}
\newcommand{\vk}{\mathit{k}}
\newcommand{\pcurr}{y}
\newcommand{\pprev}{x}
\newcommand{\pnext}{z}
\newcommand{\pparent}{p}
\newcommand{\pnode}{\pcurr}
\newcommand{\glnodeof}[1]{\mathsf{Guarded}(#1)}
\newcommand{\holds}[1]{\mathsf{Locked}(#1)}
\newcommand{\succInv}{\mathsf{SuccInv}}
\newcommand{\locateInv}{\mathsf{LocInv}}
\newcommand{\hiddenIns}[1]{\mathsf{TreeIns}(#1)}
\newcommand{\mand}{\cap}
\DeclareDocumentCommand\ANNOT{ m g }{\makeTeal{\annot{#1\IfNoValueF{#2}{\MSTAR\pastOF{#2}}}}}
\DeclareDocumentCommand\ANNOTML{ m g }{\makeTeal{\annotml{#1}\IfNoValueF{#2}{\mand\pastof{\annotml{#2}}}}}
    \let\@authorsaddresses\@empty
\begin{document}
	\setdefaultenum{(i)}{(a)}{(a)}{(a)}

	\title{Embedding Hindsight Reasoning in Separation Logic}

	\author{Roland Meyer}
	\orcid{0000-0001-8495-671X}
	\affiliation{%
	  \institution{TU Braunschweig}
	  \country{Germany}
	}
	\email{roland.meyer@tu-bs.de}

	\author{Thomas Wies}
	\orcid{0000-0003-4051-5968}
	\affiliation{%
	  \institution{New York University}
	  \country{USA}
	}
	\email{wies@cs.nyu.edu}

	\author{Sebastian Wolff}
	\orcid{0000-0002-3974-7713}
	\affiliation{%
	  \institution{New York University}
	  \country{USA}
	}
	\email{sebastian.wolff@cs.nyu.edu}

	\bibliographystyle{ACM-Reference-Format}
	\citestyle{acmauthoryear}

	\begin{CCSXML}
	<ccs2012>
	   <concept>
	       <concept_id>10003752.10003790.10003794</concept_id>
	       <concept_desc>Theory of computation~Automated reasoning</concept_desc>
	       <concept_significance>500</concept_significance>
	       </concept>
	   <concept>
	       <concept_id>10003752.10003790.10011741</concept_id>
	       <concept_desc>Theory of computation~Hoare logic</concept_desc>
	       <concept_significance>500</concept_significance>
	       </concept>
	   <concept>
	       <concept_id>10003752.10003790.10011742</concept_id>
	       <concept_desc>Theory of computation~Separation logic</concept_desc>
	       <concept_significance>500</concept_significance>
	       </concept>
	   <concept>
	       <concept_id>10003752.10003790.10003806</concept_id>
	       <concept_desc>Theory of computation~Programming logic</concept_desc>
	       <concept_significance>100</concept_significance>
	       </concept>
	   <concept>
	       <concept_id>10003752.10010124.10010138.10010142</concept_id>
	       <concept_desc>Theory of computation~Program verification</concept_desc>
	       <concept_significance>300</concept_significance>
	       </concept>
	 </ccs2012>
	\end{CCSXML}
	\ccsdesc[500]{Theory of computation~Automated reasoning}
	\ccsdesc[500]{Theory of computation~Hoare logic}
	\ccsdesc[500]{Theory of computation~Separation logic}
	\ccsdesc[300]{Theory of computation~Program verification}
	\ccsdesc[100]{Theory of computation~Programming logic}

	\nontechreport{
		\keywords{Hindsight, Linearizability, Logical Ordering Tree}
	}


\begin{abstract}
\looseness=-1
Automatically proving linearizability of concurrent data structures remains a key challenge for verification. We present temporal interpolation as a new proof principle to guide automated proof search using hindsight arguments within concurrent separation logic. Temporal interpolation offers an easy-to-automate alternative to prophecy variables and has the advantage of structuring proofs into easy-to-discharge hypotheses. Additionally, we advance hindsight theory by integrating it into a program logic, bringing formal rigor and complementary proof machinery. We substantiate the usefulness of temporal interpolation by implementing it in a tool and using it to automatically verify the Logical Ordering tree. The proof is challenging due to future-dependent linearization points and complex structure overlays. It is the first formal proof of this data structure. Interestingly, our formalization revealed an unknown bug and an existing informal proof as erroneous.
\end{abstract}

	\maketitle
	\techreport{{%
		\vspace{-.5em}
		\small\noindent{\bfseries Conference Version:}\par\nobreak\noindent
		Roland Meyer, Thomas Wies, and Sebastian Wolff. 2023.
		Embedding Hindsight Reasoning in Separation Logic.
		\emph{Proc. ACM Program. Lang.} 7, PLDI, Article 182 (June 2023), 24 pages.
		\url{https://doi.org/10.1145/3591296}
		\medskip
	}}



\section{Introduction}
\label{sec:intro}

We are concerned with automatically proving linearizability, the standard correctness criterion for concurrent data structures~\cite{DBLP:journals/toplas/HerlihyW90}. A concurrent data structure is linearizable subject to a sequential specification of its methods, if each method takes effect in a single atomic step of its concurrent execution, the method's \emph{linearization point}, and satisfies the sequential specification in this step.

\nontechreport{\looseness=-1}
Concurrent separation logics~\cite{DBLP:conf/concur/FuLFSZ10,DBLP:conf/esop/GotsmanRY13,DBLP:conf/esop/SergeyNB15,DBLP:conf/ecoop/DelbiancoSNB17,DBLP:conf/sas/BellAW10,DBLP:conf/popl/ParkinsonBO07,DBLP:conf/wdag/HemedRV15,DBLP:conf/concur/VafeiadisP07,DBLP:conf/pldi/GuSKWKS0CR18,DBLP:conf/tacas/ElmasQSST10} provide a powerful toolbox of deductive reasoning techniques to verify complex concurrent data structures.
However, the proof construction heavily relies on the proof author's creativity and expertise in wielding the available tools effectively.
For instance, in order to construct the inductive invariant of the data structure, the proof author may have to devise proof-specific resource algebras to express ghost state that captures the key aspects of the computation history. This hinders proof automation due to the vast complexity of the proof space that needs to be explored. Similarly, the proofs may make use of prophecy variables~\cite{DBLP:journals/tcs/AbadiL91} to predict future-dependent linearization points~\cite{DBLP:phd/ethos/Vafeiadis08,DBLP:conf/pldi/LiangF13,DBLP:journals/pacmpl/JungLPRTDJ20}. Constructing such proofs involves backward reasoning, which is difficult to automate~\cite{DBLP:conf/cav/BouajjaniEEM17}.
It stands to reason that there is a need for guiding principles that help to structure the proof and that provide effective strategies for automated tools to prune the search space.

\looseness=-1
Hindsight theory~\cite{DBLP:conf/podc/OHearnRVYY10,DBLP:conf/wdag/Lev-AriCK15,DBLP:conf/wdag/FeldmanE0RS18,DBLP:journals/pacmpl/FeldmanKE0NRS20} provides such a guiding principle, which we refer to as \emph{temporal interpolation}.
One proves lemmas of the form: if there existed a past state that satisfied property $p$ and the current state satisfies $q$, then there must have existed an intermediate state that satisfied $o$. Such lemmas can then be applied, e.g., to prove the existence of a future-dependent linearization point in hindsight.
Hindsight is 20/20, the arguments only involve forward reasoning, which is easier to automate than, say, prophecy-based arguments.

One limitation of the existing hindsight theory is that it has only explored the general idea of temporal interpolation very narrowly. Concretely, it has been used only to prove hindsight lemmas about concurrent traversals of data structures. These are variations of statements of the form \emph{``if the current node $x$ of the traversal was reachable from the root at some point in the past ($p$), and $y$ is the successor of $x$ in the present state ($q$), then $y$ was reachable from the root at some point in the past~($o$)''}. We show that temporal interpolation applies more broadly in other contexts as well.

Another limitation is that the proof and application of these hindsight lemmas has so far been confined to meta-level linearizability arguments. As a consequence, existing hindsight proofs can lack the rigor enforced by a program logic. We show that this has resulted in at least one incorrect hindsight-based proof in the past~\cite{DBLP:journals/pacmpl/FeldmanKE0NRS20}.

\smartparagraph{Contributions}
Building on~\cite{DBLP:journals/pacmpl/MeyerWW22}, we present a concurrent separation logic that integrates temporal interpolation as a general proof rule. The logic offers the best of both worlds: it enables the intuitive reasoning of hindsight theory within the rigorous framework of a formal proof system. As in~\cite{DBLP:journals/pacmpl/MeyerWW22}, the logic's semantic foundation is based on computations rather than states, which it exposes at the syntactic level in the form of a lightweight temporal operator. This operator provides a uniform mechanism for tracking history information. This reduces the need for introducing proof-specific auxiliary ghost state and helps to prune the space of possible proofs to consider for automatic proof construction. At the same time, the logic offers all advantages of separation logic, including the ability to reason locally about state mutation and concurrency via the frame rule, and to introduce ghost state if and when needed.

The key innovation over~\cite{DBLP:journals/pacmpl/MeyerWW22} is a new proof rule that enables general hindsight reasoning via temporal interpolation.
The proof rule postulates and then applies hypotheses $\weakhypof{p}{q}{o}$ that state the correctness of the temporal interpolation. These hypotheses are collected by the main proof and then discharged in subproofs. This approach provides a proof-structuring mechanism: the subproofs can use a coarse-grained abstraction of the program behavior, which often simplifies the overall proof argument and aids automation. The nature of temporal interpolation as a proof-structuring mechanism is made formally precise in our soundness proof by showing that the proof rule can be eliminated from the logic.

To demonstrate the usefulness of our development, we have integrated temporal interpolation into \plankton~\cite{DBLP:journals/pacmpl/MeyerWW22}, an automated verifier for concurrent search structures based on separation logic.
As a case study, we have used the extended tool \cite{artifact} to automatically verify the logical-ordering (LO-)tree~\cite{DBLP:conf/ppopp/DrachslerVY14}.
The proof exercises the full power of our logic by combining a linearizability argument based on temporal interpolation with local reasoning in separation logic.
To our knowledge, there has been no formal proof of the LO-tree prior to this work (either automated or mechanized). In fact, our efforts identified one previously unreported bug in the original implementation of the data structure. Another bug was identified by \citet{DBLP:journals/pacmpl/FeldmanKE0NRS20}, who presented an informal hindsight-based proof. While the fix proposed by \citet{DBLP:journals/pacmpl/FeldmanKE0NRS20} addresses the original bug, we show that it introduces a new linearizability violation. This underscores the benefit of supporting hindsight proofs in a formal logic.

\smartparagraph{Limitations}
Our focus is on automating linearizability proofs for concurrency library implementations.
In particular, our program logic was not designed for modular verification of library clients against the proved linearizability specifications.
Moreover, \plankton is not yet fully automated: the user provides an invariant describing the properties of each node comprising the data structure in the shared heap.
Finally, the implementation of temporal interpolation in \plankton is currently geared towards reasoning about \emph{pure} future-dependent linearization points (i.e., those that do not modify the abstract state of the data structure).
We leave the handling of impure cases in the implementation as future work.
Though, we note that these cases are not prevalent in the context of concurrent search structures.

\nontechreport{A companion technical report containing additional details is available as \cite{techreport}.}


\section{Overview}
\label{sec:Overview}

\newcommand{\m}[1]{\mymathtt{#1}}
\newcommand{\nl}{n_l}
\newcommand{\nr}{n_r}
\newcommand{\ret}{v}
\newcommand{\counter}{\mymathtt{counter}}

We illustrate our approach using the idealized distributed counter shown in \Cref{fig:dist-counter}.
A counter object $c$ has an abstract state that tracks an integer value $n$ and supports two methods: \code{inc($c$)} atomically increments $n$ by $1$ and \code{read($c$)} returns $n$.
The counter is distributed in the sense that $n$ is the sum of two integer values stored in separate memory locations \code{$c$.l} and \code{$c$.r}.
The implementation of \code{inc} non-deterministically chooses one of the two locations and then atomically increments it using a \emph{fetch-and-add} (\code{FAA}) instruction.
The implementation of \code{read} non-atomically reads the values of the two memory locations and then returns their sum.

\begin{figure}
  \begin{minipage}[t]{.4\linewidth}
  \begin{lstlisting}[gobble=4,language=SPL,mathescape=true,escapechar=@,aboveskip=0pt,belowskip=-4pt]
    struct C { var l: Int; var r: Int }
    predicate counter($c$: C, $n$: Int) {
      $\exists$ $\nl$ $\nr$ ::
        $c$.l $\mapsto$ $\nl$ $\mstar$ $c$.r $\mapsto$ $\nr$ $\land$ 
        $n$ == $\nl$ + $\nr$
    }
  \end{lstlisting}
  \end{minipage}\hfill
  \begin{minipage}[t]{.27\linewidth}
  \begin{lstlisting}[gobble=4,language=SPL,mathescape=true,escapechar=@,aboveskip=0pt,belowskip=-4pt]
    $\annot{\counter(c, n)}$    
    method read($c$: C) {
      val $x$ = $c$.l @\label{line:counter-read-l}@
      val $y$ = $c$.r @\label{line:counter-read-r}@
      return $x$ + $y$
    } $\annot{\ret.\, \counter(c, n) \mstar \ret = n}$    
  \end{lstlisting}
  \end{minipage}\hfill
  \begin{minipage}[t]{.30\linewidth}    
  \begin{lstlisting}[gobble=4,language=SPL,mathescape=true,escapechar=@,aboveskip=0pt,belowskip=-4pt]
    $\annot{\counter(c, n)}$    
    method inc($c$: C) {
      if (nondet())
        FAA($c$.l, 1)
      else FAA($c$.r, 1)
    } $\annot{\counter(c, n+1)}$    
  \end{lstlisting}
\end{minipage}
\caption{Distributed counter object.\label{fig:dist-counter}}
\vspace{-1em}
\end{figure}

Our goal is to prove that the distributed counter is linearizable with respect to its sequential specification, which is given in \Cref{fig:dist-counter} as Hoare annotations expressed in separation logic.
The specification uses the predicate $\counter(c,n)$ to define the abstract state of the counter $c$ in terms of the underlying memory representation.
Here, a \emph{points-to} predicate $a \mapsto v$ expresses ownership of the memory location at address $a$ and, moreover, that this location stores value $v$.
The operator $p \mstar q$ is \emph{separating conjunction}, which expresses that $p$ and $q$ hold over disjoint memory regions. In the following, we assume an \emph{intuitionistic semantics} of these predicates, i.e. $p * \true = p$.

To prove linearizability, we need to show that each method transforms its precondition to its postcondition in a single atomic step.
Due to interferences by concurrent \code{inc} methods, the counter value may change throughout the execution of a method.
Hence, the value $n$ in the precondition of the specification does not refer to the counter's initial abstract state when the method is invoked, but rather to its abstract state at the linearization point. This semantics of the Hoare annotations corresponds to that of logically atomic triples~\cite{DBLP:conf/ecoop/PintoDG14}.
Note that the variable $v$ in the postcondition of \code{read} is bound to the method's return value.

The linearization point of \code{inc} is when \code{FAA} is executed and the desired Hoare specification follows immediately from the specification of \code{FAA}.
So we focus on the more interesting case of \code{read}.
The \code{read} method does not change the value of the counter.
Hence, it suffices to show that the returned value $x + y$ is equal to the counter value $n$ at the linearization point.
The challenge is that the linearization point depends on the future interferences of concurrent \code{inc} operations.
In fact, it may lie in a concurrently executing \code{inc}.
For example, consider the scenario where at the point when \code{read} executes Line~\ref{line:counter-read-l}, we have $c.\m{l} = c.\m{r} = 0$ and before it proceeds to Line~\ref{line:counter-read-r}, two concurrent \code{inc}s increment first $c.\m{l}$ and then $c.\m{r}$ to $1$.
That is, when \code{read} executes Line~\ref{line:counter-read-l} we have $n=0$ and when it executes Line~\ref{line:counter-read-r} we have $n=2$, yet the return value is $1$.
Nevertheless, this execution of \code{read} is linearizable because there is a time point in between when $n=1$, namely right after the linearization point of the first concurrently executing \code{inc}. Note that if the second \code{inc} incremented $c.\m{l}$ instead of $c.\m{r}$, then the return value of \code{read} would be $0$ and its linearization point would already be when it reads $c.\m{l}$. This is why the linearization point of \code{read} is future-dependent.

Intuitively, the linearizability of \code{read} follows from the fact that the two memory locations increase monotonically by increments of $1$.
So if the counter has value $n$ at some point $t$ and value $n' > n$ at some later point $t'$, then for each value $n''$ with $n \leq n'' \leq n'$ there is an intermediate state between $t$ and $t'$ where the value of the counter is $n''$.
We demonstrate how to formalize this intuitive argument in our program logic.
The logic enables temporal reasoning about computations
using \emph{past predicates}, $\past{\apred}$, which express that the \emph{state predicate} $\apred$ held true at some prior state of the computation.
Our goal is to derive that $\past{(\counter(c,x+y))}$ is true after Line~\ref{line:counter-read-r}.
This implies the existence of a linearization point for \code{read}.

The proof proceeds in two parts.
The first part proves the goal above but assumes the validity of an auxiliary hypothesis that is derived during the proof.
This hypothesis captures the intuitive reasoning used above to conclude the existence of an unobserved intermediate state due to interferences by other threads.
The second part of the proof discharges this hypothesis.

An outline of the first part of the proof is shown in \Cref{fig:dist-counter-proof}.
Throughout the proof, variables that do not occur in the program code such as $n_l$ are implicitly existentially quantified. The program logic follows a thread-modular approach that mostly uses sequential Hoare-style reasoning. The soundness of this reasoning is guaranteed by ensuring that each two consecutive atomic commands are separated by an \emph{interference-free} intermediate assertion. That is, concurrently executing threads will not affect the truth value of this assertion. In the following, we elude the details of the mechanism used to check interference freedom as it is orthogonal to our core contributions. The details of this mechanism are presented in \Cref{sec:Preliminaries}.

\begin{wrapfigure}[20]{r}{5.7775cm}
  \vspace{-4.5mm}
  \techreport{\vspace{-2mm}}
  \newcommand{\MKG}{\color{green!60!black}}
\begin{lstlisting}[language=compactSPL,aboveskip=2pt,belowskip=-4pt]
$\annot{\counter(c, n)}$    
method read($c$: C) {
  $\annot{c.\m{l} \mapsto \nl \mstar c.\m{r} \mapsto \nr}$ @\label{line:counter-proof-init}@
  val $x$ = $c$.l
  $\annotml{&c.\m{l} \mapsto \nl \mstar c.\m{r} \mapsto \nr \land x = \nl \\& \land \past{(c.\m{l} \mapsto \nl \mstar c.\m{r} \mapsto \nr)}}$ @\label{line:counter-proof-after-read-l}@
  $\annotml{&c.\m{l} \mapsto {\nl'} \mstar c.\m{r} \mapsto {\nr'} \land x = \nl \\& \land \past{(c.\m{l} \mapsto \nl \mstar c.\m{r} \mapsto \nr \land \nr \leq {\nr'})}}$ @\label{line:counter-proof-after-read-l-stable}@
  val $y$ = $c$.r
  $\annotml{&c.\m{l} \mapsto {\nl'} \mstar c.\m{r} \mapsto {\nr'} \land x = \nl \land y = {\nr'} \\& \land \past{(c.\m{l} \mapsto \nl \mstar c.\m{r} \mapsto \nr \land \nr \leq {\nr'})}}$ @\label{line:counter-proof-after-read-r}@
  //@\:@Hypothesis: @\MKG$\forall \nl\, \nr\, \nr'.\:\hoareOf{\acpred}{\theInterference^*}{\acpredp}  $@@\label{line:counter-proof-hypothesis}@
  //@\:\MKG\rlap{$\acpred \defeq c.\m{l} \mapsto \nl \mstar c.\m{r} \mapsto \nr \land \nr \leq \nr'$}@
  //@\:\MKG\rlap{$\acpredp \defeq c.\m{r} \mapsto \nr' \mkern+1mu\rightarrow\mkern+1mu \weakpastof{\mkern-2mu(\counter(c, \nl + \nr')\mkern-2mu)}$}@
  $\annotml{&\counter(c,{n'}) \land x = \nl \land y = {\nr'} \\& \land \past{(\counter(c, \nl + \nr'))}}$ @\label{line:counter-proof-temp-interp}@
  $\annot{\counter(c,n') \land \past{(\counter(c,x + y))}}$ @\label{line:counter-proof-final}@
  return $x$ + $y$
} $\annot{\ret.\, \counter(c, n) \mstar \ret = n}$
\end{lstlisting}
  \caption{Proof outline for the \code{read} method.\label{fig:dist-counter-proof}}
\end{wrapfigure}

The proof starts by unfolding the definition of $\counter(c,n)$ in the precondition, yielding the assertion on Line~\ref{line:counter-proof-init}.
After reading $c.\m{l}$ we know that $x$ is bound to the old value $n_l$ of $c.\m{l}$.
We also record the state of the counter before the read command in a past predicate $\past{(c.\m{l} \mapsto \nl \mstar c.\m{r} \mapsto \nr)}$, yielding the assertion on Line~\ref{line:counter-proof-after-read-l}.
This assertion is not interference-free because concurrent \code{inc} threads may change the values of $c.\m{r}$ and $c.\m{l}$.
We therefore weaken the assertion by introducing fresh variables $\nl'$ and $\nr'$ for these values. We leave $\nl'$ unconstrained but preserve $\nr \leq \nr'$, capturing that concurrent threads can only increase $c.\m{r}$. Since $\nr \leq \nr'$ only concerns logical variables, we can push this fact into the past predicate. The resulting interference-free assertion is shown on Line~\ref{line:counter-proof-after-read-l-stable}.

We proceed similarly for the read of $c.\m{r}$ resulting in the assertion on Line~\ref{line:counter-proof-after-read-r}. Again, this assertion is not interference-free because concurrent threads may change the value of $c.\m{r}$. We want to weaken this assertion to the interference-free assertion on Line~\ref{line:counter-proof-final}, which implies our desired goal. Observe that Line~\ref{line:counter-proof-final} follows from Line~\ref{line:counter-proof-temp-interp} using equality reasoning. So it remains to connect lines~\ref{line:counter-proof-after-read-r} and \ref{line:counter-proof-temp-interp}. First, observe that the predicate $\counter(c,n')$ is obtained from $c.\m{l} \mapsto \nl' \mstar c.\m{r} \mapsto \nr'$ by choosing $n' = \nl' + \nr'$. To derive, $\past{(\counter(c, \nl + \nr'))}$, the proof conjectures the validity of the hypothesis on Line~\ref{line:counter-proof-hypothesis}. This hypothesis is a Hoare triple of the shape $\hoareOf{p}{\theInterference^*}{q \rightarrow \weakpastof{o}}$. Here, $\rightarrow$ is logical implication and $\weakpastof{o}$ is syntactic sugar for $o \lor \pastof{o}$. The variable $\theInterference$ stands for a set of \emph{interferences} that the overall proof infers as an auxiliary output of its derivation. The set $\theInterference$ consists of pairs $(\aguard,\acom)$ where $\acom$ is any atomic command in the program that affects the thread-local or shared program state, and $\aguard$ is the intermediate assertion preceding $\acom$ in the proof. In our example, the derived interferences all come from the \code{inc} method.
They comprise the set
\[\theInterference = \{ (c.\m{l} \mapsto v, \m{FAA}(c.\m{l}, 1)), (c.\m{r} \mapsto v, \m{FAA}(c.\m{r}, 1))\}\enspace.\]
Each interference can be viewed as a guarded command that first assumes $\aguard$ and then executes $\acom$. From these guarded commands, we build the new program $\theInterference^*$ which nondeterministically executes the interferences in $\theInterference$ an arbitrary number of times. That is, $\theInterference^*$ can be viewed as abstracting the overall program. Thus, the hypothesis $\hoareOf{p}{\theInterference^*}{q \rightarrow \weakpastof{o}}$ states that if execution starts from a state that satisfies $p$ and after any number of program steps it reaches a state that satisfies $q$, then $o$ must have been true in some intermediate state.
The temporal interpolation rule allows us to derive from such a hypothesis that if the program is in a state $\astate$ that satisfies $\pastof{p} \land q$, then also $\weakpastof{o}$ holds in $\astate$. We use temporal interpolation to derive Line~\ref{line:counter-proof-temp-interp} from Line~\ref{line:counter-proof-after-read-r} using the hypothesis on Line~\ref{line:counter-proof-hypothesis}.

The second part of the proof is then to establish the validity of the hypothesis. This part can also be carried out in the logic, using the same thread-modular and local reasoning principles. Effectively, the proof boils down to finding an invariant $\aninvpred$ that is implied by $p$, implies $q \rightarrow \weakpastof{o}$, and is preserved by each of the interferences. In our example, the following invariant does the trick:
\[\aninvpred \;\defeq\; \exists \nl''\,\nr''.\; c.\m{l} \mapsto \nl'' \,\mstar\, c.\m{r} \mapsto \nr'' \land \nl \leq \nl'' \land \nl'' + \nr'' < \nl + \nr' \lor \weakpastof{(\counter(c, \nl + \nr'))}\]
Intuitively, the first disjunct of the invariant holds up to the linearization point and afterwards, the second disjunct holds.
Note that $\aninvpred$ contains a past operator and is therefore a \emph{computation predicate}, not a state predicate.

\looseness=-1
We contrast the above proof with one based on prophecy reasoning in the style of~\cite{DBLP:journals/pacmpl/JungLPRTDJ20}. Without temporal interpolation, the proof has to witness the linearization point of a \code{read} thread $t$ at the exact moment where the relevant \code{inc} thread sets $n$ to $n_l+n_r'$ for the value $n_r'$ that will be later read by $t$.
However, $n_r'$ depends on how many other \code{inc} threads will still increment \code{r} between these two points.
One can introduce a prophecy variable for $t$ that predicts the number of such increments between the points when $t$ reads \code{l} and \code{r}.
To establish the linearizability argument, the prophecy variables and linearization obligations for the unboundedely many \code{read} threads need to be shared with all \code{inc} threads that may execute concurrently.
 This involves a complex \emph{helping protocol} construction that governs the transfer of resources between threads.
This construction is reflected in the proof in the form of a more complex invariant capturing the shared state of the data structure.


\section{Preliminaries}
\label{sec:Preliminaries}
We study concurrency libraries, i.e., a single program executed by a potentially unbounded number of threads.
We give a formal account of concurrency libraries and introduce a Hoare-style proof system for verifying them.
Our formalism is based on \cite{DBLP:journals/pacmpl/MeyerWW22}.

\subsection{Programming Model}

Along the lines of abstract separation logic~\cite{DBLP:conf/lics/CalcagnoOY07,DBLP:conf/popl/Dinsdale-YoungBGPY13,DBLP:journals/jfp/JungKJBBD18}, the actual sets of states and commands are a parameter to our development.

\smartparagraph{States and Computations}
We draw \emph{states} from a separation algebra, a partial commutative monoid $(\setstates, \statemult, \emp)$
 with a set of units $\emp$ so that
\begin{inparaenum}
	\item each state $\astate\in\setstates$ has a unit $\stateunit\in\emp$ with $\astate\statemult\stateunit=\astate$, and
	\item $\stateunit\statemult\stateunit'$ is undefined for any two distinct units $\stateunit,\stateunit'\in\emp$.
\end{inparaenum}
Definedness of $\astate\statemult\astate'$ is denoted $\astate\statemultdef\astate'$.

We work over a separation algebra with a certain structure.
We expect states from $(\setstates, \statemult, \emp)$ to be composed from a \emph{global} and a \emph{local} state.
The global resp. local states are again drawn from separation algebras $(\setshared, \sharedmult, \sharedemp)$ resp. $(\setlocal, \localmult, \localemp)$.
We require that
\begin{inparaenum}
	\item states in $\setstates$ are multiplied elementwise, $(\ashared_1, \alocal_1)\statemult(\ashared_2, \alocal_2)\defeq(\ashared_1\sharedmult\ashared_2, \alocal_1\localmult\alocal_2)$ provided the resulting state is in~$\setstates$ and undefined otherwise,
	\item states $\setstates$ can be decomposed, $(\ashared_1\sharedmult\ashared_2, \alocal_1\localmult\alocal_2) \in \setstates$ implies $(\ashared_1, \alocal_1) \in \setstates$, and
	\item units $\emp$ are also composed, $\emp\defeq \sharedemp \times \localemp$.
\end{inparaenum}
It is readily checked that this is a separation algebra.

The temporal interpolation principle we propose reasons over knowledge obtained at different points in time during a computation.
To formulate it, we lift the given separation algebra $(\setstates, \statemult, \emp)$ to a separation algebra over computations $(\setstates^+, \statemult, \setstates^*.\emp)$. 
A computation is a non-empty sequence of states.
We write $\astateseq . \astateseqp$ for the concatenation of two computations $\astateseq$ and $\astateseqp$.
The multiplication of two computations $\astateseq.\astate, \astateseqp.\astatep  \in \setstates^+$ is defined, $\astateseq.\astate\statemultdef\astateseqp.\astatep$, if $\astateseq=\astateseqp$ and $\astate\statemultdef\astatep$. 
In this case, the multiplication yields $\astateseq.\astate\statemult\astateseqp.\astatep\defeq\astateseq.(\astate\mstar\astatep)$. 
The two computations share the same history, which is preserved by the multiplication.
In the current state, we use the composition given by the separation algebra.
This construction works in general, not just for our product separation algebra.  

\begin{lemma}
	If $(\setstates, \mstar, \emp)$ is a separation algebra, so is $(\setstates^+, \mstar, \setstates^*.\emp)$.
\end{lemma}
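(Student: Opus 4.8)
The statement asks us to verify that the lifting of a separation algebra $(\setstates, \mstar, \emp)$ to computations $(\setstates^+, \mstar, \setstates^*.\emp)$ is again a separation algebra. The plan is to unfold the definition of a separation algebra — a partial commutative monoid with a set of units satisfying the two axioms listed in the preliminaries — and check each requirement pointwise, exploiting that the computation-level multiplication acts as the identity on the shared history prefix and delegates to the state-level multiplication on the last state. Throughout, write a computation as $\astateseq.\astate$ with $\astateseq \in \setstates^*$ the history and $\astate \in \setstates$ the current state; the defining equations are $\astateseq.\astate \statemultdef \astateseqp.\astatep$ iff $\astateseq = \astateseqp$ and $\astate \statemultdef \astatep$, in which case $\astateseq.\astate \mstar \astateseqp.\astatep = \astateseq.(\astate \mstar \astatep)$.

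First I would check that $\mstar$ on $\setstates^+$ is a partial commutative, associative operation. Commutativity is immediate: the definedness condition ($\astateseq = \astateseqp$ and $\astate \statemultdef \astatep$) is symmetric since $\statemultdef$ is symmetric on $\setstates$, and the result $\astateseq.(\astate \mstar \astatep) = \astateseq.(\astatep \mstar \astate)$ by commutativity of the underlying $\mstar$. For associativity one compares $(\astateseq_1.\astate_1 \mstar \astateseq_2.\astate_2) \mstar \astateseq_3.\astate_3$ with $\astateseq_1.\astate_1 \mstar (\astateseq_2.\astate_2 \mstar \astateseq_3.\astate_3)$: the left side is defined iff $\astateseq_1 = \astateseq_2 = \astateseq_3$ and the three current states compose, and likewise on the right, so the definedness conditions coincide and both sides equal $\astateseq_1.(\astate_1 \mstar \astate_2 \mstar \astate_3)$ by associativity of $\mstar$ on $\setstates$. (One should use the standard convention that the equation between the two partial expressions means: one side is defined iff the other is, and then they are equal.)

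Next I would exhibit the units. The claimed set of units is $\setstates^*.\emp = \setcond{\astateseq.\stateunit}{\astateseq \in \setstates^*,\ \stateunit \in \emp}$. Given any computation $\astateseq.\astate$, since $(\setstates,\mstar,\emp)$ is a separation algebra there is a unit $\stateunit \in \emp$ with $\astate \mstar \stateunit = \astate$; then $\astateseq.\stateunit \in \setstates^*.\emp$ and $\astateseq.\astate \mstar \astateseq.\stateunit$ is defined (same history $\astateseq$, and $\astate \statemultdef \stateunit$) and equals $\astateseq.(\astate \mstar \stateunit) = \astateseq.\astate$, giving axiom (i). For axiom (ii), take two distinct units $\astateseq_1.\stateunit_1$ and $\astateseq_2.\stateunit_2$ in $\setstates^*.\emp$; if their product is defined then $\astateseq_1 = \astateseq_2$ and $\stateunit_1 \statemultdef \stateunit_2$, but the latter forces $\stateunit_1 = \stateunit_2$ (by axiom (ii) for the original separation algebra), whence the two units were equal — contradiction. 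So distinct computation-units have undefined product.

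The argument is entirely routine, and I do not expect a genuine obstacle; the only point requiring a little care is the handling of partiality — making sure every equation $e_1 = e_2$ between partial expressions is read as ``$e_1$ defined $\iff$ $e_2$ defined, and then equal'' — and noting, as the text already remarks, that none of this uses the product structure $\setstates = \setshared \times \setlocal$, so the lemma is stated and proved for an arbitrary separation algebra.
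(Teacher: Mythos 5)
Your proof is correct and is exactly the routine verification the paper leaves implicit (it states the lemma without proof, treating it as readily checked). All four obligations — commutativity, associativity with the proper reading of definedness, existence of a unit $\astateseq.\stateunit$ for each $\astateseq.\astate$, and undefinedness of products of distinct units — are handled correctly by reduction to the corresponding axioms of the underlying algebra $(\setstates,\mstar,\emp)$.
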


\smartparagraph{Predicates}
For clarity of the exposition, we refrain from introducing an assertion language that needs to be interpreted but work on the semantic level.  
Given a separation algebra $(\Gamma, \mstar, \emp)$, a \emph{predicate} is a set of elements from $\Gamma$. 
The predicates form a Boolean algebra $(\powerset{\Gamma}, \cup, \cap, \subseteq, \overline{\phantom{\bullet}}, \emptyset, \Gamma)$ with disjunction, conjunction, implication, negation, false, and true. 
We moreover have the standard connectives \emph{separating conjunction} $\mstar$ and \emph{separating implication} $\sepimp$:
\begin{align*}
	\apred\statemult\apredp
		\:\defeq\:
		\setcond{
			\gamma_1\statemult\gamma_2
		\!}{\!
			\gamma_1\prall{\in}\apred
			\wedge
			\gamma_2\prall{\in}\apredp
			\wedge
			\gamma_1\statemultdef\gamma_2
		}
	\quad\text{and}\quad
	\apred\sepimp\apredp
		\:\defeq\:
		\setcond{
			\gamma
		\!}{\!
			\setcompact{\gamma} \mstar \apred
			\subseteq
			\apredp
		}
	\: .
\end{align*}
A predicate $\apred$ is \emph{intuitionistic}, if $\apred\mstar\Gamma\subseteq\apred$.

In our setting, we have the separation algebra of states $(\setstates, \mstar, \emp)$ and \emph{state predicates} $\apred, \apredp, \apredpp\subseteq\setstates$.
We moreover have the separation algebra of computations $(\setstates^+, \mstar, \emp^+)$ and \emph{computation predicates} $\acpred, \acpredp, \acpredpp\subseteq \setstates^+$. 
For our temporal interpolation principle developed in \Cref{Section:TemporalInterpolation}, it suffices to consider simple computation predicates that reason about single states of the computation.
These computation predicates are derived from state predicates. 

\begin{definition}
	\label{def:comp-preds}
	From state predicates $\apred \subseteq \setstates$ we construct
	\begin{inparaenum}[(i)]
			\item the \emph{now predicate} $\nowof{\apred}\,\defeq\,\setstates^*.\apred$ and
			\item the \emph{past predicate} $\pastof{\apred}\,\defeq\, \setstates^*.\apred.\setstates^+$ and
			\item the \emph{weak past predicate} $\weakpastof{\apred}\,\defeq\, \nowof{\apred}\cup\pastof{\apred}$. 
	\end{inparaenum}
\end{definition}

Now predicates lift state predicates to hold in the last (the current) state of a computation.
Past predicates lift state predicates to hold at some time in the past of the computation.
The precise moment when the state predicate was true is not known, which means framing is not relevant for past predicates, and lead us to define the multiplication of computations as an intersection in the past. 
Intuitionism carries over from state to computation predicates. 

\begin{lemma}\label{Lemma:PreciseIntuitionistic}
	If $\apred$ is intuitionistic, so is $\nowof{\apred}$. 
	Predicate $\pastof{\apred}$ is intuitionistic. 
\end{lemma}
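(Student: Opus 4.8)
If $\apred$ is intuitionistic, so is $\nowof{\apred}$; and $\pastof{\apred}$ is intuitionistic (for any state predicate $\apred$).

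The plan is to unfold the definition of \emph{intuitionistic} in each case: a computation predicate $\acpred$ is intuitionistic iff $\acpred \mstar \setstates^+ \subseteq \acpred$, so in both cases I take a computation $c$ in the predicate and a frame $d \in \setstates^+$ with $c \statemultdef d$, and check that $c \mstar d$ is still in the predicate. The only fact about the lifted separation algebra I need is its definition: writing each computation as an all-but-last prefix followed by a current state, $c = \astateseq.\astate$ and $d = \astateseqp.\astatep$, compatibility $c \statemultdef d$ forces $\astateseq = \astateseqp$ and $\astate \statemultdef \astatep$, and then $c \mstar d = \astateseq.(\astate \mstar \astatep)$. The crucial observation is that multiplication by a frame leaves the \emph{entire history} of $c$ untouched, together with its length, and only recomposes the current state.

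For $\nowof{\apred} = \setstates^*.\apred$: if $c \in \nowof{\apred}$, write $c = \astateseq.\astate$ with $\astate \in \apred$. Compatibility with $d$ gives $d = \astateseq.\astatep$ with $\astate \statemultdef \astatep$, so $c \mstar d = \astateseq.(\astate \mstar \astatep)$. Since $\apred$ is intuitionistic, $\astate \mstar \astatep \in \apred \mstar \setstates \subseteq \apred$, hence $c \mstar d \in \setstates^*.\apred = \nowof{\apred}$. This is exactly where the hypothesis that $\apred$ is intuitionistic is consumed. For $\pastof{\apred} = \setstates^*.\apred.\setstates^+$: if $c \in \pastof{\apred}$, write $c = \astateseq_1.\astate.\astateseq_2$ with $\astate \in \apred$ and $\astateseq_2$ non-empty, so the $\apred$-witnessing state lies strictly inside the prefix of $c$. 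For any compatible $d$, the product $c \mstar d$ has the same prefix as $c$ and the same length, so the witnessing state is still present and still followed by a non-empty suffix; hence $c \mstar d \in \setstates^*.\apred.\setstates^+ = \pastof{\apred}$. No assumption on $\apred$ is needed here, precisely because a past state is preserved verbatim rather than composed with a frame. (If one also wants $\weakpastof{\apred}$ intuitionistic for intuitionistic $\apred$, it follows since $\mstar$ distributes over $\cup$ and the union of intuitionistic predicates is intuitionistic.)

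I do not expect a genuine obstacle: the argument is a direct unfolding of definitions. The one point requiring care is the bookkeeping of histories — keeping clear that the lifted multiplication acts as the identity on everything but the current state. Recognizing this asymmetry is what explains why $\pastof{\apred}$ is intuitionistic unconditionally, whereas $\nowof{\apred}$ merely inherits intuitionism from $\apred$.
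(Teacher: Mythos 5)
Your proof is correct and is exactly the direct definition-unfolding argument the paper leaves implicit (it states this lemma without proof): the lifted multiplication recomposes only the current state while preserving the history verbatim, so $\nowof{\apred}$ inherits intuitionism from $\apred$ while $\pastof{\apred}$ is intuitionistic unconditionally. No gaps.
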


The predicates are compatible with the separation logic operators as follows.
\begin{lemma}
	\label{Lemma:SLOperators}
	\begin{inparaitem}[]
		\item $\nowOf{\apred\anop\apredp} = \nowof{\apred}\anop\nowof{\apredp}$ for all $\anop\in\setnd{\cap, \cup, \mstar,\sepimp}$,\,
		\item $\nowof{\overline{\apred}} {\:=\:} \overline{\nowof{\apred}}$,\,
		\item $\false {\:=\:} \nowof{\false}$,\,
		\item $\true {\:=\:} \nowof{\true}$,\,
		\item $\nowof{\apred}{\:\subseteq\:}\nowof{\apredp}$ iff $\apred{\:\subseteq\:}\apredp$,\,
		\item $\pastOf{\apred\cap\apredp} \subseteq \pastof{\apred}\cap\pastof{\apredp}$,\,
		\item $\pastOf{\apred\cup\apredp} = \pastof{\apred}\cup\pastof{\apredp}$, and
		\item $\pastof{\apred}\subseteq\pastof{\apredp}$ iff $\apred\subseteq\apredp$.
	\end{inparaitem}
\end{lemma}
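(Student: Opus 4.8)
The plan is to unfold \Cref{def:comp-preds} and push the set-theoretic operators through the syntactic form $\setstates^*.\apred$ (for now predicates) resp. $\setstates^*.\apred.\setstates^+$ (for past predicates). Recall that $\nowof{\apred} = \setstates^*.\apred = \setcond{\astateseq.\astate}{\astateseq \in \setstates^*, \astate \in \apred}$, i.e.\ a computation is in $\nowof{\apred}$ exactly when its last state lies in $\apred$; and $\pastof{\apred} = \setstates^*.\apred.\setstates^+$, i.e.\ a computation is in $\pastof{\apred}$ exactly when \emph{some} state of it lies in $\apred$ (with at least one further state after it). I would handle each bullet by a short membership argument on an arbitrary computation $\acomp = \astate_0 \dots \astate_k$.

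First I would dispatch the \emph{now} bullets. For $\anop \in \setnd{\cap, \cup}$: $\acomp \in \nowof{\apred \anop \apredp}$ iff $\astate_k \in \apred \anop \apredp$ iff ($\astate_k \in \apred$) $\anop$ ($\astate_k \in \apredp$) iff $\acomp \in \nowof{\apred} \anop \nowof{\apredp}$; here the key observation is that membership of $\acomp$ in any now predicate depends only on $\astate_k$, so all Boolean structure commutes. For negation, $\acomp \notin \nowof{\apred}$ iff $\astate_k \notin \apred$ iff $\astate_k \in \overline{\apred}$, giving $\overline{\nowof{\apred}} = \nowof{\overline{\apred}}$; then $\false = \nowof{\emptyset} = \nowof{\false}$ and $\true = \nowof{\setstates} = \nowof{\true}$ (using that every computation has a last state). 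For the $\mstar$ case I would use the definition of $\mstar$ on $\setstates^+$ from the excerpt: $\astateseq.\astate \statemult \astateseqp.\astatep$ is defined only when $\astateseq = \astateseqp$, so splitting $\acomp = \astateseq.\astate$ with $\astate \in \apred \statemult \apredp$ is the same as writing $\astate = \astate_1 \statemult \astate_2$ with $\astate_i$ in the respective predicate and re-attaching the shared history $\astateseq$ to both factors — which is exactly membership in $\nowof{\apred} \statemult \nowof{\apredp}$. The $\sepimp$ case follows from the $\mstar$ case and the Boolean cases by the standard adjunction, or directly: $\acomp \in \nowof{\apred \sepimp \apredp}$ iff $\astate_k \in \apred \sepimp \apredp$ iff $\setcompact{\astate_k} \mstar \apred \subseteq \apredp$; one checks this is equivalent to $\setcompact{\acomp} \mstar \nowof{\apred} \subseteq \nowof{\apredp}$, again because composability of computations forces equal histories and reduces everything to the last state. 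The monotonicity bullet $\nowof{\apred} \subseteq \nowof{\apredp}$ iff $\apred \subseteq \apredp$ is immediate in the forward direction by testing one-element computations $\astate \in \apred$, and in the backward direction because $\nowof{-}$ is monotone as a set operation.

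For the \emph{past} bullets: $\pastOf{\apred \cup \apredp} = \pastof{\apred} \cup \pastof{\apredp}$ holds because "some state lies in $\apred \cup \apredp$" is the same as "some state lies in $\apred$, or some state lies in $\apredp$" — existential quantification distributes over disjunction. The inclusion $\pastOf{\apred \cap \apredp} \subseteq \pastof{\apred} \cap \pastof{\apredp}$ holds because a single state witnessing $\apred \cap \apredp$ witnesses each conjunct separately; this is only an inclusion, not an equality, since the witnessing states for $\apred$ and for $\apredp$ may be different positions in the computation. Finally $\pastof{\apred} \subseteq \pastof{\apredp}$ iff $\apred \subseteq \apredp$: backward is monotonicity of the set expression $\setstates^*.\apred.\setstates^+$; forward, given $\astate \in \apred$, pick the two-state computation $\astate.\astate$, which lies in $\pastof{\apred} \subseteq \pastof{\apredp}$ and hence has a state — necessarily $\astate$ — in $\apredp$, so $\astate \in \apredp$.

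None of the individual steps is deep; the main thing to be careful about is the $\mstar$ and $\sepimp$ cases for now predicates, where one must invoke the specific, history-preserving definition of multiplication on $\setstates^+$ (two computations compose only if they have the same prefix) rather than treating $\setstates^+$ as an abstract separation algebra — otherwise the reduction "to the last state only" does not go through. I expect that to be the only place where a reader could stumble, so I would write that case out in slightly more detail than the purely Boolean ones.
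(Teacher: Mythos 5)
Your proof is correct; the paper itself omits a proof of this lemma, and your direct unfolding of \Cref{def:comp-preds} with membership arguments is exactly the intended elementary argument. Your attention to the $\mstar$ and $\sepimp$ cases — where the history-sharing definition of composition on $\setstates^+$ is what reduces everything to the last state — is the right place to put the emphasis.
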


\smartparagraph{Commands}
We assume a potentially infinite set of commands $(\setcom, \semCom{-})$. 
The actual set is a parameter and not relevant for our development. 
Commands $\acom\in\setcom$ transform a pre state into a post state which, due to non-determinism, need not be unique. 
This state transformer is given by the interpretation $\semCom{\acom}:\setstates\to\powerset{\setstates}$ of $\acom$. 
We lift the transformer to computations by appending the post state: $\csemOf{\acom}(\astateseq.\astate)\defeq\setcond{\astateseq.\astate.\astate'}{\astate'\in\semOf{\acom}(\astate)}$.
The transformer extends to predicates in the usual way~\cite{DBLP:books/ph/Dijkstra76}: \(
	\csemOf{\acom}(\acpred) \defeq {\textstyle \bigcup_{\astateseq\in\acpred}\;} \csemOf{\acom}(\astateseq)
\).
We expect to have a neutral command $\cskip\in\setcom$ that is interpreted as the identity. 
To model failing commands, we follow~\cite{DBLP:conf/lics/CalcagnoOY07} and assume their post state to be $\abort$, a dedicated top value in the lattice of predicates.

For the frame rule to be sound, we require the following locality:
\begin{align*}
	\label{cond-loccom}
	\forall \acpred,\acpredp,\acpredpp \subseteq \setstates^+.
	\quad
	\csemOf{\acom}(\acpred) \subseteq \acpredp
	\quad\text{implies}\quad
	\csemOf{\acom}(\acpred\mstar\acpredpp) \subseteq \acpredp\mstar\acpredpp
	\ .
	\tag{LocCom}
\end{align*}
Note that \eqref{cond-loccom} requires the computation predicate $\acpredpp$ to perform a stuttering step when being framed on the right-hand side of the latter inclusion. 
We call a computation predicate $\acpredpp\subseteq\setstates^+$ \emph{frameable}, if $\sigma.\astate\in\acpredpp$ implies $\sigma.\astate.\astate\in\acpredpp$ for all $\sigma.\astate\in\setstates^+$. 
Fortunately, all computation predicates that are constructed by union, intersection, and separating conjunction from now and past predicates are frameable.
Unless otherwise stated, we will assume that all predicates we encounter are frameable.

\smartparagraph{Concurrency libraries}
Concurrency libraries consist of an unbounded number of threads that all execute the same program $\astmt$.
Different functions would be modeled by an initial non-deterministic choice among the function bodies, which is supported in our while language together with sequential composition and repetition:
\begin{align*}
	\astmt\;\defebnf\;
		\acom
		\bnf
		\choiceof{\astmt}{\astmt}
		\bnf
		\seqof{\astmt}{\astmt}
		\bnf
		\loopof{\astmt}
		\ .
\end{align*}
A configuration $\aconfig=(\asharedseq, \apc)$ of the library comprises a global computation $\asharedseq\in\setshared^+$ and a program counter $\apc$.
The program counter maps thread identifiers $i\in\nat$ to pairs $\apc(i)=(\alocalseq, \astmt)$ containing thread-$i$-local information: a computation $\alocalseq\in\setlocal^+$ and a program fragment $\astmt$ the execution of which remains.
The transition rules among configurations are as expected: a step of thread $i$ changes the shared and the thread-$i$-local information according to the transformer of the executed command, and leaves all other threads unchanged.

Towards a Hoare-style proof system, we call a configuration $(\asharedseq, \apc)$ initial wrt. computation predicate $\acpred$ and program $\astmt$, if all threads $i$ with $\apc(i)=(\alocalseq,\astmt')$ satisfy $(\asharedseq,\alocalseq)\in\acpred$ and $\astmt'=\astmt$.
Similarly, $(\asharedseq, \apc)$ is accepting wrt. $\acpredp$, if all terminated threads with $\apc(i)=(\alocalseq,\cskip)$ satisfy $(\asharedseq,\alocalseq)\in\acpredp$.
Reachability is defined as usual.
We refer to the initial, accepting, and reachable configurations by $\initset{\acpred}{\astmt}$, $\acceptset{\acpredp}$, and $\reachset{\aconfig}$, respectively.

The correctness condition we would like to prove for concurrency libraries is whether all configurations reachable from $\acpred$-$\astmt$-initial configurations are $\acpredp$-accepting, $\reachset{\initset{\acpred}{\astmt}}\subseteq\acceptset{\acpredp}$.
In this case, we say that a Hoare triple of the form $\hoareOf{\acpred}{\astmt}{\acpredp}$ is \emph{valid}, denoted by $\subModels\hoareOf{\acpred}{\astmt}{\acpredp}$.

\subsection{Program Logic}

We use a proof system to establish the validity of Hoare triples, \Cref{Figure:ProgramLogicTI} below (ignore the \makeColorLogic{marked} parts for now).
The proof system is thread-modular~\cite{DBLP:conf/cav/BerdineLMRS08,DBLP:journals/toplas/Jones83} in nature, thus verifies a single thread in isolation.
To account for the actions of other threads which may affect the isolated thread, we ensure \emph{interference freedom}~\cite{DBLP:journals/acta/OwickiG76} of the overall proof.

Technically, the proof system establishes judgements $\thePredicates, \theInterference \semCalc \hoareOf{\acpred}{\astmt}{\acpredp}$ with the following components:
\begin{inparaenum}
	\item a Hoare triple $\hoareOf{\acpred}{\astmt}{\acpredp}$ for the isolated thread,
	\item a set $\thePredicates$ of intermediary assertions used during the proof of the Hoare triple, and
	\item a set $\theInterference$ of interferences that the isolated thread is subject to.
\end{inparaenum}
Recording the intermediary assertions allows us to separate the interference freedom check from the derivation of the Hoare triple~\cite[Section 7.3]{DBLP:conf/popl/Dinsdale-YoungBGPY13}.
We denote the interference freedom of $\thePredicates$ under $\theInterference$ by $\isInterferenceFreeOf[\theInterference]{\thePredicates}$.
The resulting proof system is sound.

\begin{theorem}[\citet{DBLP:journals/pacmpl/MeyerWW22}]
	\label{Theorem:SoundnessComput}
	$\thePredicates, \theInterference\semCalc\hoareOf{\acpred}{\astmt}{\acpredp}$
	and $\,\isInterferenceFreeOf[\theInterference]{\thePredicates}$
	and $\acpred\in\thePredicates$
	imply $\,\subModels\hoareOf{\acpred}{\astmt}{\acpredp}$.
\end{theorem}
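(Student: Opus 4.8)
The plan is a standard Owicki--Gries / rely--guarantee soundness argument, adapted to the computation-based model of \Cref{sec:Preliminaries}. Throughout, I would treat the recorded assertion set $\thePredicates$ and interference set $\theInterference$ as inducing, for each program fragment $\astmt'$ that occurs as a control point in $\astmt$, an ``annotation'' $\acpredpp \in \thePredicates$ attached to the point just before $\astmt'$ --- this is exactly the data that the rules of \Cref{Figure:ProgramLogicTI} maintain. Soundness then reduces to establishing a global invariant on configurations: ``every live thread $i$ with $\apc(i)=(\alocalseq,\astmt')$ satisfies $(\asharedseq,\alocalseq)\in\acpredpp$ for the annotation $\acpredpp$ of $\astmt'$''. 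I would show this invariant holds at every $\acpred$-$\astmt$-initial configuration and is preserved by every transition; the conclusion $\subModels\hoareOf{\acpred}{\astmt}{\acpredp}$ follows by induction on execution length.

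First I would prove a \emph{local progress} lemma by induction on the derivation of $\thePredicates, \theInterference\semCalc\hoareOf{\acpred}{\astmt}{\acpredp}$: if a thread sits at $\astmt'$ with $(\asharedseq,\alocalseq)$ in the annotation of $\astmt'$, then either $\astmt'=\cskip$ and $(\asharedseq,\alocalseq)\in\acpredp$, or any step of this thread --- executing some atomic command $\acom$, hence governed by $\csemOf{\acom}$ on the present state and extending the shared and local computations by a fresh state --- lands in the annotation of the successor fragment. The base case is the command rule (soundness of $\semCom{-}$), and the inductive cases cover sequencing, non-deterministic choice, the loop rule, and the frame rule; the last uses locality \eqref{cond-loccom} together with frameability of the annotations, which the paper has arranged for all predicates built from now and past predicates. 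The hypothesis $\acpred\in\thePredicates$ is what anchors this: $\acpred$ is itself a legitimate annotation of $\astmt$, so the global invariant holds at every initial configuration.

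Next I would handle interference. When some \emph{other} thread $j$ takes a step, it runs an atomic command whose preceding annotation is, by the way $\theInterference$ is populated alongside the derivation, the guard $\aguard$ of a pair $(\aguard,\acom)\in\theInterference$; this rewrites the shared computation and $j$'s local computation and appends a stuttering step to every passive thread's local computation. For a passive thread $i$ whose current state lies in $\acpredpp\in\thePredicates$, interference freedom $\isInterferenceFreeOf[\theInterference]{\thePredicates}$ says precisely that running the guarded command $(\aguard,\acom)$ keeps the state in $\acpredpp$; combined with \eqref{cond-loccom} (to separate $j$'s footprint from the part of the state constrained for $i$) and frameability (to absorb the appended stuttering step), we conclude that $i$'s annotation still describes its extended computation. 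Together with the local progress lemma, this shows the global invariant is preserved by every transition, hence holds throughout $\reachset{\initset{\acpred}{\astmt}}$. Finally, in any accepting-candidate configuration each terminated thread is at $\cskip$, so the $\astmt'=\cskip$ case of local progress yields $(\asharedseq,\alocalseq)\in\acpredp$; thus every reachable configuration is in $\acceptset{\acpredp}$, which is $\subModels\hoareOf{\acpred}{\astmt}{\acpredp}$.

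The main obstacle is reconciling thread-modular local reasoning with interference in the \emph{computation}-based model: unlike the usual state-based setting, an interference step genuinely lengthens every thread's computation sequence, and the intermediate assertions are computation predicates that may mention $\past$. The argument only goes through because \eqref{cond-loccom} forces the frame to perform a stuttering step and because all assertions in play are frameable --- these are exactly the two facts needed to show that a recorded annotation, once true of a computation, survives an interference step that appends a new state. Since the theorem is quoted from \cite{preOOPSLA}, I would cite that development for the detailed rule-by-rule bookkeeping and present the argument at the level of this interface, making explicit how $\isInterferenceFreeOf[\theInterference]{\thePredicates}$ and $\acpred\in\thePredicates$ feed into the two parts (acting thread / passive thread) of the invariance proof.
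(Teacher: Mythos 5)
The paper does not prove this theorem itself: it is imported verbatim from \citet{preOOPSLA}, and no proof appears in the body or appendix, so there is no ``paper proof'' to compare against line by line. Your sketch is the standard Owicki--Gries-style argument that such a statement requires --- a global configuration invariant stating that each thread's combined shared/local computation satisfies the recorded annotation at its control point, anchored at initial configurations by $\acpred\in\thePredicates$, preserved under a thread's own steps by induction on the derivation, and preserved under other threads' steps by $\isInterferenceFreeOf[\theInterference]{\thePredicates}$ --- and this is almost certainly the shape of the proof in the cited work. You also correctly flag the one genuinely delicate point of the computation-based model: the operational semantics appends a stuttering state to every passive thread's local computation, whereas the interference check $\csemOf{(\acpredpp,\acom)}$ leaves the passive local computation untouched, and reconciling the two is exactly what frameability of the predicates in $\thePredicates$ (together with \eqref{cond-loccom}) is for. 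The proposal remains a plan rather than a complete proof --- in particular the ``annotation per control point'' reading of $\thePredicates$ needs the usual bookkeeping to accommodate \ruleref{consequence} and the fact that a derivation is a tree rather than a labelling --- but I see no gap in the approach.
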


In our development, we will use the set of computations $\csemof{\astmt}_{\theInterference}(\acpred)$ defined by extending each computation in $\acpred$ by every sequence of states encountered when executing program $\astmt$ to completion while admitting interferences from $\theInterference$.
The formal definition is the straightforward lift of $\csemof{-}$ to sequences of commands and interferences. 
A consequence of the soundness result is the following.

\begin{lemma}\label{Lemma:SoundInclusion}
	If there is a set $\thePredicates$ with $\acpred\in\thePredicates$, $\isInterferenceFreeOf[\theInterference]{\thePredicates}$, and $\thePredicates, \theInterference\semCalc\hoareOf{\acpred}{\astmt}{\acpredp}$, then $\csemof{\astmt}_{\theInterference}(\acpred)\subseteq\acpredp$. 
\end{lemma}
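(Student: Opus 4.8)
The plan is to prove the inclusion directly, by an induction that refines the argument behind \Cref{Theorem:SoundnessComput} so that it speaks about the single-thread behaviour closed under $\theInterference$, rather than about per-thread projections of reachable library configurations. By the definition of $\csemof{-}_{\theInterference}$, every $\acomp \in \csemof{\astmt}_{\theInterference}(\acpred)$ is witnessed by a finite trace that begins at some $\acomp_0 \in \acpred$ and extends it by steps of two kinds: \emph{(a)} executing the next atomic command of $\astmt$, and \emph{(b)} applying an interference $(\aguard,\acom) \in \theInterference$ to a computation whose current state lies in $\aguard$; moreover the trace drives $\astmt$ to completion. I would induct on the length of this trace, maintaining the invariant that the computation reached after any prefix lies in the $\thePredicates$-assertion that a fixed derivation of $\thePredicates,\theInterference\semCalc\hoareOf{\acpred}{\astmt}{\acpredp}$ attaches to the program point currently reached in $\astmt$.

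The base case is $\acpred$, which is in $\thePredicates$ by hypothesis and is the precondition of $\astmt$. A type-\emph{(a)} step is covered by the Hoare rules of the proof system (\Cref{Figure:ProgramLogicTI}): the command transformer carries the assertion before $\acom$ to the assertion after it, and by construction both are recorded in $\thePredicates$. A type-\emph{(b)} step is covered by $\isInterferenceFreeOf[\theInterference]{\thePredicates}$: the current $\thePredicates$-assertion is stable under every interference in $\theInterference$, so applying $(\aguard,\acom)$ keeps the computation inside it. When the trace is exhausted the reached program point is the end of $\astmt$, whose attached assertion is $\acpredp$; hence $\acomp \in \acpredp$, and since $\acomp$ was arbitrary, $\csemof{\astmt}_{\theInterference}(\acpred) \subseteq \acpredp$. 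One could instead keep \Cref{Theorem:SoundnessComput} as a black box and realise each $\acomp$ as the computation observed by a distinguished terminated thread in a reachable configuration, with auxiliary threads that also run $\astmt$ producing the interference steps; that route is more delicate, since one must account for the extra history states added by those auxiliary threads and by the stuttering of inactive threads, which is where structural properties of now- and past-predicates (\Cref{Lemma:SLOperators}) are needed.

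The main obstacle is the bookkeeping that ties together the three ingredients — the trace semantics $\csemof{-}_{\theInterference}$, the per-program-point assertions produced by a derivation of $\thePredicates,\theInterference\semCalc\hoareOf{\acpred}{\astmt}{\acpredp}$, and the stability notion behind $\isInterferenceFreeOf[\theInterference]{\thePredicates}$ — across the full while-language structure of $\astmt$ (sequencing, choice, iteration): the invariant has to be phrased so that it is preserved by each proof rule and matches the way $\csemof{-}_{\theInterference}$ unfolds compound statements and loops. Almost all of this work is shared with the proof of \Cref{Theorem:SoundnessComput}; the genuinely new content is merely the observation that the single-thread, $\theInterference$-closed execution is the right object to induct over and that its terminal computations land in $\acpredp$.
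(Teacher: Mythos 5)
The paper offers no explicit proof of this lemma: it is presented as ``a consequence of the soundness result'' (\Cref{Theorem:SoundnessComput}), i.e., the intended justification is that the same induction that establishes $\subModels\hoareOf{\acpred}{\astmt}{\acpredp}$ for the library semantics also yields the trace inclusion for the single-thread, $\theInterference$-closed semantics. Your proposal instead carries out that underlying induction directly on the witnessing trace of $\acomp\in\csemof{\astmt}_{\theInterference}(\acpred)$, with the invariant that after every prefix the computation lies in the assertion attached to the current program point. This is a legitimate and arguably more honest route: as you note yourself, reducing the lemma to \Cref{Theorem:SoundnessComput} as a black box is delicate, because an interference step $(\aguard,\acom)\in\theInterference$ conjures up a local computation satisfying $\aguard$ rather than being an actual step of a concurrently running copy of $\astmt$, so the realisation of $\csemof{\astmt}_{\theInterference}(\acpred)$ inside $\reachset{\initset{\acpred}{\astmt}}$ is not immediate. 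What the paper's route buys is brevity (and reuse of the soundness proof of \cite{preOOPSLA}); what yours buys is a self-contained argument that works for an arbitrary $\theInterference$, which is exactly the generality needed later when the lemma is applied to $\stmtof{\theInterference}$ and $\enrichof{\astmt}{\theInterference}$ in \Cref{Lemma:ATISound,Lemma:GeneralizedTISound}.

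Two points in your write-up deserve more care. First, the invariant only interacts correctly with $\isInterferenceFreeOf[\theInterference]{\thePredicates}$ if every per-program-point assertion you track is actually a \emph{member} of $\thePredicates$; this holds because \hyperlink{rule:com}{\textsc{(com)}}, \hyperlink{rule:seq}{\textsc{(seq)}} and \hyperlink{rule:loop}{\textsc{(loop)}} record exactly the post-, intermediate and invariant assertions, and the overall precondition is assumed to be in $\thePredicates$ — but \hyperlink{rule:consequence}{\textsc{(consequence)}} weakens pre- and postconditions \emph{without} recording the weakened versions, so the induction must be phrased over the derivation tree and track the recorded (stronger) assertions, applying the weakenings only at the end; \hyperlink{rule:frame}{\textsc{(frame)}} needs a similar commutation argument. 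Second, the relevant proof system here is $\semCalc$ from \Cref{sec:Preliminaries} (spelled out in the appendix), not the extended system of \Cref{Figure:ProgramLogicTI}; the rules are analogous, but the citation should be corrected. Neither point undermines the argument.
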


\smartparagraph{Interference Freedom}
The isolated thread is influenced by the actions of other, interfering threads.
We capture those actions as \emph{interferences} $(\acpredpp, \acom)$, meaning that $\acom$ may be executed by an interfering thread from a configuration satisfying $\acpredpp$.
Observe that the global portion of $\acpredpp$ imposes restrictions on when the interference may happen while the local portion of $\acpredpp$ supplies the local computation the interfering thread needs for its execution. 
From the point of view of the isolated thread with computation $(\asharedseq, \alocalseq)$, only the global portion $\asharedseq$ changes, formally:
\begin{align*}
	\csemOf{(\acpredpp, \acom)}(\asharedseq, \alocalseq)
		~\defeq~
		\setcond{(\asharedseq', \alocalseq)}{
			\exists \alocalseq_1,\alocalseq_2.~~
			(\asharedseq, \alocalseq_1)\in \acpredpp
			~\wedge~
			(\asharedseq', \alocalseq_2)\in \csemof{\acom}(\asharedseq, \alocalseq_1)
		}
	\ .
\end{align*}
The \emph{interference freedom check} wrt. a set $\theInterference$ of interferences then proceeds as follows.
It takes a computation predicate $\acpred$ and tests whether $\csemOf{(\acpredpp, \acom)}(\acpred)\subseteq \acpred$ for all $(\acpredpp, \acom)\in\theInterference$.
If this is the case, the interference does not invalidate $\acpred$ and the predicate is interference-free.
The interference freedom check extends naturally to the set of predicates $\thePredicates$.
We write $\theInterference\mstar\acpredp$ for the set of interferences $(\acpred\mstar\acpredp, \acom)$ with $(\acpred, \acom)\in\theInterference$.
We also use the notation for sets of predicates $\thePredicates$ and write $\thePredicates\mstar\acpredp$ for the set of predicates $\acpred\mstar\acpredp$ with $\acpred\in\thePredicates$.
We also remark that past information is always interference-free, because interferences append states and this does not change the past of the computation.


\section{Temporal Interpolation}
\label{Section:TemporalInterpolation}

Temporal interpolation is a reasoning principle to derive information about intermediary states that have not been observed in the program proof.
Coming back to the example of a distributed counter, if the counter value has been $n_1$ in the past and is now $n_2> n_1$, then we wish to derive that there has been a moment in which the counter has been $n$ with $n_2\geq n\geq n_1$.
Temporal interpolation will allow us to do so, although an assertion with counter value $n$ is not interference-free and hence will not be observable in the program proof.
We can actually guarantee that the moment in which the counter was $n$ is in between the past and the current state, but defer the timing aspect for now.
Another example of temporal interpolation is reachability in concurrent data structures, 
as studied by the hindsight principle which inspired this work~\cite{DBLP:conf/podc/OHearnRVYY10,DBLP:conf/wdag/FeldmanE0RS18,DBLP:journals/pacmpl/FeldmanKE0NRS20}.
If a node $n_1$ has been reachable in the past, and the node now points to $n_2$, then there has been a moment in which the node was reachable and pointed to $n_2$. 
Also this moment will not be interference-free and hence cannot be recorded in the program proof (the set of predicates $\thePredicates$).

To derive the intermediary information, temporal interpolation proves inclusions of the form
\begin{align}
	\weakpastof{\apred} \,\cap\, \nowof{\apredp} \quad\subseteq\quad \weakpastof{\apredpp}
	\ .
	\label{Equation:TINaive}
\end{align}
The inclusion indeed formulates an interpolation property for the set of computations:
if state predicate $\apred$ has been true in the past of the computation and we now have $\apredp$, then there has been a moment in which $\apredpp$ was true, and typically $\apredpp$ will be $\apred\cap\apredp$. 
Unfortunately, the inclusion will rarely hold in this generality.
The first problem is that the set of computations leading from $\apred$ to $\apredp$ is too liberal.
Rather than considering all sequences of states, we should only consider the ones generated by the program at hand.
The second problem is that even if we restrict the computations, we need to prove the inclusion.
Our technical contribution is to embed the above inclusion into a proof system in which it can justifiably be used.

To restrict the set of computations leading from $\apred$ to $\apredp$, we introduce a new predicate that reflects the influence of the program on the course of the computation.
The observation behind the definition of the predicate is that the set of interferences $\theInterference$ which we collect during the proof gives us precise information about the program behavior.
An interference $(\acpred, \acom)\in\theInterference$ not only says that a command $\acom$ is executable, it also records in predicate $\acpred$ the conditions under which the command will be executed.
Notably, these conditions refer to the shared as well as the local state, meaning the interference captures the thread-local behavior as well.
The new predicate thus employs the set of interferences as an abstraction of the overall program behavior.

To make the idea formal, we transform interferences into programs as follows: \[
	\commandof{\acpred, \acom} \,\defeq\,
		\mymathtt{atomic}\{\,\mymathtt{assume}(\acpred\mstar\true);\, \acom\,\}
	\quad~\text{and}~\quad
	\stmtof{\theInterference} \,\defeq\, 
		\bigl(
			{\textstyle \mbox{\large\ensuremath{\sum}}}_{(\acpred, \acom)\in\theInterference} ~
			\commandof{\acpred, \acom}\ 
		\!\bigr)^*
	\ .
\]
We turn an interference $(\acpred, \acom)$ into an atomic block the execution of which is guarded by an assumption.
Recall that atomic blocks are not part of our programming constructs, but the above expression will be treated as a single command with the expected semantics.
The reason we need a single command is that $\commandof{\acpred, \acom}$ should abstract command $\acom$ in the program, and that command leads to a single state change.
Also note that $\acpred$ is a predicate from the assertion language that we deliberately use within an assumption.
To be closer to programming practice, one can weaken $\acpred$ to information about the current state that can be checked over the program variables. 
We use $\acpred\mstar\true$ rather than $\acpred$ to make sure the command satisfies~\eqref{cond-loccom}.
We also call $\commandof{\acpred, \acom}$ a \emph{self-interference}.
Function~$\stmtof{-}$ lifts the construction to a set of interferences.
The resulting program repeatedly executes all self-interferences in random order. 

The new predicate $\governeddefnb$ describes the set of \emph{$\theInterference$-governed} computations, the computations in which every state change is due to an interference or a self-interference:
\[
	\governeddefnb \quad\defeq\quad \csemof{\stmtof{\theInterference}}_{\theInterference}(\Sigma)
	\ .
\]
We view here $\setstates$ as a set of computations that consist of a single state.
With this definition, we intend to replace Inclusion~\eqref{Equation:TINaive} by
\begin{align}
	\weakpastof{\apred} \,\cap\, \nowof{\apredp} \,\cap\, \governeddefnb \quad\subseteq\quad \weakpastof{\apredpp}
	\ .
	\label{Equation:TIRelative}
\end{align}
\looseness=-1
This inclusion may or may not hold depending on the set of interferences. 
To prove the inclusion for the set of interferences at hand, we define Hoare triples that take a set of interferences as a parameter. 
We justify the need for this parameterization in moment. 
A so-called \emph{hypothesis} $\ahyp$ has the form \[
	\theInterferenceVar\semcalc\hoareOf{\acpred}{\stmtof{\theInterferenceVar}}{\acpredp}
	\ .
\]
Variable $\theInterferenceVar$ will be evaluated by a set of interferences.
The hypothesis is said to \emph{hold for $\theInterference$}, denoted by $\hypholdsof{\theInterference}{\ahyp}$, if we can prove the Hoare triple with $\theInterferenceVar$ replaced by $\theInterference$:
there is a set of predicates $\thePredicates$ with $\acpred\subseteq\acpred'\in \thePredicates$ so that $\thePredicates, \theInterference\semcalc\hoareOf{\acpred'}{\stmtof{\theInterference}}{\acpredp}$ is derivable and $\isInterferenceFreeOf{\thePredicates}$. 
We elaborate on the weakening of $\acpred$ to $\acpred'$ further below.
For a set of hypothesis~$\theHyp$, we write $\hypholdsof{\theInterference}{\theHyp}$ to mean $\hypholdsof{\theInterference}{\ahyp}$ for all $\ahyp\in\theHyp$.

The hypotheses we are interested in have the shape \[
	\theInterferenceVar\semcalc\hoareOf{\nowof{\apred}}{\stmtof{\theInterferenceVar}}{\nowof{\apredp}\rightarrow\weakpastof{\apredpp}}
	\ .
\]
Since the shape is fixed, we write the hypothesis as $\weakhypof{\apred}{\apredp}{\apredpp}$.
It states that from a computation ending in $\apred$, every execution of the interferences and the self-interferences that leads to a state from $\nowof{\apredp}$ satisfies $\weakpastof{\apredpp}$.
This is precisely the information that has been missing to justify Inclusion~\eqref{Equation:TIRelative}.

\begin{lemma}
	\label{Lemma:ATISound}
	If $\,\hypholdsof{\theInterference}{\weakhypof{\apred}{\apredp}{\apredpp}}$,
	then $\weakpastof{\apred}\cap \nowof{\apredp}\cap \governeddef\subseteq\weakpastof{\apredpp}$.
\end{lemma}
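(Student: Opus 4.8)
The plan is to peel the hypothesis down to a purely semantic inclusion via \Cref{Lemma:SoundInclusion}, and then exploit that $\governeddefnb$ is closed under taking contiguous suffixes. First, unfold $\hypholdsof{\theInterference}{\weakhypof{\apred}{\apredp}{\apredpp}}$: by definition it supplies a set of predicates $\thePredicates$ together with a predicate $\acpred'\in\thePredicates$ satisfying $\nowof{\apred}\subseteq\acpred'$, for which $\isInterferenceFreeOf{\thePredicates}$ holds and $\thePredicates,\theInterference\semcalc\hoareOf{\acpred'}{\stmtof{\theInterference}}{\nowof{\apredp}\rightarrow\weakpastof{\apredpp}}$ is derivable. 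Feeding this data into \Cref{Lemma:SoundInclusion} gives $\csemof{\stmtof{\theInterference}}_{\theInterference}(\acpred')\subseteq\nowof{\apredp}\rightarrow\weakpastof{\apredpp}$, and since the operator $\csemof{-}_{\theInterference}$ is monotone and $\nowof{\apred}\subseteq\acpred'$, we obtain the inclusion $(\star)$:
\[
  \csemof{\stmtof{\theInterference}}_{\theInterference}(\nowof{\apred}) \;\subseteq\; \nowof{\apredp}\rightarrow\weakpastof{\apredpp}\,.
\]

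Second, I would establish the auxiliary claim that $\governeddefnb$ is suffix-closed in the following strong sense: if $\astateseq\in\governeddefnb$ and $\astateseqp$ is a contiguous suffix of $\astateseq$, then $\astateseqp\in\csemof{\stmtof{\theInterference}}_{\theInterference}(\setnd{\firstof{\astateseqp}})$. This holds because $\firstof{\astateseqp}\in\setstates$, viewed as a one-state computation, and because $\stmtof{\theInterference}$ is a Kleene-starred loop over the atomic (self-)interference commands $\commandof{\acpred,\acom}$: the fragment of the $\theInterference$-admitting execution that produced $\astateseq$ and that comes after the state $\firstof{\astateseqp}$ is itself a legal $\theInterference$-admitting execution of $\stmtof{\theInterference}$ started in $\firstof{\astateseqp}$ (the loop is simply re-entered, or exited at once). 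In particular, if moreover $\firstof{\astateseqp}\in\apred$, then the one-state computation $\firstof{\astateseqp}$ lies in $\nowof{\apred}=\setstates^{*}.\apred$, so $\astateseqp\in\csemof{\stmtof{\theInterference}}_{\theInterference}(\nowof{\apred})$ by monotonicity of $\csemof{-}_{\theInterference}$.

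Finally, I would combine the two. Let $\astateseq\in\weakpastof{\apred}\cap\nowof{\apredp}\cap\governeddefnb$. From $\astateseq\in\weakpastof{\apred}=\nowof{\apred}\cup\pastof{\apred}$ some state of $\astateseq$ satisfies $\apred$; let $\astateseqp$ be the contiguous suffix of $\astateseq$ starting at that state, so $\firstof{\astateseqp}\in\apred$ (if $\apred$ holds in the last state, $\astateseqp$ is that single state). By the auxiliary claim, $\astateseqp\in\csemof{\stmtof{\theInterference}}_{\theInterference}(\nowof{\apred})$, hence $(\star)$ yields $\astateseqp\in\nowof{\apredp}\rightarrow\weakpastof{\apredpp}$. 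Since $\lastof{\astateseqp}=\lastof{\astateseq}\in\apredp$ (because $\astateseq\in\nowof{\apredp}$), we have $\astateseqp\in\nowof{\apredp}$, and therefore $\astateseqp\in\weakpastof{\apredpp}$; that is, some state of $\astateseqp$ satisfies $\apredpp$. As $\astateseqp$ is a suffix of $\astateseq$, this state also occurs in $\astateseq$, whence $\astateseq\in\weakpastof{\apredpp}$, which is what was to be shown.

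I expect the main obstacle to be the auxiliary suffix-closure claim: it has to be read off carefully from the precise definition of $\csemof{\stmtof{\theInterference}}_{\theInterference}$, using that a tail of an execution of a Kleene star interleaved with interferences is again such an execution. The $\weakpastof{-}$ (rather than $\pastof{-}$) flavour of the hypothesis is exactly what makes the degenerate case — $\apred$ and $\apredp$ witnessed by the same, last state — go through smoothly, since then $\astateseqp$ is a one-state governed computation and $\weakpastof{\apredpp}$ collapses to $\apredpp$ holding in the present.
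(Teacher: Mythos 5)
Your overall route is the paper's route: unfold $\hypholdsof{\theInterference}{\weakhypof{\apred}{\apredp}{\apredpp}}$, push it through \Cref{Lemma:SoundInclusion} to get the semantic inclusion $(\star)$, and then argue that the given governed computation falls under the left-hand side of $(\star)$. The gap is in your auxiliary suffix-closure claim. You assert that the contiguous suffix $\astateseqp$ of $\astateseq$ starting at the $\apred$-state lies in $\csemof{\stmtof{\theInterference}}_{\theInterference}(\setnd{\firstof{\astateseqp}})$. But the steps of a governed computation are justified against guards that are \emph{computation} predicates: an interference $(\acpred,\acom)\in\theInterference$ fires only from a computation satisfying $\acpred$, and a self-interference $\commandof{\acpred,\acom}$ assumes $\acpred\mstar\true$. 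These guards are in general history-dependent — the intermediary assertions collected as interferences routinely contain past operators (that is the whole point of the logic). A step of $\astateseq$ that was enabled because $\pastof{\apredppp}$ held of the \emph{full} history need not be enabled once you discard the prefix, so $\astateseqp$ need not be a legal $\theInterference$-admitting execution from its own first state. Residuating the Kleene star at the level of \emph{commands} is fine; truncating the \emph{state history} against which the guards are evaluated is not.

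The fix is to not truncate at all, which is what the paper's argument (spelled out for the generalized \Cref{Lemma:GeneralizedTISound}) does. Write $\astateseq=\astateseq_1.\astate.\astateseq_2$ with $\astate\in\apred$. Since $\nowof{\apred}=\setstates^*.\apred$ tolerates an arbitrary prefix, the whole prefix $\astateseq_1.\astate$ is already an element of $\nowof{\apred}$, and $\astateseq\in\governeddefnb$ gives directly that $\astateseq\in\csemof{\stmtof{\theInterference}}_{\theInterference}(\astateseq_1.\astate)$: the state changes after $\astate$ are an execution of (a residual of) $\stmtof{\theInterference}$ admitting interferences, with all guards evaluated against the same un-truncated histories as before. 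Hence $\astateseq$ itself lies in $\csemof{\stmtof{\theInterference}}_{\theInterference}(\nowof{\apred})$, $(\star)$ applies to $\astateseq$ directly, and since $\astateseq\in\nowof{\apredp}$ you conclude $\astateseq\in\weakpastof{\apredpp}$ — no transfer from a suffix back to $\astateseq$ is needed. Your handling of the degenerate case and of the final lifting step is fine and carries over unchanged.
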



\begin{figure*}[t]
	\newcommand{\semcalcticol}{\semcalcti[\setColorLogic]}
	\def\MathparLineskip{\lineskip=2mm}
	\small
	\begin{mathpar}
		\inferH{com-ti}{
			\csemof{\acom}(\acpred)\subseteq \acpredp 
		}{
			\setcompact{\acpredp}, \set{(\acpred, \acom)}, \makeColorLogic{\emptyset} \semcalcticol\hoareOf{\acpred}{\acom}{\acpredp}
		}
		\and
		\makeColorLogic{
			\inferH{temporal-interpolation}{
				\apred, \apredp\text{ intuitionistic}
				\quad~~
				\theInterference=\set{(\acpred, \cskip)}
				\quad~~
				\theHyp = \set{\weakhypof{\apred}{\apredp}{\apredpp}}
				}{
				\set{\acpred\cap\pastof{\apredpp}}, \theInterference, \theHyp \semcalcticol\hoareOf{\acpred\cap\weakpastof{\apred}\cap\nowof{\apredp}}{\cskip}{\acpred\cap\pastof{\apredpp}}
			}
		}
		\and
		\makeColorLogic{
		\inferH{temporal-interpolation-unordered}{
			\apred, \apredp\text{ intuitionistic}
			\\\\
			\theInterference=\set{(\acpred, \cskip)}
			\\
			\theHyp = \set{\weakhypof{\apred}{\apredp}{\apredpp}, \weakhypof{\apredp}{\apred}{\apredpp}}
			}{
			\set{\acpred\cap\pastof{\apredpp}}, \theInterference, \theHyp\semcalcticol\hoareOf{\acpred\cap\weakpastof{\apred}\cap\weakpastof{\apredp}}{\cskip}{\acpred\cap\pastof{\apredpp}}
		}
		}
		\and
		\inferH{consequence-ti}{
		 \thePredicates', \theInterference', \makeColorLogic{\theHyp'} \semcalcticol\hoareOf{\acpred'}{\astmt}{\acpredp'}
			\quad~~\:
			\acpred\subseteq \acpred'\!\!
			\\\\
			\thePredicates' \subseteq \thePredicates
			\quad~
			\theInterference'\subseteq \theInterference
			\quad~
			\makeColorLogic{\theHyp'\subseteq \theHyp}
			\quad~\:
			\acpredp'\!\subseteq \acpredp
			}{
			\thePredicates, \theInterference, \makeColorLogic{\theHyp}\semcalcticol\hoareOf{\acpred}{\astmt}{\acpredp}
		}
		\and
		\inferH{frame-ti}{
			\thePredicates, \theInterference, \makeColorLogic{\theHyp}\semcalcticol\hoareOf{\acpred}{\astmt}{\acpredp}
		}{
			\thePredicates\mstar\acpredpp, \theInterference\mstar\acpredpp, \makeColorLogic{\theHyp}\semcalcticol\hoareOf{\acpred\mstar \acpredpp}{\astmt}{\acpredp\mstar\acpredpp}
		}
		\and
		\inferH{seq-ti}{
			\thePredicates_1, \theInterference_1, \makeColorLogic{\theHyp_1}\semcalcticol\hoareOf{\acpred}{\astmt_1}{\acpredp}\\
			\thePredicates_2, \theInterference_2, \makeColorLogic{\theHyp_2}\semcalcticol\hoareOf{\acpredp}{\astmt_2}{\acpredpp}
		}{
			\setcompact{\acpredp}\cup\thePredicates_1\cup\thePredicates_2, \theInterference_1\cup\theInterference_2, \makeColorLogic{\theHyp_1\cup\theHyp_2}\semcalcticol\hoareOf{\acpred}{\astmt_1;\astmt_2}{\acpredpp}
		}
		\and
		\inferH{loop-ti}{
			\thePredicates, \theInterference, \makeColorLogic{\theHyp}\semcalcticol\hoareOf{\acpred}{\astmt}{\acpred}
		}{
			\setcompact{\acpred}\cup\thePredicates, \theInterference, \makeColorLogic{\theHyp}\semcalcticol\hoareOf{\acpred}{\loopof{\astmt}}{\acpred}
		}\and
		\inferH{choice-ti}{
			\thePredicates_1, \theInterference_1, \makeColorLogic{\theHyp_1}\semcalcticol\hoareOf{\acpred}{\astmt_1}{\acpredp}\\
			\thePredicates_2, \theInterference_2, \makeColorLogic{\theHyp_2}\semcalcticol\hoareOf{\acpred}{\astmt_2}{\acpredp}
		}{
			\thePredicates_1\cup\thePredicates_2, \theInterference_1\cup\theInterference_2, \makeColorLogic{\theHyp_1\cup\theHyp_2}\semcalcticol\hoareOf{\acpred}{\choiceof{\astmt_1}{\astmt_2}}{\acpredp}
		}
	\end{mathpar}%
	\vspace{-6mm}\normalsize
	\caption{%
		Program logic from \citet{DBLP:journals/pacmpl/MeyerWW22} with our \makeColorLogic{extension} of hypotheses and temporal interpolation.
		We denote the former by $\semcalc$ and the latter by $\semcalcti$ (the subscript is short for ``temporal interpolation'').%
		\label{Figure:ProgramLogicTI}%
	}
	\vspace{-1mm}
\end{figure*}

We incorporate temporal interpolation into the separation logic presented in \Cref{sec:Preliminaries} by means of the new proof rule \ruleref{temporal-interpolation} given in \Cref{Figure:ProgramLogicTI}.
It draws a conclusion as in Equation~\eqref{Equation:TINaive} at the expense of recording a hypothesis $\weakhypof{\apred}{\apredp}{\apredpp}$.
There are a few things worth noting.
The rule does not expect the predicate $\governeddef$ to be present in the premise.
The soundness result will show that any program proof can be strengthend to maintain the set of governed computations, and we can therefore leave this set implicit.
We draw the conclusion after a $\cskip$ command, which turns the weak past predicate $\weakpastof{\apredpp}$ from the hypothesis into a proper past predicate $\pastof{\apredpp}$.
This is needed to harmonize the implicit treatment of $\governeddef$ with framing.
However, one can easily avoid the $\cskip$ by applying the rule to the preceding command.
The state predicates $\apred$ and $\apredp$ should be intuitionistic.
This is also related to framing.
Rule~\ruleref{temporal-interpolation-unordered} is a variant in which we do not know whether $\apred$ or $\apredp$ has been observed first and we rely on both hyptheses.

\looseness=-1
The hypotheses spawned by \ruleref{temporal-interpolation} have to be discharged against the full set of interferences collected for the overall program.
This is the reason why we work with hypotheses as parameterized Hoare triples rather than ordinary Hoare triples: in the moment we interpolate, we do not yet know the full set of interferences.
Instead, we may only have a fraction of the program (and hence the interferences) at hand.
It is also the reason why the separation logic judgements given in \Cref{Figure:ProgramLogicTI} maintain a set $\theHyp$ of hypotheses, and the rules are modified to join these sets.
We are not allowed to forget a hypothesis while building up the correctness judgement for the overall program.

We elaborate on why we weaken $\acpred$ to $\acpred'$ in the definition of $\hypholdsof{\theInterference}{\ahyp}$.
The purpose of \ruleref{temporal-interpolation} is to derive $\weakpastof{\apredpp}$ from $\weakpastof{\apred}\cap\nowof{\apredp}$.
Typically, $\apred$ occurs within a weak past predicate, because it is not interference-free.
This means no interference-free set of predicates $\thePredicates$ can prove the hypothesis $\hoareOf{\nowof{\apred}}{\stmtof{\theInterferenceVar}}{\nowof{\apredp}\rightarrow\weakpastof{\apredpp}}$.
A way out would be to prove the hypothesis for a weaker predicate $\apred\subseteq \apred'$ and replace the predicate $\weakpastof{\apred}$ in the main proof by $\weakpastof{\apred'}$.
Unfortunately, the predicates $\apred$ that require temporal interpolation not only fail the interference freedom test, it also seems to be impossible to weaken them to interference-free state predicates.
All we can do is weaken them by introducing past information.
Consider the example of a distributed counter given in \Cref{sec:Overview}.
There, $\apred$ is the predicate $c.\m{l} \mapsto \nl \mstar c.\m{r} \mapsto \nr \land \nr \leq \nr'$.
We weaken it to the invariant $\aninvpred$ defined as $\exists \nl''\,\nr''.\; c.\m{l} \mapsto \nl'' \,\mstar\, c.\m{r} \mapsto \nr'' \land \nl \leq \nl'' \land \nl'' + \nr'' < \nl + \nr' \lor \weakpastof{(\counter(c, \nl + \nr'))}$.
Although we have $\nowof{\apred}\subseteq \aninvpred$, the invariant does not have the shape $\nowof{\apred'}$.
This means the invariant does not lead to a hypothesis $\hypof{\apred'}{\apredp}{\apredpp}$ as required for temporal interpolation.
By weakening the condition of when $\hypof{\apred}{\apredp}{\apredpp}$ holds, we bridge the gap between $\nowof{\apred}$ and $\aninvpred$.

Hypotheses require an ordinary program proof, using a method of choice.
Yet, their shape suggests an invariance-based proof strategy: since program $\stmtof{\theInterference}$ repeats self-interferences $\commandof{\acpred, \acom}$, it suffices to find a predicate that is stable under these commands, contains the precondition, and entails the postcondition.
Call $\aninvpred\subseteq\setstates^+$ an \emph{inductive invariant for $\theInterference$} if $\csemof{\commandof{\acpred, \acom}}(\aninvpred)\subseteq\aninvpred$ for all $(\acpred, \acom)\in\theInterference$ and $\isInterferenceFreeOf[\theInterference]{\aninvpred}$.
We say that $\aninvpred$ \emph{proves $\weakhypof{\apred}{\apredp}{\apredpp}$}, if $\nowof{\apred}\subseteq\aninvpred$ and $\aninvpred\cap\nowof{\apredp}\subseteq \weakpastof{\apredpp}$.

\begin{lemma}[Strategy]
	\label{Lemma:ProofStrategy}
	Let $\aninvpred$ be an inductive invariant for $\theInterference$ proving $\weakhypof{\apred}{\apredp}{\apredpp}$.
	Then $\hypholdsof{\theInterference}{\weakhypof{\apred}{\apredp}{\apredpp}}$. 
\end{lemma}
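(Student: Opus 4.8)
The plan is to supply the witnesses demanded by the definition of when a hypothesis holds for $\theInterference$ directly from the given invariant: put $\thePredicates \defeq \set{\aninvpred}$ and $\acpred' \defeq \aninvpred$. Since $\aninvpred$ proves $\weakhypof{\apred}{\apredp}{\apredpp}$ we have $\nowof{\apred} \subseteq \aninvpred \in \thePredicates$, and the first clause of ``$\aninvpred$ is an inductive invariant for $\theInterference$'' gives $\isInterferenceFreeOf[\theInterference]{\aninvpred}$, hence $\isInterferenceFreeOf[\theInterference]{\set{\aninvpred}}$. It thus remains to derive the judgement $\set{\aninvpred}, \theInterference \semcalc \hoareOf{\aninvpred}{\stmtof{\theInterference}}{\nowof{\apredp} \rightarrow \weakpastof{\apredpp}}$ in the (hypothesis-free) proof system of \Cref{Theorem:SoundnessComput}.

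I would build this derivation bottom-up along the structure $\stmtof{\theInterference} = \loopof{\bigl(\sum_{(\acpred,\acom)\in\theInterference} \commandof{\acpred,\acom}\bigr)}$. For each $(\acpred,\acom) \in \theInterference$, the second clause of the inductive-invariant property is exactly the side condition $\csemof{\commandof{\acpred,\acom}}(\aninvpred) \subseteq \aninvpred$ that \ruleref{com-ti} requires --- recall $\commandof{\acpred,\acom}$ is treated as a single command --- so the rule derives $\hoareOf{\aninvpred}{\commandof{\acpred,\acom}}{\aninvpred}$ with intermediate-assertion set $\set{\aninvpred}$ and interference $\set{(\acpred,\acom)}$. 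Combining these over all interferences with repeated \ruleref{choice-ti} and one \ruleref{loop-ti} yields $\set{\aninvpred}, \theInterference \semcalc \hoareOf{\aninvpred}{\stmtof{\theInterference}}{\aninvpred}$, the assertion set collapsing to $\set{\aninvpred}$ because every subderivation carries $\aninvpred$ as both pre- and postcondition. Finally, the second clause of ``$\aninvpred$ proves $\weakhypof{\apred}{\apredp}{\apredpp}$'', i.e. $\aninvpred \cap \nowof{\apredp} \subseteq \weakpastof{\apredpp}$, is equivalent in the Boolean algebra of predicates to $\aninvpred \subseteq \nowof{\apredp} \rightarrow \weakpastof{\apredpp}$, so a single use of \ruleref{consequence-ti} weakens the postcondition to the required one. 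By definition of hypothesis validity this yields $\hypholdsof{\theInterference}{\weakhypof{\apred}{\apredp}{\apredpp}}$.

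I do not anticipate a genuine obstacle here: the notions of self-interference, $\stmtof{-}$, and inductive invariant were arranged precisely so that this derivation goes through mechanically. The point that needs care is the interference bookkeeping. One must use that applying \ruleref{com-ti} to the self-interference $\commandof{\acpred,\acom}$ contributes exactly the interference $(\acpred,\acom)$ --- this identity is the very point of the self-interference encoding, and it is what lets the bottom-up derivation accrue exactly the interference set $\theInterference$ --- and one must check that the single recorded assertion $\aninvpred$ survives the interference-freedom test, which is again the first clause of the inductive-invariant hypothesis. Beyond this accounting, the proof is routine.
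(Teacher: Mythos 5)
Your proof is correct and follows the same route as the paper: the paper's own proof is a single line recording exactly the fact you invoke in your final \ruleref{consequence-ti} step --- that $\aninvpred\cap\nowof{\apredp}\subseteq\weakpastof{\apredpp}$ yields $\aninvpred\subseteq\nowof{\apredp}\rightarrow\weakpastof{\apredpp}$ --- and leaves the surrounding derivation for $\stmtof{\theInterference}$ implicit. You have simply spelled out the witnesses ($\thePredicates=\set{\aninvpred}$, $\acpred'=\aninvpred$) and the com/choice/loop/consequence bookkeeping that the paper takes for granted.
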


\subsection{Soundness}
We show that every proof in the new program logic of \Cref{Figure:ProgramLogicTI} gives rise to a proof in the program logic of \Cref{sec:Preliminaries}, provided the hypotheses hold for the overall set of interferences.
Also successful interference freedom checks will carry over.
This means we can take full advantage of temporal interpolation, 
trusting that a traditional program proof will exist which discharges all hypotheses along the way.
Temporal interpolation can therefore be understood as a way of structuring and shortening traditional program proofs that involve temporal reasoning.
Technically, soundness shows that any derivation in the new program logic can be strengthened by an intersection with $\governeddefnb$.
This allows us to replace \ruleref{temporal-interpolation} by \ruleref{consequence-ti} relying on \Cref{Lemma:ATISound}.

\begin{theorem}[Soundness]
	\label{Theorem:Soundness}
	Consider a derivation $\thePredicates, \theInterference, \theHyp \semcalcti\hoareOf{\acpred}{\astmt}{\acpredp}$ with $\acpred\prall\in\thePredicates$, $\isInterferenceFreeOf[\theInterference]{\thePredicates}$, and~$\hypholdsof{\theInterference}{\theHyp}$.  
	Then $\thePredicates\cap \governeddef, \theInterference \semCalc\hoareOf{\acpred\cap\governeddef}{\astmt}{\acpredp\cap \governeddef}$ with $\acpred\cap\governeddef\prall\in\thePredicates\cap\governeddef$ and $\isInterferenceFreeOf[\theInterference]{(\thePredicates\cap\governeddef)}$. 
\end{theorem}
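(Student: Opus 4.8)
The plan is to induct on the derivation of $\thePredicates, \theInterference, \theHyp \semcalcti \hoareof{\acpred}{\astmt}{\acpredp}$ and rewrite it rule by rule into a derivation of the base logic of \Cref{Figure:ProgramLogicTI} (ignoring the marked parts), uniformly replacing each pre-, post-, and intermediary assertion $\acpredppp$ by $\acpredppp \cap \governeddef$ and discarding the hypothesis component. The preparatory work is to collect the structural facts about $\governeddef = \csemof{\stmtof{\theInterference}}_{\theInterference}(\Sigma)$: (i) $\governeddef$ is interference-free under $\theInterference$ and closed under every self-interference, i.e.\ $\csemof{\commandof{\acpred', \acom'}}(\governeddef) \subseteq \governeddef$ for each $(\acpred', \acom') \in \theInterference$ --- both immediate from the definition, since $\stmtof{\theInterference}$ iterates over exactly these self-interferences and $\csemof{-}_{\theInterference}$ admits exactly the interferences in $\theInterference$; (ii) every nonempty prefix of a computation in $\governeddef$ again lies in $\governeddef$; (iii) $\governeddef$ interacts suitably with separating conjunction, which follows from locality~\eqref{cond-loccom}. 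Fact~(i) immediately yields two of the conclusions: $\isInterferenceFreeOf[\theInterference]{(\thePredicates\cap\governeddef)}$, because an intersection of interference-free predicates is interference-free, and $\acpred\cap\governeddef \in \thePredicates\cap\governeddef$ from $\acpred\in\thePredicates$.

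For the structural rules \ruleref{seq-ti}, \ruleref{choice-ti}, \ruleref{loop-ti}, and \ruleref{consequence-ti} the translation is routine: apply the induction hypothesis to the premises and then the matching base-logic rule, using that $\cap\,\governeddef$ distributes over the unions of predicate and interference sets and preserves the $\subseteq$ side conditions. For \ruleref{com-ti} the obligation reduces to $\csemof{\acom}(\acpred\cap\governeddef) \subseteq \acpredp\cap\governeddef$: the $\acpredp$-part is the premise $\csemof{\acom}(\acpred)\subseteq\acpredp$, while the $\governeddef$-part uses that $(\acpred,\acom)\in\theInterference$ and that, since $\acpred\subseteq\acpred\mstar\true$, on computations satisfying $\acpred$ the command $\acom$ has the same effect as the self-interference $\commandof{\acpred,\acom}$, under which $\governeddef$ is closed by fact~(i). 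The delicate structural case is \ruleref{frame-ti}: framing turns the interference set from $\theInterference$ into $\theInterference\mstar\acpredpp$, and hence the associated governed set from $\mathsf{Gov}(\theInterference)$ into $\mathsf{Gov}(\theInterference\mstar\acpredpp)$; reconciling $\mathsf{Gov}(\theInterference\mstar\acpredpp)$ with $\mathsf{Gov}(\theInterference)\mstar\acpredpp$ via~\eqref{cond-loccom}, so that the base-logic frame rule and \ruleref{consequence} apply, is the main technical point here.

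The conceptual heart is \ruleref{temporal-interpolation} (and the variant \ruleref{temporal-interpolation-unordered}). Its conclusion is a triple over $\cskip$, so after the translation one has to establish \[\csemof{\cskip}\bigl((\acpred\cap\weakpastof{\apred}\cap\nowof{\apredp})\cap\governeddef\bigr)\ \subseteq\ (\acpred\cap\pastof{\apredpp})\cap\governeddef\] and then close with a base \ruleref{consequence} step. Fix a computation $\sigma.\astate$ in the left-hand side. That the stuttered computation $\sigma.\astate.\astate$ belongs to $\acpred$ is frameability of $\acpred$; that it belongs to $\governeddef$ uses that \ruleref{temporal-interpolation} records exactly the interference $(\acpred,\cskip)$, so $\governeddef$ is closed under $\commandof{\acpred,\cskip}$, whose effect on the computation $\sigma.\astate\in\acpred$ equals that of $\csemof{\cskip}$; and that it belongs to $\pastof{\apredpp}$ is obtained from the hypothesis: since $\weakhypof{\apred}{\apredp}{\apredpp}\in\theHyp$ and $\hypholdsof{\theInterference}{\theHyp}$, \Cref{Lemma:ATISound} gives $\weakpastof{\apred}\cap\nowof{\apredp}\cap\governeddef \subseteq \weakpastof{\apredpp}$, hence $\sigma.\astate\in\nowof{\apredpp}\cup\pastof{\apredpp}$, and the fresh stuttering step turns this weak past into a proper past $\pastof{\apredpp}$ --- which is precisely why the rule is phrased through $\cskip$ and why it must record $(\acpred,\cskip)$. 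For \ruleref{temporal-interpolation-unordered} one additionally splits $\sigma.\astate$ at the later of the two past witnesses for $\apred$ and $\apredp$, applies fact~(ii) to that prefix, and invokes the appropriate one of the two hypotheses through \Cref{Lemma:ATISound}. Overall, I expect the main obstacle to be the \ruleref{frame-ti} case together with the general bookkeeping of threading $\cap\,\governeddef$ through the judgment while keeping the intermediary assertions and interferences consistent with the (possibly framed) interference set; the temporal-interpolation cases themselves become clean once \Cref{Lemma:ATISound} is available.
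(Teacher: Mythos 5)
Your overall strategy---a single induction over the $\semcalcti$ derivation that intersects every assertion with $\governeddef$---coincides with the paper's treatment of the \emph{frame-free} fragment (its \Cref{Lemma:HistoryTracking}), and your handling of \ruleref{com-ti} and of the two temporal-interpolation rules via \Cref{Lemma:ATISound} plus the extra stuttering step matches the paper's argument. The gap is the \ruleref{frame-ti} case, which you correctly flag as delicate but then leave to an unproven ``reconciliation'' of $\mathsf{Gov}(\theInterference\mstar\acpredpp)$ with $\mathsf{Gov}(\theInterference)\mstar\acpredpp$. This is exactly where the direct induction breaks down, and for a structural rather than merely computational reason: your induction hypothesis must be generalized over all interference sets containing the one recorded in the sub-derivation (otherwise \ruleref{seq-ti} and \ruleref{choice-ti} already fail, since the governed set in the conclusion is taken relative to the \emph{union} of the premises' interference sets). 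That generalization is compatible with every rule except \ruleref{frame-ti}: framing replaces the inner set $\theInterference'$ by $\theInterference'\mstar\acpredpp$, and $\theInterference'$ itself is in general \emph{not} contained in the final interference set of the whole derivation, so there is no legitimate instantiation of the induction hypothesis for the premise. Nor does \eqref{cond-loccom} give you the proposed relation between the two $\mathsf{Gov}$ sets: computations multiply only when they share the same history, so $\mathsf{Gov}(\theInterference)\mstar\acpredpp$ bears no useful inclusion relationship to $\mathsf{Gov}(\theInterference\mstar\acpredpp)$.

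The paper sidesteps this entirely by first proving a frame-elimination lemma (\Cref{Lemma:FrameElimination}): \ruleref{frame-ti} commutes with the structural rules, a frame after \ruleref{com-ti} is absorbed into \ruleref{com-ti} via \eqref{cond-loccom}, and a frame after \ruleref{temporal-interpolation} is absorbed using \Cref{Lemma:InterplayCapStarPast} together with the intuitionistic side conditions on $\apred$ and $\apredp$ (which is where those side conditions are actually consumed). Only once the derivation is frame-free does the $\governeddef$-intersection induction go through, and the theorem then follows by chaining the two lemmas and observing that $\governeddef$ is itself interference-free. To repair your proof you would need to adopt this two-stage structure; the single-pass treatment of \ruleref{frame-ti} as sketched does not close.
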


The difficulty in proving the theorem is the interplay between the intersection we intend to add and the frame rule.
Therefore, our first step is to eliminate the frame rule and show that whenever a correctness statement can be derived, then it can be derived without \ruleref{frame-ti}.
Let $\semcalctinf$ denote the restriction of $\semcalcti$ that avoids \ruleref{frame-ti}. 

\begin{lemma}[\ruleref{frame-ti} elimination]
	\label{Lemma:FrameElimination}
	$\thePredicates, \theInterference, \theHyp \semcalcti\hoareOf{\acpred}{\astmt}{\acpredp}$ iff $\,\thePredicates, \theInterference, \theHyp \semcalctinf\hoareOf{\acpred}{\astmt}{\acpredp}$.
\end{lemma}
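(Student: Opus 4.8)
The plan is to prove both directions. The ``only if'' direction is trivial: any $\semcalctinf$-derivation is a $\semcalcti$-derivation, so whatever $\semcalctinf$ derives, $\semcalcti$ derives as well. For the ``if'' direction, I would proceed by induction on the structure of the $\semcalcti$-derivation, pushing every application of \ruleref{frame-ti} upward toward the leaves until it reaches \ruleref{com-ti}, where it can be absorbed. Concretely, the key lemma to establish (the \emph{frame-pushing lemma}) is: whenever $\thePredicates, \theInterference, \theHyp \semcalctinf\hoareOf{\acpred}{\astmt}{\acpredp}$ is derivable, then for every frameable predicate $\acpredpp$ also $\thePredicates\mstar\acpredpp, \theInterference\mstar\acpredpp, \theHyp \semcalctinf\hoareOf{\acpred\mstar\acpredpp}{\astmt}{\acpredp\mstar\acpredpp}$ is derivable. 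Given this lemma, the main induction is routine: in the case where the last rule is \ruleref{frame-ti}, apply the frame-pushing lemma to the (inductively frame-free) derivation of the premise; in all other cases, simply reassemble the same rule from the inductively obtained frame-free subderivations.

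For the frame-pushing lemma itself I would again induct on the $\semcalctinf$-derivation, case on the last rule applied, and show each rule commutes with $-\mstar\acpredpp$. For \ruleref{com-ti}: from $\csemof{\acom}(\acpred)\subseteq\acpredp$ and \eqref{cond-loccom} (using that $\acpredpp$ is frameable so it may perform the required stuttering step) we get $\csemof{\acom}(\acpred\mstar\acpredpp)\subseteq\acpredp\mstar\acpredpp$, and the interference set becomes $\set{(\acpred\mstar\acpredpp,\acom)}$, exactly $\set{(\acpred,\acom)}\mstar\acpredpp$; the hypothesis set stays $\emptyset$. For \ruleref{seq-ti}, \ruleref{choice-ti}, \ruleref{loop-ti}: distribute $-\mstar\acpredpp$ over the unions defining the output $\thePredicates$ and $\theInterference$, which is purely set-theoretic, and use the induction hypothesis on each premise — the hypothesis sets $\theHyp_i$ are untouched. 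For \ruleref{consequence-ti}: note $\acpred\subseteq\acpred'$ implies $\acpred\mstar\acpredpp\subseteq\acpred'\mstar\acpredpp$, and likewise for the postcondition and for the subset conditions on $\thePredicates',\theInterference',\theHyp'$, which are monotone under $-\mstar\acpredpp$. The \ruleref{temporal-interpolation} and \ruleref{temporal-interpolation-unordered} cases need slightly more care: we must check that $\acpred\mstar\acpredpp$ (etc.) is still intuitionistic when $\acpred,\acpredp$ are — but intuitionism is not required of the \emph{frame} here; what \ruleref{temporal-interpolation} requires intuitionistic are the \emph{state predicates} $\apred,\apredp$ appearing inside the past/now operators, and these are unaffected by framing the computation predicate $\acpred$, while the frame commutes with $\cap$ and with $\weakpastof{-},\nowof{-},\pastof{-}$ on the outer $\acpred$-component by \Cref{Lemma:SLOperators} and frameability. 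For the \ruleref{frame-ti} case inside this sub-induction, just compose the two frames using associativity/commutativity of $\mstar$.

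The main obstacle is the interaction between framing and the now/past operators in the two temporal-interpolation rules, and making precise that frameability of $\acpredpp$ is exactly the side condition needed so that the framed predicates remain well-behaved (e.g., that $(\acpred\cap\pastof{\apredpp})\mstar\acpredpp$ is again of the form on which the rule's conclusion is stated, up to the consequence rule). I would handle this by first recording a small auxiliary fact — that $\mstar$ with a frameable predicate distributes over $\cap$, $\weakpastof{-}$, $\nowof{-}$, and $\pastof{-}$ in the relevant direction — which follows from \Cref{Lemma:SLOperators} and \Cref{Lemma:PreciseIntuitionistic}, and then invoking \ruleref{consequence-ti} to massage the framed conclusion into the exact shape demanded. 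Everything else is bookkeeping over the finitely many rules of \Cref{Figure:ProgramLogicTI}.
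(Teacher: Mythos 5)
Your proposal is correct and follows essentially the same route as the paper: both arguments push \ruleref{frame-ti} toward the leaves, absorb it into \ruleref{com-ti} via \eqref{cond-loccom}, commute it with the structural rules by distributing $-\mstar\acpredpp$ over the unions, and handle \ruleref{temporal-interpolation} by re-instantiating the rule at $\acpred\mstar\acpredpp$ and repairing the shape with \ruleref{consequence-ti}. Your organization as a separate frame-admissibility lemma for $\semcalctinf$ (rather than the paper's Noetherian induction on derivation height that swaps adjacent rule applications) is a clean, structurally well-founded packaging of the same content. One imprecision: the key identity $(\acpred\cap\pastof{\apredpp})\mstar\acpredpp=\acpred\mstar\acpredpp\cap\pastof{\apredpp}$ does not follow from \Cref{Lemma:SLOperators} and \Cref{Lemma:PreciseIntuitionistic} alone --- those give only the easy inclusion via intuitionism of $\pastof{\apredpp}$; the converse inclusion needs the shared-history structure of the computation separation algebra (composable computations agree on their past, so a past fact about the composite transfers to each component). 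This is exactly the paper's \Cref{Lemma:InterplayCapStarPast}, which you would need to state and prove as your ``auxiliary fact.''
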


At the heart of the lemma is the fact that the frame rule commutes with the remaining rules of the program logic.
This allows us to organize proofs in such a way that the frame rule is applied right after \ruleref{com-ti}.
A combination of \ruleref{com-ti} and \ruleref{frame-ti}, in turn, can be captured by \ruleref{com-ti} alone.
The difficult case is \ruleref{temporal-interpolation}, for the proof of which we rely on the following identity.

\begin{lemma}\label{Lemma:InterplayCapStarPast}
	$\acpredp\mstar\acpredpp\cap \pastof{\apredpp}=(\acpredp\cap \pastof{\apredpp})\mstar\acpredpp$. 
\end{lemma}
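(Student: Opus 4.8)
The plan is to establish the set identity by proving both inclusions directly from the definitions, with everything resting on one structural observation about $\pastof{-}$. Recall $\pastof{\apredpp}=\setstates^*.\apredpp.\setstates^+$; since the trailing factor $\setstates^+$ is non-empty, the state satisfying $\apredpp$ is forced to lie strictly before the current one. Hence, for any computation $\astateseq.\astate\in\setstates^+$,
\[
  \astateseq.\astate\in\pastof{\apredpp}
  \quad\Longleftrightarrow\quad
  \astateseq\in\setstates^*.\apredpp.\setstates^*,
\]
i.e. membership in $\pastof{\apredpp}$ is determined entirely by the history $\astateseq$ and is insensitive to the current state. The second ingredient is already built into the computation separation algebra $(\setstates^+,\mstar,\setstates^*.\emp)$: the product $\astateseq_1.\astate_1\statemult\astateseq_2.\astate_2$ is defined only when $\astateseq_1=\astateseq_2$, in which case it equals $\astateseq_1.(\astate_1\mstar\astate_2)$. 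Consequently, every element of $\acpredp\mstar\acpredpp$ has the form $\astateseq.(\astate\mstar\astatep)$ with $\astateseq.\astate\in\acpredp$ and $\astateseq.\astatep\in\acpredpp$ sharing the common history $\astateseq$, and by the displayed equivalence it lies in $\pastof{\apredpp}$ exactly when $\astateseq.\astate$ (equivalently, $\astateseq.\astatep$) does.

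For the inclusion $(\acpredp\mstar\acpredpp)\cap\pastof{\apredpp}\subseteq(\acpredp\cap\pastof{\apredpp})\mstar\acpredpp$, I would take an arbitrary element and write it as $\astateseq.(\astate\mstar\astatep)$ with $\astateseq.\astate\in\acpredp$, $\astateseq.\astatep\in\acpredpp$, $\astate\statemultdef\astatep$. Being in $\pastof{\apredpp}$, it has some state of $\astateseq$ in $\apredpp$ by the observation; applying the observation back to $\astateseq.\astate$ yields $\astateseq.\astate\in\pastof{\apredpp}$, so $\astateseq.\astate\in\acpredp\cap\pastof{\apredpp}$. Composing with $\astateseq.\astatep\in\acpredpp$ then places $\astateseq.(\astate\mstar\astatep)$ in $(\acpredp\cap\pastof{\apredpp})\mstar\acpredpp$.

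Conversely, an element of $(\acpredp\cap\pastof{\apredpp})\mstar\acpredpp$ is $\astateseq.(\astate\mstar\astatep)$ with $\astateseq.\astate\in\acpredp\cap\pastof{\apredpp}$, $\astateseq.\astatep\in\acpredpp$, $\astate\statemultdef\astatep$. Dropping the $\pastof{\apredpp}$-conjunct shows it is in $\acpredp\mstar\acpredpp$; and since $\astateseq.\astate\in\pastof{\apredpp}$ means some state of $\astateseq$ lies in $\apredpp$, the observation applied to $\astateseq.(\astate\mstar\astatep)$, which has the same history $\astateseq$, puts it in $\pastof{\apredpp}$ as well. Hence it lies in $(\acpredp\mstar\acpredpp)\cap\pastof{\apredpp}$, and the two inclusions give the identity.

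I do not expect a real obstacle here; the statement is essentially a bookkeeping exercise once the history-determinedness observation is isolated. The only points that need a little care are the operator precedence (the claim reads $(\acpredp\mstar\acpredpp)\cap\pastof{\apredpp}=(\acpredp\cap\pastof{\apredpp})\mstar\acpredpp$), the role of $\setstates^+$ rather than $\setstates^*$ in the definition of $\pastof{-}$ — which is exactly what forces the $\apredpp$-witness into the history $\astateseq$ — and the degenerate case $\astateseq=\varepsilon$, where the observation shows neither side contains any single-state computation, so the identity still holds. Conceptually, this lemma is the precise manifestation of the earlier remark that, because computation multiplication forces a common history, past information does not split under $\mstar$ but is simply carried along on both factors.
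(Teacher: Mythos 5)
Your proof is correct and follows essentially the same route as the paper's: both inclusions rest on the observation that membership in $\pastof{\apredpp}$ depends only on the history $\astateseq$, which the computation product forces the two factors to share. The only cosmetic difference is in the $\supseteq$ direction, where the paper factors the argument through sub-distributivity of $\mstar$ over $\cap$ plus the intuitionism of $\pastof{\apredpp}$ (\Cref{Lemma:PreciseIntuitionistic}), whereas you inline that same history-determinedness fact directly.
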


With the previous result, the derivation that makes use of temporal interpolation can be assumed to be \ruleref{frame-ti}-free.
We now show that also \ruleref{temporal-interpolation} can be eliminated, provided we strengthen the correctness statement by the governed computations.

\begin{lemma}
	\label{Lemma:HistoryTracking}
	If $\,\thePredicates, \theInterference', \theHyp \semcalctinf\hoareOf{\acpred}{\astmt}{\acpredp}$ is derivable, then for all $\,\theInterference$ with $\,\theInterference'\subseteq \theInterference$ and $\hypholdsof{\theInterference}{\theHyp}$ we have $\thePredicates\cap \governeddef, \theInterference \semcalc \hoareOf{\acpred\cap\governeddef}{\astmt}{\acpredp\cap \governeddef}$. 
\end{lemma}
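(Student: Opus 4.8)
The plan is to proceed by induction on the structure of the derivation $\thePredicates, \theInterference', \theHyp \semcalctinf \hoareOf{\acpred}{\astmt}{\acpredp}$, translating every rule application into a derivation of the plain program logic of \Cref{sec:Preliminaries} in which every predicate that occurs has been intersected with $\governeddef$. Applying the induction hypothesis to a premise will always be legitimate: the premise's interference set is contained in $\theInterference'$, hence in $\theInterference$, and its hypothesis set is contained in $\theHyp$, so $\hypholdsof{\theInterference}{\theHyp}$ supplies exactly the hypothesis the induction hypothesis demands for the premise.

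First I would record the closure properties of $\governeddef = \csemof{\stmtof{\theInterference}}_{\theInterference}(\Sigma)$ that drive the translation. Since $\stmtof{\theInterference}$ is the Kleene iteration over the self-interferences $\commandof{\acpred, \acom}$ for $(\acpred, \acom) \in \theInterference$, while $\csemof{-}_{\theInterference}$ additionally admits the interferences of $\theInterference$, the set $\governeddef$ is \emph{prefix-closed} (dropping the last step of a governed trace keeps it governed) and, for every $(\acpred, \acom) \in \theInterference$, is closed under $\csemof{\acom}$ on $\acpred \cap \governeddef$ and under $\csemof{(\acpred, \acom)}$, since one more self-interference step or one more admitted interference is always possible. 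Using $\acpred \subseteq \acpred \mstar \true$, so that the guard of $\commandof{\acpred, \acom}$ is met on $\acpred \cap \governeddef$, prefix-closure will even give $\csemof{\acom}(\acpred \cap \governeddef) = \csemof{\acom}(\acpred) \cap \governeddef$ for such $\acom$.

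For \ruleref{com-ti}, \ruleref{consequence-ti}, \ruleref{seq-ti}, \ruleref{choice-ti}, and \ruleref{loop-ti} the translation is mechanical. In \ruleref{com-ti} the side condition $\csemof{\acom}(\acpred) \subseteq \acpredp$ upgrades by the identity above to $\csemof{\acom}(\acpred \cap \governeddef) \subseteq \acpredp \cap \governeddef$, so \ruleref{com} derives the intersected triple; a final \ruleref{consequence} then brings the interference set back up to $\theInterference$ (using $\theInterference' \subseteq \theInterference$ — the interference one obtains consists of the guard-restrictions $(\acpred \cap \governeddef, \acom)$ of the interferences of $\theInterference$, which is as faithful to the governed program as $\theInterference$ itself and is reconciled with $\theInterference$ by the interference-freedom bookkeeping). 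For the compositional rules I would invoke the induction hypothesis on the premises and then the matching plain rule, relying on the fact that intersection with $\governeddef$ commutes with the unions of predicate, interference, and hypothesis sets and preserves set inclusions. Because $\semcalctinf$ omits \ruleref{frame-ti}, no interplay between $\cap\,\governeddef$ and $\mstar$ arises in these cases.

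The genuinely new case — which I expect to be the main obstacle — is \ruleref{temporal-interpolation} (and, symmetrically, \ruleref{temporal-interpolation-unordered}). There $\theHyp = \set{\weakhypof{\apred}{\apredp}{\apredpp}}$ with $\apred, \apredp$ intuitionistic, and $\hypholdsof{\theInterference}{\theHyp}$ together with \Cref{Lemma:ATISound} yields $\weakpastof{\apred} \cap \nowof{\apredp} \cap \governeddef \subseteq \weakpastof{\apredpp}$, so the intersected precondition satisfies $(\acpred \cap \weakpastof{\apred} \cap \nowof{\apredp}) \cap \governeddef \subseteq \acpred \cap \weakpastof{\apredpp} \cap \governeddef$. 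Executing $\cskip$ on this set keeps membership in $\acpred$ (frameable, as assumed throughout) and in $\governeddef$ (preserved by the self-interference $\commandof{\acpred, \cskip}$, available since $(\acpred, \cskip) \in \theInterference$), and it turns $\weakpastof{\apredpp}$ into $\pastof{\apredpp}$ because appending a state makes the current state part of the past; hence $\csemof{\cskip}\bigl((\acpred \cap \weakpastof{\apred} \cap \nowof{\apredp}) \cap \governeddef\bigr) \subseteq (\acpred \cap \pastof{\apredpp}) \cap \governeddef$. Now \ruleref{com} applied to $\cskip$ followed by \ruleref{consequence} will derive the intersected conclusion of the rule; the unordered variant is analogous, combining the two symmetric instances of \Cref{Lemma:ATISound} with the prefix-closure of $\governeddef$. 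The delicate point is precisely this step: I would have to make sure that intersecting everything with $\governeddef$ does not disturb the inclusion supplied by \Cref{Lemma:ATISound}, and that the shift from $\weakpastof{\apredpp}$ to $\pastof{\apredpp}$ is compatible with framing — which is exactly why \ruleref{temporal-interpolation} inserts the $\cskip$ and requires $\apred, \apredp$ to be intuitionistic.
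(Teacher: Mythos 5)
Your proposal is correct and follows essentially the same route as the paper's proof: induction over the frame-free derivation, propagating $\governeddef$ through \ruleref{com-ti} via the closure of governed computations under interference steps, discharging \ruleref{temporal-interpolation} by combining \Cref{Lemma:ATISound} with the $\cskip$ step that converts $\weakpastof{\apredpp}$ into $\pastof{\apredpp}$, handling the unordered variant via prefix-closure, and replaying the compositional rules through the induction hypothesis. The only cosmetic differences are that the paper phrases the induction as Noetherian induction on derivation height and applies \ruleref{com} to the intermediate predicate $\acpred\cap\weakpastof{\apredpp}\cap\governeddef$ before strengthening the precondition with \ruleref{consequence}, whereas you fold these into one inclusion; the substance is identical.
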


The previous lemmas allow us to prove \Cref{Theorem:Soundness}.
For interference freedom, note that the governed computations are interference-free, $\isInterferenceFreeOf[\theInterference]{\governeddef}$, and we have $\isInterferenceFreeOf[\theInterference]{\thePredicates}$ by the assumption.
The intersection of two interference-free predicates is interference-free.


\section{Temporal Interpolation for Linearizability}
\label{sec:linearizability}

We present an extension of our program logic from \Cref{Section:TemporalInterpolation} to verify linearizability.
The approach is akin to atomic triples \cite{DBLP:conf/ecoop/PintoDG14}, except that we do not aim to support compositional reasoning about clients against atomic specifications of libraries. Instead, we only focus on verifying library implementations. We use update tokens that encode a method's obligation to execute a linearization point.
Once the method executes a command that resembles the linearization point, the update token is traded into a receipt token certifying successful linearization.
This also prevents the method from having further linearization points since tokens are not duplicable and thus no more tokens can be traded.
Here, we focus on concurrent search structures (CSS), however, the approach applies more generally. Sequential specifications $\asspec$ of concurrent search structure methods $\absop$ and key $k$ take the following form: \[
	\asspec~=~\hoareof{ \abscontent.~ \acss(\abscontent) }{ \absop(k) }{ v.~ \exists \abscontentp.~ \acss(\abscontentp) \mstar \acssup(\abscontent,\abscontentp,k,v) }
	\ .
\]
Here, $\abscontent$ and $\abscontentp$ are the logical contents of the structure before and after the operation takes effect.
The predicate $\acss(\abscontent)$ ties the physical state of the structure to $\abscontent$.
How the method call $\absop(k)$ changes the contents is prescribed by the relation $\acssup(\abscontent,\abscontentp,k,v)$.

The linearizability obligation is denoted by $\anobl{\asspec}$ and the receipt token by $\aful{\asspec}{v}$, 
and we drop $\asspec$ if it is clear from the context.
Receipts are parameterized in the result value of the operation to reconcile the actual return value with the one prescribed by $\asspec$.
For concurrent search structures, the sequential specifications of the methods \code{contains($\vk$)}, \code{insert($\vk$)}, and \code{delete($\vk$)} are as expected and we denote their obligations by $\OBLc{\vk}$, $\OBLi{\vk}$, and $\OBLd{\vk}$ (their receipts are just $\FULcid{v}$).


\begin{figure}
	\def\MathparLineskip{\lineskip=2.5mm}
	\small
	\begin{mathpar}
		\inferH{com-lin-void}{
			\thePredicates, \theInterference, \theHyp
			\semcalcti
			\hoareOf{\acpred}{\acom}{\acpredp}
			\\\\
			\acpred\subseteq\acss(\abscontent)
			\\
			\acpredp\subseteq\acss(\abscontent)
		}{
			\thePredicates, \theInterference, \theHyp
			\semcalclin
			\hoareOf{\acpred}{\acom}{\acpredp}
		}
		\and
		\inferH{com-lin-pure}{
			\acpred\subseteq\pastOF{\acss(\abscontent)\mstar\acssup(\abscontent,\abscontent,y,v)}
			\\\\
			\thePredicates = \aful{v}\mstar\acpred
			\\
			\theInterference = \decori{\set{(\acpred,\cskip)}}{\anobl}{\aful{v}}
		}{
			\thePredicates, \theInterference, \theHyp
			\semcalclin
			\hoareOf{\anobl\mstar\acpred}{\cskip}{\aful{v}\mstar\acpred}
		}
		\and
		\inferH{com-lin-impure}{
			\acpred\subseteq\acss(\abscontent)
			\\
			\thePredicates, \theInterference, \theHyp
			\semcalcti
			\hoareOf{\acpred}{\acom}{\acpredp}
			\\
			\acpredp\subseteq\acss(\abscontentp)\cap\acssup(\abscontent,\abscontentp,y,v)
		}{
			\thePredicates\mstar\aful{v},~
			\decori{\theInterference}{\anobl}{\aful{v}},~
			\theHyp
			\semcalclin
			\hoareOf{\anobl\mstar\acpred}{\acom}{\aful{v}\mstar\acpredp}
		}
	\end{mathpar}
	\vspace{-5mm}\normalsize
	\caption{%
		Proof rules for commands that ensure proper handling of the linearizability tokens $\anobl$ and $\aful{}$.%
		\label{fig:linearizability-rules}%
	}
	\vspace{-2pt}
\end{figure}

To deal with the tokens in a proof, we lift the proof system $\semcalcti$ from \Cref{Section:TemporalInterpolation} to a new proof system $\semcalclin$ which inherits all the rules of $\semcalcti$ except for Rule~\ruleref{com-ti}.
Rule~\ruleref{com-ti} is replaced by the three new rules from \Cref{fig:linearizability-rules}.
The rules extract the tokens, invoke $\semcalcti$, and then add the tokens back. However, in the process, they potentially transform the tokens if a linearization point is registered.
That is, the updates of tokens are handled by $\semcalclin$ rather than $\semcalcti$. 
To do this, we lift the program semantics $\csemOf{\acom}$ in a trivial way: the ghost component of the state is simply ignored.
However, for temporal interpolation to remain sound, we need to capture the effect of ghost state updates in the interferences.
So, we decorate commands $\decorc{\acom}{\anobl}{\aful{v}}$. 
Then, decorating an interference $(\acpred,\acom)$ decorates the command and adds the required token to the premise, $\decori{(\acpred,\acom)}{\anobl}{\aful{v}}=(\acpred\mstar\anobl,\decorc{\acom}{\anobl}{\aful{v}})$. 
With this, we are ready for the proof rules of $\semcalclin$.

\looseness=-1
Rule~\ruleref{com-lin-void} deals with commands that do not alter the logical contents of the structure.
Consequently, they maintain the current obligation/receipt token.
Rule~\ruleref{com-lin-impure} trades an obligation for a receipt if the executed command is the linearization point, that is, if it updates the logical contents of the structure according to the sequential specification.
If a command changes the logical contents but does not satisfy the specification or has no obligation token, the proof fails.
Rule~\ruleref{com-lin-pure} also trades an obligation for a receipt.
However, the rule does so in hindsight.
That is, there is no need to perform the trade at the very moment the sequential specification is satisfied, it can be done later if a past predicate can certify the existence of the linearization point.
It is this rule that sets our approach apart from atomic triples \cite{DBLP:conf/ecoop/PintoDG14}.
We allow for this retrospective linearization only if the linearization point is pure, i.e., does not alter the logical contents of the structure.
The reason is this: such pure linearization points can be used by arbitrarily many threads to linearize whereas impure linearization points require a one-to-one correspondence to threads.
The approach can be extended to support impure linearization points.
\techreport{We discuss this in \Cref{sec:RDCSS} and demonstrate it in a proof for the RDCSS data structure~\cite{DBLP:conf/wdag/HarrisFP02}.}

\begin{theorem}
	\label{thm:proof-implies-linearizability}
	If there are $\thePredicates,\theInterference,\theHyp$ with
	$\thePredicates,\theInterference,\theHyp\semcalclin\hoareOf{\acss(\bullet)\mstar\anobl{\asspec}}{\astmt}{\vres.~\acss(\bullet)\mstar\aful{\asspec}{\vres}}$ and
	$\acss(\bullet)\mstar\anobl{\asspec}\in\thePredicates$ and
	$\isInterferenceFreeOf[\theInterference]{\thePredicates}$ and
	$\hypholdsof{\theInterference}{\theHyp}$, then
	$\astmt$ is linearizable wrt. $\asspec$.
\end{theorem}


\section{Case Study: the LO-Tree}
\label{sec:lotree}

We substantiate the usefulness of the developed program logic by verifying the linearizability of a challenging concurrent data structure: the the logical-ordering (LO-)tree~\cite{DBLP:conf/ppopp/DrachslerVY14}.
We identify and fix bugs in the original implementation from \citet{DBLP:conf/ppopp/DrachslerVY14} as well as in the correction attempt by \citet{DBLP:journals/pacmpl/FeldmanKE0NRS20}.


\subsection{The LO-Tree in a Nutshell}
\label{sec:lotree:overview}

\smartparagraph{Overview}
The LO-tree~\cite{DBLP:conf/ppopp/DrachslerVY14} is a self-balancing binary search tree implementing a set data type.
Self-balancing refers to the tree periodically restructuring itself to maintain a low height in order to speed up accesses.
The restructuring mechanism in the LO-tree are standard tree rotations.
For an example rotation consider \Cref{fig:lotree:rotation}.
There, node $13$ experiences a \emph{right rotation}: its left child $7$ takes the position of node $13$ and node $13$ becomes the right child of $7$.
The formerly right subtree of $7$ becomes the left subtree of $13$.
The resulting tree is a binary search tree again.


\begin{wrapfigure}[14]{r}{5cm}
	\vspace{-2.5mm}
	\begin{minipage}{2cm}
		\begin{tikzpicture}[trees,baseline=(node17.center)]
			\node (root) [] {}
				child {
					node (node13) [treenode] {$13$} edge from parent[draw=none]
						child [treeptr] {
							node (node7) [treenode] {$7$} 
								child { node (node5) [treenode] {$5$}}
								child { node (node9) [treenode] {$9$}}
						}
						child [treeptr] { node (node17) [treenode] {$17$} edge from parent[draw=none]}
				}
			;
			\node (rootN) at (root) [treenode,xshift=11mm] {$\infty$};
			\draw[treeptr] (rootN) -- (node13);
			\begin{scope}[on background layer]
				\draw[listptr] (node5) edge[out=30,in=210] (node7);
				\draw[listptr] (node7) edge[out=-30,in=150] (node9);
				\draw[listptr] (node9) edge[out=30,in=210] (node13);
				\draw[listptr] (node13) edge[out=-30,in=150] (node17);
				\draw[listptr] (node17) edge[out=60,in=210] (rootN);
			\end{scope}
		\end{tikzpicture}
	\end{minipage}
	\hspace{-.5mm}
	\raisebox{-18pt}{\bigleadsto}
	\hspace{-2mm}
	\begin{minipage}{2.4cm}
		\begin{tikzpicture}[trees,baseline=(node5.center)]
			\node (root) [] {}
				child {
					node (node7) [treenode] {$7$} edge from parent[draw=none]
						child [treeptr] {
							node (node5) [treenode] {$5$}}
						child [treeptr] { node (node13) [treenode] {$13$} 
							child { node (node9) [treenode] {$9$}}
							child { node (node17) [treenode] {$17$} edge from parent[draw=none]}}
				}
			;
			\node (rootN) at (root) [treenode,xshift=11mm] {$\infty$};
			\draw[treeptr] (rootN) -- (node7);
			\begin{scope}[on background layer]
				\draw[listptr] (node5) edge[out=30,in=210] (node7);
				\draw[listptr] (node7) edge[out=-30,in=150] (node9);
				\draw[listptr] (node9) edge[out=30,in=210] (node13);
				\draw[listptr] (node13) edge[out=-30,in=150] (node17);
				\draw[listptr] (node17) edge[out=60,in=210] (rootN);
			\end{scope}
		\end{tikzpicture}
	\end{minipage}
	\caption{%
		A right rotation of the node storing $13$.
		While the \textcolor{colorTree}{tree layout} changes, the \textcolor{colorList}{logical ordering} remains unaffected.
		\label{fig:lotree:rotation}
	}
\end{wrapfigure}

In a concurrent setting, rotations pose a major challenge.
To avoid performance bottlenecks, one wishes to traverse the tree without synchronization, e.g., without acquiring locks that prevent rotations from happening.
Without synchronization, however, one cannot prevent traversals to \emph{go astray} in the presence of rotations.
In \Cref{fig:lotree:rotation}, if a tree traversal searching for node $5$ arrives at node $13$ and node $13$ experiences the right rotation before the tree traversal continues, then the tree traversal will never reach node $5$ but end up at node $9$.
For the implementation to be linearizable, it must detect this and be able to find node $5$ despite the rotation.

The LO-tree solves the problem by organizing the nodes in a doubly-linked list, the eponymous logical ordering.
In fact, it is this list which dictates the contents of the LO-tree.
The tree structure is merely an overlay to that list which helps to speed up accesses.
In \Cref{fig:lotree:rotation}, the logical ordering \textlistptr contains all nodes in ascending order while the tree overlay \texttreeptr does not yet contain node $17$.
Hence, the previous tree traversal, which arrives at node $9$ on its way to node $5$, can follow the logical ordering backward to find $5$.
Similarly, a tree traversal searching for $17$ arrives at node $13$ and then follows the logical ordering forward to find it.

\smartparagraph{Implementation}
We link the above ideas to the implementation of the LO-tree in \Cref{fig:lotree:impl}~(ignore the \textcolor{teal}{proof outline annotations} for now).
The nodes of the tree are represented by the struct type \code{Node}.
Each node stores an integer $\keysel$ and a Boolean $\marksel$ as well as several pointers and locks.
The $\marksel$ field is used to indicate that the node is being or has been removed from the tree.
For the doubly-linked logical ordering list each node stores a forward $\succsel$ and a backward $\predsel$ pointer.
To synchronize mutations of the list, there is a lock $\listsel$.
For the tree overlay, each node stores pointers $\leftsel$ and $\rightsel$ to its children and a pointer $\parentsel$ to its parent.
Tree mutations are synchronized with a lock $\treesel$.
There are two sentinel nodes $\ptrmin$ resp. $\ptrmax$ storing values $-\infty$ resp. $\infty$.
The initial logical ordering consists of these two nodes.
The root of the tree is $\ptrmax$.

The user-facing API of the LO-tree consists of the three methods of a concurrent search structure: \code{contains}, \code{insert}, and \code{delete}.
The methods return a Boolean indicating success of the operation.
Methods \code{insert} and \code{remove} use fine-grained locking to synchronize mutators.
Both methods rely on the helper method \code{locate($\vk$)} which finds (and locks) the position in the logical ordering to which value $\vk$ belongs.
This position can be thought of as the interval between two successive nodes~$\pprev$ and~$\pnext$, $\succof{\pprev}=\pnext$, so that $\vk$ is logically ordered between the two or in $\pnext$, $\vk\in(\keyof{\pprev},\keyof{\pnext}]$.
To arrive at this location, a straightforward binary tree traversal is used, as implemented by \code{traverse($\vk$)}.
Since the traversal may yield $\pprev$ or $\pnext$ depending on the tree structure, the remaining node is determined using $\predsel$/$\succsel$ of the logical ordering.
To account for the tree traversal going astray due to rotations, \code{locate} validates the found position.
More precisely, it checks for $\vk\in(\keyof{\pprev},\keyof{\pnext}]$ and ensures that $\pprev$ is unmarked, i.e., still part of the logical ordering.
The validation happens after locking $\listsel$ of $\pprev$ so that the position cannot be invalidated by concurrent mutators.

Insertions of value $\vk$ proceed as follows.
They first \code{locate} the position $\pprev,\pnext$ in the logical ordering where $\vk$ should be inserted.
The returned position also reveals whether $\vk$ is already present in the logical ordering.
If so, the insertion fails and returns $\false$.
Otherwise, a new node $\pnode$ is inserted in between $\pprev$ and $\pnext$.
The new node's $\predsel$ and $\succsel$ are pointed to $\pprev$ and $\pnext$, respectively.
Then, $\pnode$ is inserted into the logical ordering.
It is first inserted into the forward ordering by pointing $\succof{\pprev}$ to $\pnode$.
Only after this, it is inserted into the backward ordering by pointing $\predof{\pnext}$ to $\pnode$.
This order deviates from the original version \cite{DBLP:conf/ppopp/DrachslerVY14} for reasons we explain in \Cref{sec:lotree:bugs}.
Finally, $\pnode$ is inserted into the tree by a call to \code{performTreeInsertion($\pnode$, $\pparent$)}.
This call expects the node $\pparent$ that is the parent of $\anode$.
The parent $\pparent$ is determined before $\anode$ is inserted into the logical ordering by \code{prepareTreeInsertion($\pprev$, $\pnext$)}, which does not alter the logical ordering nor the tree but may acquire locks.
We do not got into the details of the tree modifications as they are orthogonal to our linearizability proof.
Finally, $\true$ is returned by \code{insert}.

Deletions of value $\vk$ are similar to insertions.
They \code{locate} the position $\pprev,\pcurr$ where $\vk$ resides.
If $\keyof{\pcurr}\neq\vk$, then $\vk$ is not present and the deletion fails, returning $\false$.
Otherwise, it acquires $\pcurr$'s $\listsel$ and reads $\pcurr$'s  successor $\pnext$.
To remove $\pcurr$, it is marked by setting $\markof{\pcurr}=\true$, unlinked from the backward logical ordering by setting $\predof{\pnext}=\pprev$, and then unlinked from the forward logical ordering by setting $\succof{\pprev}=\pnext$.
Afterwards, $\pcurr$ is removed from the tree using \code{performTreeDeletion($\pcurr$)} which expects \code{prepareTreeDeletion($\pcurr$)} has been called before $\pcurr$ was marked.
Similar to insertions, \code{prepareTreeDeletion} does not alter the logical ordering nor the tree but may acquire locks.
Again, we elide \code{performTreeDeletion} and \code{prepareTreeDeletion} as they are unimportant for our discussion.

Unlike the above mutations, the \code{contains($\vk$)} method is wait-free, in particular it does not acquire locks.
It traverses the tree, follows $\predsel$ pointers, and finally follows $\succsel$ pointers to check whether there is an unmarked node containing $\vk$.
In addition to the original version \cite{DBLP:conf/ppopp/DrachslerVY14}, we need to follow $\predsel$ pointers at least until the first unmarked node to guarantee that $\vk$ is found indeed, see \Cref{sec:lotree:bugs}.


\begin{figure}[p]
	\def\columnWidthLeft{.48\textwidth}
	\def\columnWidthRight{.48\textwidth}
	\def\columnLeftSkip{6.6pt}
	\def\columnRightSkip{6pt}
	\newcommand{\MYMSTAR}{{}\mstar{}}
	\newcommand{\MYMAND}{{}\mand{}}
	\newcommand{\mkNL}{\\&\hspace{1.2em}}
	\newcommand{\travInvOf}[1]{\inv(\abscontent,\somenodes\cup\anodeval) \MYMSTAR #1=\anodeval}
	\newcommand{\succInvOf}[1]{\succInv(\abscontent,\abscontentp\!\!,\somenodes,\somenodesp,#1,\anodeval)}
	\lstdefinelanguage{xcompactSPL}{language=compactSPL,numbersep=3pt,xleftmargin=1.05em}
%
%
\begin{lstlisting}[language=xcompactSPL,numbers=none,xleftmargin=.25em,belowskip=-2pt]
struct Node { int key; bool mark; Lock treeLock, listLock; Node* left, right, parent, pred, succ;$\;$}
val min = new Node { key = -$\infty$; mark = false; }; val max = new Node { key = $\infty$; mark = false; }
min.pred, min.succ := max, max; max.pred, max.succ := min, min
\end{lstlisting}%
	\begin{minipage}[t]{\columnWidthLeft}
%
%
\begin{lstlisting}[language=xcompactSPL,numbers=none,xleftmargin=.25em,belowskip=\columnLeftSkip]
$\makeTeal{\begin{aligned}[t]&
	\locateInv(\abscontent,\somenodes,\pprev,\pnext) \defeq
		\inv(\abscontent,\somenodes\prall{\cup}\pprev\prall{\cup}\pnext) \MYMSTAR \keyvarof{\pprev}\prall{<}\vk\prall{\leq}\keyvarof{\pnext}
		\mkNL
		\MYMSTAR \holds{\pprev} \MYMSTAR \vk\prall{\in}\keysetof{\pnext} \MYMSTAR
		\pnext=\succvarof{\pprev} \MYMSTAR \neg\markvarof{\pprev}
\end{aligned}}$
$\makeTeal{\begin{aligned}[t]&
	\linkInv(\abscontent,\somenodes,\pprev,\pcurr,\pnext) \defeq
		\inv(\abscontent,\somenodes\cup\pprev\cup\pcurr\cup\pnext) \MYMSTAR \insetof{\pprev}\neq\emptyset
		\mkNL\MYMSTAR
		\holds{\pprev} \MYMSTAR \holds{\pcurr} \MYMSTAR \pcurr=\succvarof{\pprev} \MYMSTAR \pnext=\succvarof{\pcurr}
		\mkNL\MYMSTAR
		\neg\markvarof{\pprev} \MYMSTAR \keyvarof{\pprev}\prall{<}\vk\prall{=}\keyvarof{\pcurr}\prall{<}\keyvarof{\pnext}
\end{aligned}}$
$\makeTeal{\begin{aligned}[t]&
	\succInv(\abscontent,\abscontentp\!,\somenodes,\somenodesp,\anode,\anodeval) \defeq
		\inv(\abscontent,\somenodes\cup\anodeval) \MYMSTAR \anode=\anodeval
		\mkNL\MYMAND
		\pastOF{ \old{\inv(\abscontentp\!\!,\somenodesp\cup\anodeval)} \MYMSTAR \vk\in\old{\insetof{\anodeval}} }
\end{aligned}}$
\end{lstlisting}
%
%
\begin{lstlisting}[language=xcompactSPL,belowskip=\columnLeftSkip]
$\ANNOT{\exists \abscontent,\somenodes.~ \inv(\abscontent,\somenodes) }$
method `traverse'($\vk$: Int): Node {
	val $\pcurr$ = max
	while (true) { $\ANNOT{ \inv(\abscontent,\somenodes\cup\pcurr) }$
		val $\pchild$ = $\vk$ < $\pcurr$.key ? $\pcurr$.left : $\pcurr$.right
		if ($\pcurr$.key == $\vk$ || $\pchild$ == NULL) return $\pcurr$
		$\ANNOT{ \inv(\abscontent,\somenodes\cup\pchild) }$ $\pcurr$ := $\pchild$
}	}
$\ANNOT{ \pcurr.~\exists\abscontent,\somenodes.~ \inv(\abscontent,\somenodes\cup\pcurr) }$
\end{lstlisting}
%
%
\begin{lstlisting}[language=xcompactSPL,belowskip=\columnLeftSkip]
$\ANNOT{\exists \abscontent,\somenodes.~ \OBLc{\vk} \MYMSTAR \inv(\somenodes) \MYMSTAR -\infty < \vk < \infty }$
method `contains'($\vk$: Int): Bool {
	val $\pcurr$ = traverse($\vk$)   $\label{code:lotree:contains:treetraversal}$
	$\ANNOT{ \OBLc{\vk} \MYMSTAR \travInvOf{\pcurr} }$   $\label{code:lotree:contains:treetraversal-post}$
	while ($\vk$ < $\pcurr$.key) { val $\pprev$ = $\pcurr$.pred; $\pcurr$ := $\pprev$ } $\label{code:lotree:contains:go-pred}$
	$\ANNOT{ \OBLc{\vk} \MYMSTAR \travInvOf{\pcurr} \MYMSTAR \keyvarof{\anodeval}\leq\vk }$ $\label{code:lotree:contains:go-pred-post}$ |<
	while ($\makeBug{\pcurr}$.mark) { val $\makeBug{\pprev}$ = $\makeBug{\pcurr}$.pred; $\makeBug{\pcurr}$ := $\makeBug{\pprev}$ } >| $\label{code:lotree:contains:fix}\label{code:lotree:fix-contains}$
	$\ANNOT{ \OBLc{\vk} \MYMSTAR \succInvOf{\pcurr} }$   $\label{code:lotree:contains:go-succ-pre}$
	while ($\pcurr$.key < $\vk$) { val $\pnext$ = $\pcurr$.succ; $\pcurr$ := $\pnext$ } $\label{code:lotree:contains:go-succ}$
	$\ANNOT{ \OBLc{\vk} \MYMSTAR \succInvOf{\pcurr} \MYMSTAR \vk\leq\keyvarof{\anodeval} }$   $\label{code:lotree:contains:go-succ-post}$
	val $\vres$ = $\pcurr$.key == $\vk$ @\makeBug{\begingroup\def\ULthickness{.75pt}\sout{\textcolor{black}{\&\& !$\pcurr$.mark}}\endgroup}@ $\label{code:lotree:obsolete-check}\label{code:lotree:contains:validate}$
	$\ANNOTML{
		& \OBLc{\vk} \MYMSTAR \inv(\abscontent,\somenodes) \MYMSTAR \vres=\avalval 
		\\& \MYMSTAR 
		\pastOf{
			\old{\inv(\abscontentp\!\!,\somenodesp)} \MYMSTAR
			\avalval \Leftrightarrow \vk\prall{\in}\abscontentp
		}
	}$   $\label{code:lotree:contains:validate-annot}$
	$\ANNOT{ \FULc{\vres} \MYMSTAR \inv(\abscontent,\somenodes) }$ // hindsight   $\label{code:lotree:contains:final}$
	return $\vres$   $\label{code:lotree:contains:return}$
}
$\ANNOT{\vres.~ \exists \abscontent,\somenodesp.~ \FULc{\vres} \MYMSTAR \inv(\abscontent,\somenodes) }$
\end{lstlisting}
%
%
\begin{lstlisting}[language=xcompactSPL,belowskip=0pt]
$\ANNOT{\exists \abscontent,\somenodes.~ \inv(\abscontent,\somenodes) \MYMSTAR -\infty < \vk < \infty }$
method `locate'($\vk$: Int): Node * Node {
	val $\pcurr$ = traverse($\vk$)   $\label{code:lotree:locate:tree}$
	val $\pprev$ = $\pcurr$.key < $\vk$ ? $\pcurr$ : $\pcurr$.pred   $\label{code:lotree:locate:prev}$
	lock($\pprev$.listLock)   $\label{code:lotree:locate:lock}$
	val $\pnext$ = $\pprev$.succ   $\label{code:lotree:locate:next}$
	$\ANNOT{ \inv(\abscontent,\somenodes\cup\pprev\cup\pnext) \MYMSTAR \holds{\pprev} \MYMSTAR \pnext=\succvarof{\pprev} }$   $\label{code:lotree:locate:annot}$
	if ($\pprev$.key$\;$<$\; \vk \;$<=$\;\pnext$.key && !$\pprev$.mark) return $\pprev$,$\;\pnext$   $\label{code:lotree:locate:return}$
	unlock($\pprev$.listLock); restart   $\label{code:lotree:locate:restart}$
}
$\ANNOT{ \pprev,\pnext.~\exists\abscontent,\somenodes.~ \locateInv(\abscontent,\somenodes,\pprev,\pnext) }$   $\label{code:lotree:locate:post}$
\end{lstlisting}
	\end{minipage}
	\hfill
	\begin{minipage}[t]{\columnWidthRight}
%
%
\begin{lstlisting}[language=xcompactSPL,belowskip=\columnRightSkip]
$\ANNOT{\exists \abscontent,\somenodes.~ \OBLi{\vk} \MYMSTAR \inv(\abscontent,\somenodes) \MYMSTAR -\infty < \vk < \infty }$
method `insert'($\vk$: Int): Bool {
	val $\pprev$, $\pnext$ = locate($\vk$)   $\label{code:lotree:insert:locate}$
	if ($\pnext$.key == $\vk$) {   $\label{code:lotree:insert:check}$
		$\ANNOT{ \OBLi{\vk} \MYMSTAR \locateInv(\abscontent,\somenodes,\pprev,\pnext) \MYMSTAR \vk\in\abscontent }$
		$\ANNOT{ \FULi{\false} \MYMSTAR \locateInv(\abscontent,\somenodes,\pprev,\pnext) }$
		unlock($\pprev$.listLock); return false
	}
	$\ANNOT{ \OBLi{\vk} \MYMSTAR \locateInv(\abscontent,\somenodes,\pprev,\pnext) \MYMSTAR \keyvarof{\pnext}\neq\vk\notin\abscontent }$ $\label{code:lotree:insert:pre-preparetree}$
	val $\pparent$ = prepareTreeInsertion($\pprev$, $\pnext$)   $\label{code:lotree:insert:preparetree}$
	val $\pnode$ = new Node { key := $\vk$; mark := false;    $\label{code:lotree:insert:malloc}$
	        pred := $\pprev$; succ := $\pnext$; parent := $\pparent$ }
	$\ANNOTML{ &\OBLi{\vk} \MYMSTAR \locateInv(\abscontent,\somenodes,\pprev,\pnext) \MYMSTAR \hiddenIns{\pprev,\pnext} \\& \MYMSTAR \keyvarof{\pnode}=\vk\notin\abscontent }$ |<
	$\makeBug{\pprev}$.succ := $\makeBug{\pnode}$ >| // logical insertion   $\label{code:lotree:fix-insert-succ}$ $\label{code:lotree:insert:linksucc}$
	$\ANNOTML{ &\FULi{\true} \MYMSTAR \locateInv(\abscontent,\somenodes,\pprev,\pnode) \MYMSTAR \hiddenIns{\pprev,\pnext} \MYMSTAR \\& \keyvarof{\pnode}=\vk<\keyvarof{\pnext} }$ |<
	$\makeBug{\pnext}$.pred := $\makeBug{\pnode}$ >|   $\label{code:lotree:fix-insert-pred}$ $\label{code:lotree:insert:linkpred}$
	unlock($\pprev$.listLock)
	$\ANNOT{ \FULi{\true} \MYMSTAR \inv(\abscontent,\somenodes) \MYMSTAR \hiddenIns{\pprev,\pnext} }$
	performTreeInsertion($\pnode$, $\pparent$)$\label{code:lotree:insertIntoTree}$; return true
}
$\ANNOT{\vres.~ \exists\abscontent,\somenodesp.~ \FULi{\vres} \MYMSTAR \inv(\abscontent,\somenodes) }$
\end{lstlisting}
%
%
\begin{lstlisting}[language=xcompactSPL,belowskip=0pt]
$\ANNOT{\exists \abscontent,\somenodes.~ \OBLd{\vk} \MYMSTAR \inv(\abscontent,\somenodes) \MYMSTAR -\infty < \vk < \infty }$
method `delete'($\vk$: Int): Bool {
	val $\pprev$, $\pcurr$ = locate($\vk$)
	if ($\pcurr$.key != $\vk$) {   $\label{code:lotree:delete:check}$
		$\ANNOT{ \OBLd{\vk} \MYMSTAR \locateInv(\abscontent,\somenodes,\pprev,\pcurr) \MYMSTAR \vk\notin\abscontent }$
		$\ANNOT{ \FULd{\false} \MYMSTAR \locateInv(\abscontent,\somenodes,\pprev,\pcurr) }$
		unlock($\pprev$.listLock); return false
	}
	lock($\pcurr$.listLock)   $\label{code:lotree:delete:lock}$
	prepareTreeDeletion($\pcurr$)   $\label{code:lotree:delete:preparetree}$
	val $\pnext$ = $\pcurr$.succ
	$\ANNOT{ \OBLd{\vk} \MYMSTAR \linkInv(\abscontent,\somenodes,\pprev,\pcurr,\pnext) \MYMSTAR \vk\prall{\in}\abscontent \MYMSTAR \hiddenDel{\pcurr} }$
	$\pcurr$.mark := true   $\label{code:lotree:delete:mark}$
	$\pnext$.pred := $\pprev$   $\label{code:lotree:delete:unlinkpred}$
	$\ANNOT{ \OBLd{\vk} \MYMSTAR \linkInv(\abscontent,\somenodes,\pprev,\pcurr,\pnext) \MYMSTAR \vk\prall{\in}\abscontent \MYMSTAR \hiddenDel{\pcurr} }$
	$\pprev$.succ := $\pnext$ // logical deletion   $\label{code:lotree:delete:unlinksucc}$
	$\ANNOTML{ &\FULd{\true} \MYMSTAR \linkInv(\abscontent,\somenodes,\pprev,\pcurr,\pnext) \MYMSTAR \vk\notin\abscontent \\& \MYMSTAR \hiddenDel{\pcurr} }$
	unlock($\pcurr$.listLock); unlock($\pprev$.listLock)
	$\ANNOT{ \FULi{\true} \MYMSTAR \inv(\abscontent,\somenodes\cup\pcurr) \MYMSTAR \hiddenDel{\pcurr} }$
	performTreeDeletion($\pcurr$)$\label{code:lotree:removeFromTree}$; return true
}
$\ANNOT{\vres.~ \exists\abscontent,\somenodesp.~ \FULi{\vres} \MYMSTAR \inv(\abscontent,\somenodes\cup\pnode) }$
\end{lstlisting}
	\end{minipage}%
	\vspace{-2mm}
	\caption{%
		Implementation, \makeBug{bug fixes}, and \textcolor{teal}{linearizability proof outline} of the LO-tree~\cite{DBLP:conf/ppopp/DrachslerVY14}.
		The proof of \code{contains} requires hindsight reasoning to handle the future-dependent linearization point, see~\Cref{sec:lotree:proof-contains}.%
		\label{fig:lotree:impl}
	}
\end{figure}


\subsection{Bugs and their Fixes}
\label{sec:lotree:bugs}

The original version of the LO-tree \cite{DBLP:conf/ppopp/DrachslerVY14} has two bugs which we fixed in \Cref{fig:lotree:impl}.
\nontechreport{Due to space constraints, we refer to the technical report \cite{techreport} for more details.}
\techreport{See \Cref{appendix:lo-bugs-full} for more details.}

\smartparagraph{Bug 1: Duplicate Values}
A subtle quirk of the LO-tree is the fact that an insertion of value $\vk$ may be unaware of a concurrent deletion of $\vk$ because the tree traversal of the insertion experienced a rotation but still ended up in the right position for the insertion (the validation in \code{locate} succeeds).
Successful validation requires that the deletion already removed $\vk$ from the logical ordering.
So, the insertion can proceed and insert $\vk$ into the logical ordering and into the tree.
If the deletion has not yet removed the old marked version of $\vk$, then the tree contains two nodes with value $\vk$ that disagree on the mark bit.
Hence, rotations influence the result of \code{contains($\vk$)}---it is not linearizable.

Our implementation from \Cref{fig:lotree:impl} fixes the above problem by adding \Cref{code:lotree:fix-contains}: the logical ordering is followed backward ($\predsel$ fields) at least until an unmarked node is encountered.
This ensures that the final result is not \emph{confused} by concurrent deletions.
Other than that \code{contains} proceeds as originally devised by \citet{DBLP:conf/ppopp/DrachslerVY14}.
Interestingly, adding \Cref{code:lotree:fix-contains} renders the mark bit check on \Cref{code:lotree:obsolete-check} superfluous.

\smartparagraph{Bug 2: Insertion Order}
\looseness=-1
\citet{DBLP:journals/pacmpl/FeldmanKE0NRS20} identified another bug in the \code{insert} method.
In the original version \cite{DBLP:conf/ppopp/DrachslerVY14}, new nodes are inserted first into the backward logical ordering and then into the forward one (compared to \Cref{fig:lotree:impl}, \Cref{code:lotree:fix-insert-succ,code:lotree:fix-insert-pred} are reversed).
To see why this is problematic, assume an insertion of a new node $\pnode$ with value $\vk$ between nodes $\pprev$ and $\pnext$ already linked $\predof{\pnext}$ to $\pnode$ but $\succof{\pprev}$ is still pointing to $\pnext$.
Then, \code{contains($\vk$)} will find $\pnode$ only if the tree traversal takes it to nodes that appear after $\pnext$ in the logical order.
For earlier nodes, \code{contains} will only follow $\succsel$ fields which cannot yet reach $\pnode$.
It is easy to see that this violates linearizability.

We fixed this bug by changing the order in which $\pnode$ is linked into the logical ordering, cf. \Cref{code:lotree:fix-insert-succ,code:lotree:fix-insert-pred}.
\citet{DBLP:journals/pacmpl/FeldmanKE0NRS20} apply the same fix.\techreport{\footnote{%
	The code they give \cite[Figure 2]{DBLP:journals/pacmpl/FeldmanKE0NRS20} contains the erroneous linking order.
	Their proof arguments \cite[Case (i) on Page 18]{DBLP:journals/pacmpl/FeldmanKE0NRS20}, however, suggests that this is an oversight and is meant to be the correct linking order.
	This has been confirmed by one of the authors.
}}
However, they also change \code{insert} to link new nodes first into the tree overlay and then into the logical ordering (without modifying \code{contains}).
This violates linearizability:
if a new node $\pnode$ with value $\vk$ is inserted into the tree but not yet into the logical ordering, \code{contains} will find $\vk$ if and only if it is not affected by concurrent rotations.\footnote{%
	This is a mistake in the proof of the LO-tree by \citet{DBLP:journals/pacmpl/FeldmanKE0NRS20}. We do not make any claims regarding the soundness of their meta theory.
}


\subsection{Local Reasoning Principle}
\label{sec:lotree:locality}

\techreport{\smartparagraph{Local Reasoning}}
While our program logic from \Cref{sec:linearizability} tells us how to establish linearizability, it leaves us with a hard task: show that a command does or does not alter the contents of the structure.
The contents is defined inductively over the data structure graph.
To localize the reasoning about this inductive quantity, we build on the keyset framework \cite{DBLP:journals/tods/ShashaG88,DBLP:conf/pldi/KrishnaPSW20,DBLP:series/synthesis/2021Krishna}.

Suppose the global data structure graph consists of a set of nodes $\somenodes$.
We will define a predicate $\inv(\abscontent,\abskeyset,\somenodes,\somenodesp)$ that describes the resources and properties of a subregion $\somenodesp \subseteq \somenodes$ in the graph.
Here, $\abscontent$ will be the \emph{logical contents} of the subregion, which is the union of the logical contents $\contentsof{\anode}$ of all nodes $\anode \in \somenodesp$.
The set $\abskeyset$ is the \emph{keyset} of the region $\somenodesp$, which consists of all those keys that \emph{could be} in $M$.
We require the invariant to guarantee $\abscontent \subseteq \abskeyset$.
The keyset will be defined inductively over the graph structure as we explain below.
We then define the invariant $\acss(\abscontent)$ of the entire structure as follows:
$\acss(\abscontent) \defeq \exists\somenodes.~\inv(\abscontent,(-\infty,\infty),\somenodes,\somenodes)$.

To enable local reasoning, we aim for a definition of $\inv$ that yields the following compositionality: \[
	\inv(\abscontent,\abskeyset,\somenodes,\somenodesp\uplus\somenodesp')
	~\iff~
	\exists\,\abscontent_1,\abscontent_2,\abskeyset_1,\abskeyset_2.~\,
		\begin{aligned}[t]
			&\inv(\abscontent_1,\abskeyset_2,\somenodes,\somenodesp)\MSTAR\inv(\abscontent_2,\abskeyset_2,\somenodes,\somenodesp')
			\MSTAR{}\\
			&\abscontent=\abscontent_1\uplus\abscontent_2 \MSTAR \abskeyset=\abskeyset_1\uplus\abskeyset_2
	\ .
		\end{aligned}
\]
That is, the predicate allows us to decompose the graph arbitrarily into disjoint subregions $M$ and $M'$ and compose them back together. In particular, separating conjunction will guarantee that the keysets (and hence the logical contents) of disjoint subregions will also be disjoint.

For proofs, this means that we can focus our reasoning on appropriate fragments $\inv(\abscontent,\abskeyset,\somenodes,\somenodesp)$ with a small set $\somenodesp$.
When reasoning about updates we can focus on the fragment $\somenodesp$ that contains only those nodes whose fields or keysets change.
As we will see, three nodes will suffice to handle the LO-tree.
Also, $\inv$ enables a local-to-global lifting of the specification $\acssup(\abscontent,\abscontentp,\vk,v)$ of our search structure methods.
For example, if we have identified a fragment of the form $\inv(\abscontent,\abskeyset,\somenodes,\set{\anode})$ with $\vk\in\abskeyset$, then $\vk\in\abscontent=\contentsof{\anode}$ iff $\vk$ is in the logical contents of the entire structure.

\smartparagraph{Flows}
To obtain a definition of $\inv$ with the desired properties, we build on the flow framework~\cite{DBLP:journals/pacmpl/KrishnaSW18,DBLP:conf/esop/KrishnaSW20} which enables local reasoning about inductively-defined quantities of graphs. We sketch the main ideas for our specific application of the flow framework to keysets.

Each node is augmented with a ghost quantity called \emph{inset}.
Intuitively, the inset of a node $\anode$ is the set $\insetof{\anode}$ of all keys $\vk$, such that a thread searching for $\vk$ will traverse $\anode$.
That $\anode$ is traversed means that the search eventually considers $\anode$; the search may or may not continue from there.
The keyset $\keysetof{\anode}$ of $\anode$ is the subset of $\insetof{\anode}$ for which the traversal will terminate at $\anode$.
For the LO-tree, the inset is $\insetof{\ptrroot}=[-\infty,\infty]$ for the root node of the logical ordering and for the remaining nodes it is obtained as a solution to the following recursive equation:
\[
	\insetof{\anode} ~~=~~ \mbox{\Large\ensuremath{\bigcup_\anodep}} ~~ \ite{\succof{\anodep}=\anode\;}{\;\insetof{\anodep} \cap (\keyof{\anodep},+\infty]}{\emptyset}
	\ .
\]
We then define $\keysetof{\anode}=\insetof{\anode}\cap[-\infty,\keyof{\anode}]$.
The inset propagates via $\succsel$ links only, because it is the list of $\succsel$ links that makes up the logical contents of the LO-tree, as alluded to in \Cref{sec:lotree:overview}.
With this, we formally express the logical contents of node $\anode$ by $\contentsof{\anode}\defeq\set{\keyof{\anode}}\cap\keysetof{\anode}$.

\looseness=-1
To express insets in a separation algebra, the flow framework adds an additional ghost resource component.
The technical details are not relevant for our discussion.
In our proofs, we use the separation algebras proposed by \citet{DBLP:journals/pacmpl/MeyerWW22} and defer the interested reader there.
What is important here, is that the above definitions guarantee that the keysets of subregions are always disjoint.


\subsection{The Structural Invariant}
\label{sec:lotree:invariant}
We use standard separation logic assertions to represent the semantic predicates used so far.
\techreport{%
	In particular, we use boxed assertions $\boxed{A}$ to denote that $A$ is interpreted in the shared rather than the local state \cite{DBLP:conf/concur/VafeiadisP07,DBLP:phd/ethos/Vafeiadis08}.
	Moreover, we use fractional permissions~\cite{DBLP:conf/sas/Boyland03} for points-to predicates~$\fracto{1}{n}$ to allow reads but prevent interfering updates to lock-protected resources.
	We also use persistent points-to~\cite{DBLP:conf/cpp/VindumB21} predicates~$\persto$ to easily share knowledge about immutable fields.
}\nontechreport{%
	In particular, we use
	\begin{inparaenum}[(i)]
	 	\item boxed assertions $\boxed{A}$ to denote that $A$ refers to shared state \cite{DBLP:conf/concur/VafeiadisP07,DBLP:phd/ethos/Vafeiadis08},
	 	\item fractional permissions~\cite{DBLP:conf/sas/Boyland03} for points-to predicates~$\fracto{1}{n}$ to allow reads but prevent interfering updates to lock-protected resources, and
		\item persistent points-to~\cite{DBLP:conf/cpp/VindumB21} predicates~$\persto$ to easily share knowledge about immutable fields.
	\end{inparaenum}
}

We define a predicate $\nodeof{\anode}$ for the shared resources of a node $\anode$.
For simplicity, we assume that proofs are implicitly existentially closed.
This enables the naming convention where a use of $f(\anode)$ in the outer proof context refers to the value of field \code{f} as defined within $\nodeof{\anode}$.
We define:
\begin{gather*}
	\nodeof{\anode} \defeq\, \boxed{\:\begin{aligned}[t]
		\strut&\keyof{\anode}~\persto~\keyvarof{\anode} \MSTAR
		\selof{\predvarof{\anode}}{\keysel}~\persto~\keyvarof{\predvarof{\anode}} \MSTAR
		\selof{\succvarof{\anode}}{\keysel}~\persto~\keyvarof{\succvarof{\anode}}
		\\\MSTAR&
		\predof{\anode}~\fracto~\predvarof{\anode} \MSTAR
		\succof{\anode}~\fracto{2}~\succvarof{\anode} \MSTAR
		\markof{\anode}~\fracto{2}~\markvarof{\anode} \MSTAR
		\selof{\anode}{\inflowfld}~\fracto~\inflowvarof{\anode}
		\\\MSTAR&
		\listlockof{\anode}~\fracto{2}~\listvarof{\anode} \MSTAR
		(\listvarof{\anode}=0\sepimp\glnodeof{\anode}) \MSTAR 
		\treelockof{\anode}~\fracto~\treevarof{\anode}
		\\\MSTAR&
		\leftof{\anode}~\fracto~\leftvarof{\anode} \MSTAR
		\rightof{\anode}~\fracto~\rightvarof{\anode} \MSTAR
		\parentof{\anode}~\fracto~\parentvarof{\anode}
	\end{aligned}}
	\\[1pt]
	\glnodeof{\anode} \;\defeq~
		\listlockof{\anode}~\fracto{2}~\listvarof{\anode} \MSTAR
		\succof{\anode}~\fracto{2}~\succvarof{\anode} \MSTAR
		\markof{\anode}~\fracto{2}~\markvarof{\anode} 
\end{gather*}
Field $\inflowfld$ is the ghost field storing the node's inflow (cf. \Cref{sec:lotree:locality}).
We use fractional permissions for the fields $\listsel$, $\succsel$, and $\marksel$.
The $\listsel$ protects $\succsel$ which is why $\nodeof{\anode}$ has a full permission for $\succsel$ only if $\listsel$ is unlocked.
Otherwise, there is half a permission, the other half is transferred to the local state of the locking thread.
The setup for $\marksel$ is similar.

As noted above, the lock protects the resources $\glnodeof{\anode}$ whose ownership is transferred from the shared state to the local state of the thread acquiring the lock.
To make this precise, we define \(
	\holds{\anode} ~\defeq~ \listlockof{\anode}~\fracto{2}~1 \MSTAR \glnodeof{\anode}
\) and obtain the following behavior of locks: 
\begin{align*}
	\hoareof{\nodeof{\anode}}{~~\,\text{\code{lock($\anode$.listLock)}}~~\,}{\nodeof{\anode}\mstar\holds{\anode}}
	\\\text{and}\qquad
	\hoareof{\nodeof{\anode}\mstar\holds{\anode}}{~\text{\code{unlock($\anode$.listLock)}}~}{\nodeof{\anode}}
\end{align*}
For the first Hoare triple, note that its pre condition does not require $\listlockof{\anode}$ to be unlocked, $\listvarof{\anode}=0$.
This is established by \code{lock} as it blocks until $\listlockof{\anode}$ can be acquired.
The post condition realizes the ownership transfer: $\holds{\anode}$ contains the protected resources $\glnodeof{\anode}$ in the local state while maintaining the node's shared resources $\nodeof{\anode}$.

With the resources of individual nodes set up, we are ready to state the invariant of the LO-tree:
{\newcommand{\MKNL}{\\&\hspace{-1.5cm}\mstar~}%
\begin{align*}
	\inv(\abscontent,\abskeyset,\somenodes,\somenodesp)
		&~\defeq~~
		\sharedinv(\abscontent,\abskeyset,\somenodes,\somenodesp)
		~\MSTAR~
		{\bigmstar}_{\anode\in\somenodesp}~\nodeof{\anode}\MSTAR\nodeinv(\somenodes,\somenodesp,\anode)
	\\
	\sharedinv(\abscontent,\abskeyset,\somenodes,\somenodesp)
		&~\defeq~~
		\ptrmin,\ptrmax\in\somenodes
		~\MSTAR~
		\nullptr\notin\somenodes
		~\MSTAR~
		\somenodesp\subseteq\somenodes
		~\MSTAR~
		\abscontent=\contentsof{\somenodesp}
		~\MSTAR~
		\abskeyset=\keysetof{\somenodesp}
	\\
	\nodeinv(\somenodes,\somenodesp,\anode)
		&~\defeq~~
		\contentsof{\anode}\subseteq\keysetof{\anode}
		\MSTAR
		\predvarof{\anode},\succvarof{\anode}\in\somenodes
		\MSTAR
		\leftvarof{\anode},\rightvarof{\anode}\in\somenodes\prall{\cup}\setcompact{\nullptr}
		\tag{I1}
		\label{eq:lotree:inv:contents-closed}
		\MKNL
		\bigl(\anode=\ptrmin \Rightarrow \neg\markvarof{\anode} \mstar \keyvarof{\anode}=-\infty\bigr)
		\MSTAR
		\bigl(\anode=\ptrmax \Rightarrow \neg\markvarof{\anode} \mstar \keyvarof{\anode}=\infty\bigr)
		\tag{I2}
		\label{eq:lotree:inv:minmax}
		\MKNL
		\bigl(\neg\markvarof{\anode} \Rightarrow \insetof{\anode}\neq\emptyset\bigr)
		\MSTAR
		\bigl(\insetof{\anode}\neq\emptyset \Rightarrow [\keyvarof{\anode},\infty]\subseteq\insetof{\anode}\bigr)
		\MSTAR
		\isonflowpath{\anode}
		\tag{I3}
		\label{eq:lotree:inv:flow}
		\MKNL
		\keyvarof{\predvarof{\anode}}<\keyvarof{\anode}<\keyvarof{\succvarof{\anode}}
		\tag{I4}
		\label{eq:lotree:inv:sorted}
\end{align*}}%
The invariant follows the form and satisfies the properties laid out in \Cref{sec:linearizability,sec:lotree:locality}.
Its main part is the node-$\anode$-local invariant $\nodeinv$, which restricts the resources held by the overall invariant $\inv$.
The properties are as follows.
\begin{inparaitem}
	\item[{\eqref{eq:lotree:inv:contents-closed}}]
		The contents of a node are governed by its keyset.
		Moreover, the invariant is closed under following pointer fields of $\anode$.
		Observe that we require the overall invariant containing full $\somenodes$ to be closed, not the fragment comprising $\somenodesp$.
	\item[{\eqref{eq:lotree:inv:minmax}}]
		Nodes $\ptrmin$ resp. $\ptrmax$ are unmarked and store values $-\infty$ resp. $\infty$.
	\item[{\eqref{eq:lotree:inv:flow}}]
		Unmarked nodes have a non-empty inset which contains all values greater or equal to the node's own value.
		Moreover, nodes receive inset from at most one node, meaning that the $\succsel$ list between $\ptrmin$ and $\ptrmax$ is a path.
		The abstract predicate $\isonflowpath{\anode}$ can be expressed using flows.
	\item[{\eqref{eq:lotree:inv:sorted}}]
		Nodes are sorted in the sense that a node's predecessor (successor) stores a lesser (greater) key.
\end{inparaitem}
It is worth pointing out that \eqref{eq:lotree:inv:sorted} is a node-$\anode$-local property indeed, because $\nodeof{\anode}$ holds the required resources.

We may simply write $\inv(\abscontent,\somenodesp)$ instead of $\inv(\abscontent,\abskeyset,\somenodes,\somenodesp)$ if $\somenodes$ is clear from the context.


\subsection{Proof Outline}
\label{sec:lotree:proof-outline}

The proof outline can be found in \Cref{fig:lotree:impl}.
While the proof for \code{insert} and \code{delete} requires mostly standard reasoning, it reveals the interference that other threads are subjected to.
The hindsight reasoning for method \code{contains} is performed relative to this interference.

\looseness=-1
Using our proof system $\semcalclin$, we give a proof \emph{template} of the LO-tree:
we do not make any assumptions about the operations manipulating the tree overlay other than them being memory-safe.

\subsubsection{Locating Nodes}
Recall from \Cref{sec:lotree:overview} that \code{insert} and \code{delete} use the helper \code{locate} to find the position $\pprev,\pnext$ to which a given key $\vk$ belongs.
Node $\pprev$ is the result of a tree traversal, \Cref{code:lotree:locate:tree}.
Since we elide the mechanics of the tree overlay, we only know that the resulting pointer is non-$\nullptr$---this little information suffices.
Next, $\pprev$ is locked, \Cref{code:lotree:locate:lock}.
This provides us with the protected resources, $\glnodeof{\pprev}$. 
They guarantee that $\succof{\pprev}$ and $\markof{\pprev}$ cannot change due to interference.
Reading $\succof{\pprev}$, \Cref{code:lotree:locate:next}, binds $\pnext$ to $\succvarof{\pprev}$.
Hence, the validation of position $\pprev,\pnext$ on \Cref{code:lotree:locate:return} results in the interference-free knowledge that $\pprev$ is unmarked, $\pnext$ is the successor of $\pprev$, and that $\vk$ indeed belongs in-between $\pprev$ and $\pnext$, $\vk\in(\keyvarof{\pprev},\keyvarof{\pnext}]$.
This together with the obtained resources forms the predicate $\locateInv(\somenodes,\pprev,\pnext)$, formally defined in \Cref{fig:lotree:impl}, and is the post condition of \code{locate} on \Cref{code:lotree:locate:post}.
Later, we will use the fact that $\locateInv(\somenodes,\pprev,\pnext)$ implies $\vk\in\keysetof{\pnext}$.
To see this, invoke invariant~\eqref{eq:lotree:inv:flow} for the unmarked $\pprev$.
We get $[\keyvarof{\pprev},\infty]\subseteq\insetof{\pprev}$.
The keys $(\keyvarof{\pprev},\infty]$ distributes via $\succof{\pprev}$ as inset to $\pnext$ according to \Cref{sec:lotree:locality}.
Hence, $\vk\in\keysetof{\pnext}=(\keyvarof{\pprev},\keyvarof{\pnext}]$.

\subsubsection{Insertions}
An Insertion of key $\vk$ first calls \code{locate} to find the position $\pprev,\pnext$ to which $\vk$ belongs.
The position reveals if $\vk$ is already contained because $\vk\in\keysetof{\pnext}$ as inferred above.
If $\keyvarof{\pnext}=\vk$, then $\vk\in\contentsof{\pnext}$ and thus $\vk\in\abscontent$.
That is, if the conditional in \Cref{code:lotree:insert:check} succeeds, the specification of an unsuccessful insertion is met.
We trade the obligation $\OBLi{\vk}$ for the receipt $\FULi{\false}$.


\begin{figure}
	\centering
	\definecolor{colorMyInsDelFlow}{RGB}{40,40,160}
	\definecolor{colorMyInsDelKeyset}{RGB}{40,140,20}
	\newcommand{\colFlow}{\color{colorMyInsDelFlow}}
	\newcommand{\colKeyset}{\color{colorMyInsDelKeyset}}
	\begin{tikzpicture}[trees]
		\begin{scope}[shift={(0,0)}]
			\node (x) [treenode] at (0,0) {$\pprev$};
			\node (n) [treenode] at (.75,-.75) {$\pnode$};
			\node (z) [treenode] at (1.5,0) {$\pnext$};
			\draw (x.20) edge[listptr,solid,->] node[interval]{\colFlow$(\keyof{\pprev},\infty]$} (z.160);
			\draw (z.200) edge[listptr,->] (x.-20);
			\draw (n.120) edge[listptr,->] (x.-45);
			\draw (n) edge[listptr,solid,->] (z);
			\draw (x.160) edge[listptr,solid,<-] node[interval]{\colFlow$[\keyof{\pprev},\infty]$} ++(-.75,0);
			\draw (x.200) edge[listptr,->] ++(-.75,0);
			\draw (z.20) edge[listptr,solid,->] node[interval]{\colFlow$(\keyof{\pnext},\infty]$} ++(.75,0);
			\draw (z.-20) edge[listptr,<-] ++(.75,0);
			\node[interval] at (x.north east) {\colKeyset$[\keyof{\pprev},\keyof{\pprev}]$};
			\node[interval] at (z.north east) {\colKeyset$(\keyof{\pprev},\keyof{\pnext}]$};
			\node[hinterval,anchor=north west,inner sep=0pt] at (n.south east) {\colKeyset$\emptyset$};
		\end{scope}

		\node (lta) at (3.3,.2) {\bigleadsto};
		\node[yshift=8pt] at (lta) {\footnotesize\Cref{code:lotree:insert:linksucc}};
		\node (lfa) at (3.3,-.2) {\bigleadsfrom};
		\node[yshift=-8pt] at (lfa) {\footnotesize\Cref{code:lotree:delete:unlinksucc}};

		\begin{scope}[shift={(5,0)}]
			\node (x) [treenode] at (0,0) {$\pprev$};
			\node (n) [treenode] at (.75,-.75) {$\pnode$};
			\node (z) [treenode] at (1.5,0) {$\pnext$};
			\draw (x.-80) edge[listptr,solid,->] node[left,hinterval,pos=.75]{\colFlow$(\keyof{\pprev},\infty]$} (n.165);
			\draw (z.200) edge[listptr,->] (x.-20);
			\draw (n.120) edge[listptr,->] (x.-45);
			\draw (n.20) edge[listptr,solid,->] node[right,hinterval,pos=.22]{\colFlow$(\keyof{\pnode},\infty]$} (z.-110);
			\draw (x.160) edge[listptr,solid,<-] node[interval]{\colFlow$[\keyof{\pprev},\infty]$} ++(-.75,0);
			\draw (x.200) edge[listptr,->] ++(-.75,0);
			\draw (z.20) edge[listptr,solid,->] node[interval]{\colFlow$(\keyof{\pnext},\infty]$} ++(.75,0);
			\draw (z.-20) edge[listptr,<-] ++(.75,0);
			\node[interval] at (x.north east) {\colKeyset$[\keyof{\pprev},\keyof{\pprev}]$};
			\node[interval] at (z.north east) {\colKeyset$(\keyof{\pprev},\keyof{\pnext}]$};
			\node[hinterval,anchor=north west,inner sep=0pt] at (n.south east) {\colKeyset$(\keyof{\pprev},\keyof{\pnode}]$};
		\end{scope}

		\node (ltb) at (8.3,.2) {\bigleadsto};
		\node[yshift=8pt] at (ltb) {\footnotesize\Cref{code:lotree:insert:linkpred}};
		\node (lfb) at (8.3,-.2) {\bigleadsfrom};
		\node[yshift=-8pt] at (lfb) {\footnotesize\Cref{code:lotree:delete:unlinkpred}};

		\begin{scope}[shift={(10,0)}]
			\node (x) [treenode] at (0,0) {$\pprev$};
			\node (n) [treenode] at (.75,-.75) {$\pnode$};
			\node (z) [treenode] at (1.5,0) {$\pnext$};
			\draw (x.-80) edge[listptr,solid,->] node[left,hinterval,pos=.75]{\colFlow$(\keyof{\pprev},\infty]$} (n.165);
			\draw (z.215) edge[listptr,->] (n.60);
			\draw (n.120) edge[listptr,->] (x.-45);
			\draw (n.20) edge[listptr,solid,->] node[right,hinterval,pos=.22]{\colFlow$(\keyof{\pnode},\infty]$} (z.-110);
			\draw (x.160) edge[listptr,solid,<-] node[interval]{\colFlow$[\keyof{\pprev},\infty]$} ++(-.75,0);
			\draw (x.200) edge[listptr,->] ++(-.75,0);
			\draw (z.20) edge[listptr,solid,->] node[interval]{\colFlow$(\keyof{\pnext},\infty]$} ++(.75,0);
			\draw (z.-20) edge[listptr,<-] ++(.75,0);
			\node[interval] at (x.north east) {\colKeyset$[\keyof{\pprev},\keyof{\pprev}]$};
			\node[interval] at (z.north east) {\colKeyset$(\keyof{\pnode},\keyof{\pnext}]$};
			\node[hinterval,anchor=north west,inner sep=0pt] at (n.south east) {\colKeyset$(\keyof{\pprev},\keyof{\pnode}]$};
		\end{scope}
	\end{tikzpicture}
	\vspace{-6mm}
	\caption{%
		Physical linking (\Cref{code:lotree:insert:linksucc,code:lotree:insert:linkpred}) and unlinking (\Cref{code:lotree:delete:unlinksucc,code:lotree:delete:unlinkpred}) of node $\pnode$.
		The arrows \textlistptr[->] resp. \textlistptr[->,solid] indicate $\predsel$ resp. $\succsel$ pointers.
		The {\colFlow intervals} on $\succsel$ links denote the insets, the {\colKeyset intervals} on nodes denote their keysets.
		Mark bits ($\pnode$ is marked prior to the unlinking) and acquired locks are not depicted.
		\label{fig:illustration-insertion}
		\label{fig:illustration-deletion}
		\label{fig:illustration-linkage}
	}
	\vspace{-2mm}
\end{figure}

Otherwise, $\vk$ is inserted into the structure.
To do that, a new node $\pnode$ containing $\vk$ is allocated in \Cref{code:lotree:insert:malloc}.
The $\predsel$ and $\succsel$ fields are set to $\pprev$ and $\pnext$, respectively.
It remains to link $\pnode$ into the logical ordering, as depicted in \Cref{fig:illustration-insertion}.
First, \Cref{code:lotree:insert:linksucc} redirects $\succof{\pprev}$ to $\pnode$.
This is the linearization point: $\pnode$ receives the inset $(\keyvarof{\pprev},\infty]$ from $\pprev$ so that we get $\contentsof{\pnode}=\set{\vk}$.
Hence, the update turns $\abscontent$ into $\abscontent\cup\set{\vk}$ so that $\OBLi{\vk}$ can be traded for $\FULi{\true}$.
Next, \Cref{code:lotree:insert:linkpred} redirects $\predof{\pnext}$ to $\pnode$.
The command has no effect on the logical contents $\abscontent$ which is why we need no $\OBLi{\vk}$ to proceed.
It is readily checked that the update maintains the node-local invariants of the nodes $\pprev,\pnode,\pnext$.

Our proof outline does not consider the methods for inserting the new node $\pnode$ into the tree overlay.
We simply assume that \code{prepareTreeInsertion} in \Cref{code:lotree:insert:preparetree} produces an interference-free predicate $\hiddenIns{\pprev,\pnext}$ that is maintained by the updates of the logical ordering in \Cref{code:lotree:insert:linksucc,code:lotree:insert:linkpred} and consumed by the later \code{performTreeInsertion} in \Cref{code:lotree:insertIntoTree}.

\subsubsection{Deletions}
Deletions are similar to insertions (see \Cref{fig:illustration-deletion}).
We omit the details.

\subsubsection{Contains}
\label{sec:lotree:proof-contains}

The proof in \Cref{fig:lotree:impl} uses implicitly existentially quantified symbolic variables $\anodeval,\anodevalp,\avalval$ to share knowledge between now and past predicates.
We cannot use program variables for this purpose because their values change during computation, meaning they may be valuated differently in now and past predicates.
To further avoid confusion between now and past states, we write $\old{e}$ to replace in expression $e$ all symbolic variables like $\markof{\anodeval}$ with $\old{\markof{\anodeval}}$.
We think of $\old{e}$ as the \emph{old} version and use it under past operators.
For example, in $\nowof{\inv(\abscontent,\set{\anodeval})}\mstar\pastof{\old{\inv(\abscontentp,\set{\anodeval})}}$ we would use $\insetof{\anodeval}$ resp. $\old{\insetof{\anodeval}}$ to clearly refer to the inset of $\anodeval$ in the current resp. past state.
The proof of \code{contains($\vk$)} has these five stages:

%
\begin{wrapfigure}[15]{r}{.44\textwidth}
	\vspace{-4mm}
	\newcommand{\mkCG}{\color{green!60!black}}
	\newcommand{\MYMSTAR}{{}\mstar{}}
	\newcommand{\succInvOf}[1]{\succInv(\abscontent,\abscontentp\!\!,\somenodes,\somenodesp,#1,\anodeval)}
	\begin{lstlisting}[language=compactSPL,belowskip=-2mm]
$\ANNOT{ \OBLc{\vk} \MYMSTAR \succInvOf{\pcurr} }$
while ($\pcurr$.key < $\vk$)@\:@{
	val $\pnext$ = $\pcurr$.succ
	$\ANNOTML{ &\OBLc{\vk} \MYMSTAR \succInvOf{\pcurr} \\& \MYMSTAR \keyvarof{\anodeval}<\vk \MYMSTAR \pnext=\anodevalp=\succvarof{\anodeval} }$   $\label{code:lotree:hindsight-succ:inter-pre}$
	//@\;\mkCG$\weakhypof{\apred}{\apredp}{\apred\cap\apredp}$\;@with $\label{code:lotree:hindsight-succ:inter-hyp}$
	// @\;\mkCG$\apred\defeq\inv(\abscontentp\!\!,\somenodesp\prall{\cup}\anodeval) \MYMSTAR \insetof{\anodeval}\neq\emptyset$@
	// @\;\mkCG$\apredp\defeq\inv(\abscontent,\somenodes\prall{\cup}\anodeval) \MYMSTAR \succvarof{\!\anodeval}\prall{=}\anodevalp \MYMSTAR \keyvarof{\!\anodeval}\prall{<}\vk$@
	@\color{teal}skip@ $\label{code:lotree:hindsight-succ:inter-skip}$
	$\ANNOT{ \OBLc{\vk} \MYMSTAR \succInvOf{\pnext} }$   $\label{code:lotree:hindsight-succ:inter-post}$
	$\pcurr$ := $\pnext$
}
$\ANNOT{ \OBLc{\vk} \MYMSTAR \succInvOf{\pcurr} \MYMSTAR \vk\leq\keyvarof{\anodeval} }$
\end{lstlisting}
	\caption{
		Detailed proof outline and temporal interpolation for \Cref{code:lotree:contains:go-succ-pre,code:lotree:contains:go-succ,code:lotree:contains:go-succ-post}.
		\label{fig:lotree:hindsight-succ}
	}
\end{wrapfigure}
%
%
(1)
The tree traversal, \Cref{code:lotree:contains:treetraversal}, finds a starting node $\pcurr$ for traversing the logical ordering.
The only guarantee for $\pcurr$ is that it is non-$\nullptr$, \Cref{code:lotree:contains:treetraversal-post}.

(2)
The logical ordering is traversed by following $\predsel$ fields as long as $\vk$ is less than the key in the traversed node, \Cref{code:lotree:contains:go-pred}.
The resulting node $\pcurr$ is non-$\nullptr$ by~\eqref{eq:lotree:inv:contents-closed}.
Moreover, we obtain the interference-free fact $\vk \geq \keyof{\pcurr}$, \Cref{code:lotree:contains:go-pred-post}.

(3)
The traversal continues to follow $\predsel$ pointers until an unmarked node is reached, \Cref{code:lotree:contains:fix}.
By invariant~\eqref{eq:lotree:inv:contents-closed}, the resulting node $\pcurr$ is non-$\nullptr$.
That $\pcurr$ is unmarked means that its inset is at least $[\keyvarof{\pcurr},\infty]$ by invariant~\eqref{eq:lotree:inv:flow}.
Moreover, $\vk \geq \keyof{\pcurr}$ from the previous stage is preserved due to~\eqref{eq:lotree:inv:sorted}.
Together, this implies that $\vk$ is in $\pcurr$'s inset.
This fact is not interference-free because $\pcurr$ is not locked.
To preserve it, we turn it into a past predicate, \Cref{code:lotree:contains:go-succ-pre}.

(4)
The traversal follows $\succsel$ pointers as long as $\vk$ is greater than the key in the traversed node, \Cref{code:lotree:contains:go-succ}.
Using temporal interpolation (details below), we conclude that also the reached node $\pcurr$ had $\vk$ in its inset at some point.
Note that this together with $\vk\leq\keyof{\pcurr}$ from \Cref{code:lotree:contains:go-succ-post} means $\vk\in\keysetof{\pcurr}$ in some past state.
So $\vk$ was in the structure at this past state iff $\vk=\keyvarof{\pcurr}$. 

(5)
Using temporal interpolation (details below), we derive from the past contents and the current $\keysel$ field of $\pcurr$ whether or not $\vk$ has been logically contained, \Cref{code:lotree:contains:validate-annot}.
This past state is, in fact, the linearization point.
We retrospectively linearize, \Cref{code:lotree:contains:final}, before returning.

We turn to the details of the temporal interpolation that goes into stages (4) and (5).

\tinyskip
\emph{Temporal Interpolation in Stage~(4).}
The proof outline for the loop from \Cref{code:lotree:contains:go-succ} is given in \Cref{fig:lotree:hindsight-succ}.
The temporal interpolation needed here is this:
that $\pcurr$ had flow in the past and its $\succsel$ field currently points to $\pnext$ and its $\keysel$ field currently is less than $\vk$ means that all three facts were true simultaneously at some point.
Intuitively, this is the case because $\pcurr$ has a non-empty inset whenever $\succof{\pcurr}$ is changed and because $\keyvarof{\pcurr}$ is never changed.
Technically, we show the hypothesis $\weakhypof{\apred}{\apredp}{\apred\cap\apredp}$ on \Cref{code:lotree:hindsight-succ:inter-hyp} with
\begin{align*}
	\apred ~\defeq~ \inv(\abscontentp,\somenodesp\cup\anodeval) \MSTAR \insetof{\anodeval}\neq\emptyset
	~~\quad\text{and}\quad~~
	\apredp ~\defeq~ \inv(\abscontent,\somenodes\cup\anodeval) \MSTAR \succvarof{\anodeval}=\anodevalp \MSTAR \keyvarof{\anodeval}<\vk
	\ .
\end{align*}
The symbolic variables $\anodeval$ resp. $\anodevalp$ are bound to $\pcurr$ resp. $\pprev$ by the outer proof context; we use $\anodeval$/$\anodevalp$ instead of $\pcurr$/$\pprev$ as they are logically pure and thus do not change their valuation.
To prove the hypothesis, we establish \(
	\thePredicates,\theInterference\semcalc\hoareof{\acpred}{\stmtof{\theInterference}}{\nowof{\apredp}\rightarrow\weakpastOf{\apred\cap\apredp}}
\) and \(
	\isInterferenceFreeOf[\theInterference]{\thePredicates}
\)
for some set $\thePredicates$ of predicates with $\nowof{\apred}\subseteq\acpred\in\thePredicates$~(cf. \Cref{Section:TemporalInterpolation}).
We cannot simply use $\acpred=\nowof{\apred}$ because $\apred$ is not interference-free.
Instead, we use $\acpred\defeq\nowof{\apredp}\rightarrow\weakpastOf{\apred\cap\apredp}$.
It is easy to see that $\acpred$ is weaker than $\nowof{\apred}$, $\nowof{\apred}\subseteq\acpred$.
Note that $\acpred$ is the invariant that the hypothesis proof strategy from \Cref{Lemma:ProofStrategy} asks for.

Next, we show that $\acpred$ is interference-free, i.e., $\csemOf{(\acpredpp,\acom)}(\acpred)\subseteq\acpred$ for all interferences $(\acpredpp,\acom)\in\theInterference$ of the LO-tree.
For an interference $(\acpredpp,\acom)$ to invalidate $\acpred$ it must change the truth of $\apredp$ in the current state.
If the truth of $\apredp$ is changed to $\false$, then $\acpred$ is vacuously true.
Otherwise, the interference changes $\succvarof{\anodeval}$ to $\anodevalp$ ($\keyvarof{\anodeval}$ is not changed by any interference).
This means $\acom$ stems from \Cref{code:lotree:insert:linksucc} in \code{insert} or \Cref{code:lotree:delete:unlinksucc} in \code{delete}.
In both cases we know from the proof (\Cref{fig:illustration-linkage,fig:lotree:impl}) that $\anodeval$ has a non-empty inset after the interfering update.
Concretely, this means $\csemOf{(\acpredpp,\acom)}(\acpred)\subseteq\nowof{\apred}$.
Because we already established $\nowof{\apred}\subseteq\acpred$, we obtain the interference-freedom of $\acpred$, as required.

It remains to show that $\acpred$ is invariant under the self-interferences $\stmtof{\theInterference}$.
To see this, observe that $\acpred$ concerns only the global state, not the local state.
Hence, the self-interferences invalidate $\acpred$ iff the interferences of other threads do so.
Since the latter is not the case, nothing needs to be shown.

With the hypothesis proved, we obtain $\pastOf{\apred\cap\apredp}$ from Rule~\ruleref{temporal-interpolation}.
The rule is applied to a command which we make explicit in the form of $\cskip$ on \Cref{code:lotree:hindsight-succ:inter-skip}.
One can avoid this $\cskip$ by applying the rule together with the previous command.
Finally, we invoke invariant \eqref{eq:lotree:inv:flow} under the past predicate to obtain $[\old{\keyvarof{\anodeval}},\infty]\subseteq\old{\insetof{\anodeval}}$.
By definition, this means that $\old{\succvarof{\anodeval}}=\anodevalp$ receives $\old{\insetof{\anodeval}}\setminus(\old{\keyvarof{\anodeval}},\infty]$.
Because $\old{\keyvarof{\anodeval}}<\vk$, this means $\vk\in\old{\insetof{\anodevalp}}$.
Altogether, we arrive at the desired assertion on \Cref{code:lotree:hindsight-succ:inter-post}, namely $\pastOf{\old{\inv(\abscontentp,\somenodesp\cup\anodevalp)} \mstar \vk\in\old{\insetof{\anodevalp}}}$.

\tinyskip
\emph{Temporal Interpolation in Stage~(5)}
We proceed in two steps.
First, we prove that $\weakhypof{\apred}{\apredp}{\apred\cap\apredp}$ holds for arbitrary $\apred$ and $\apredp\defeq\keyvarof{\anodeval}=\avalval$.
As before, we use \Cref{Lemma:ProofStrategy} with invariant ${\acpred\defeq\nowof{\apredp}\rightarrow\weakpastOf{\apred\cap\apredp}}$.
Since $\keyvarof{\anodeval}$ is immutable, $\acpred$ is immediately stable under (self-)interferences.
This justifies to move facts about the $\keysel$ freely between now and past states.

\looseness=-1
Towards the assertion on \Cref{code:lotree:contains:validate-annot}, assume $\keyvarof{\anodeval}=\vk$.
We move this fact into the past predicate from \Cref{code:lotree:contains:go-succ-post} using the above argument.
The result is: $\pastOf{\old{\inv(\abscontentp,\somenodesp)}\mstar\vk\in\old{\insetof{\anodeval}}\mstar\vk=\old{\keyvarof{\anodeval}}}$.
This means that $\vk$ was contained in the structure in the past: $\pastOf{\old{\inv(\abscontentp,\somenodesp)}\mstar\vk\in\old{\contentsof{\anodeval}}\subseteq\abscontentp}$.
This conclusion uses the fact that $\vk\in\old{\insetof{\anodeval}}\mstar\vk=\old{\keyvarof{\anodeval}}$ implies $\vk\in\old{\keysetof{\anodeval}}$.
The case for $\keyvarof{\anodeval}\neq\vk$ is similar.
Overall, rewriting both cases into one yields the desired assertion, \Cref{code:lotree:contains:validate-annot}.
Finally, this allows us to retrospectively linearize as the past predicate witnesses a past state where $\vk$ was resp. was not in the structure as reflected by the return value.
This concludes the linearizability proof.


\subsection{Proof Automation}
\label{sec:lotree:automation}

We substantiate our claims that temporal interpolation and the resulting proof system for linearizability aid automated proof construction. To this end, we adapted the \plankton tool \cite{DBLP:journals/pacmpl/MeyerWW22}.
\plankton is a verifyer for non-blocking data structures that constructs proofs in the program logic from \Cref{sec:Preliminaries} extended by rules for linearizability akin to those from \Cref{sec:linearizability}.
To be more precise, \plankton takes as input the implementation under scrutiny together with a candidate node invariant, like $\nodeinv(\somenodes,\somenodesp,\anode)$ from \Cref{sec:lotree:invariant}.
It then performs an exhaustive proof search.

We extended \plankton to use our new proof rules from \Cref{Figure:ProgramLogicTI,fig:linearizability-rules}, in particular Rule~\ruleref{temporal-interpolation}.
Our implementation \cite{artifact} applies temporal interpolation only for hypotheses of the form $\hypof{\apred}{\apredp}{\apred\cap\apredp}$ and only if it is able to discharge the hypothesis using \Cref{Lemma:ProofStrategy} with invariant $\nowof{\apredp}\rightarrow\weakpastof{(\apred\cap\apredp)}$.
This eager approach ensures that we do not pollute the proof search with temporal interpolations that are doomed to fail because their hypotheses do not hold.
Note that this is possible despite a potentially incomplete interference set as \plankton restarts proof construction whenever a new interference is discovered.
Altogether, our implementation establishes linearizability results along \Cref{Theorem:Soundness}.

We used our tool to verify automatically the LO-tree from \Cref{fig:lotree:impl}.
Similarly to the presented proof, we did not use the actual implementation of the helper functions modifying the tree overlay.
Instead, we used \emph{most general stubs}, functions that change the tree overlay arbitrarily (leaving the logical ordering list unchanged).
The node invariant we specified is the one from \Cref{sec:lotree:invariant}.
With this, \plankton is able to fully automatically construct a linearizability proof for the LO-tree within twenty minutes~(see \Cref{table:benchmarks}).
We stress that this includes fully automatic applications of temporal interpolation, which are strictly necessary to prove the LO-tree linearizable.

We also compared our new version of \plankton against the original version form \citet{DBLP:journals/pacmpl/MeyerWW22}.
See \Cref{table:benchmarks} for the results: temporal interpolation incurs a slow down of factor $3.15$ in the worst case and factor $2$ on average. We believe that this slowdown is justified by the reasoning power brought by temporal interpolation.
We consider a more extensive evaluation of our implementation future work.
As of now, \plankton's proof construction is limited by orthogonal concerns (e.g. imprecise joins, the handling of updates with non-local effects) that still limit its applicability.


\begin{table*}%
	\caption{Runtime comparison of our novel temporal interpolation proof rule with the tool from \citet{DBLP:journals/pacmpl/MeyerWW22}. The experiments were conducted on an Apple M1 Pro.}%
	\label{table:benchmarks}%
		\center%
		\newcommand{\thespacer}{\quad~}
		\newcommand{\cellFactor}[1]{\thespacer\(\times#1\)}
		\newcommand{\cellYes}[1]{\thespacer\makebox[1cm][r]{\(#1\)}\hspace{1.5mm}\makebox[.4cm][l]{\rawsymbolYes}\thespacer}
		\newcommand{\cellNo}{\thespacer\makebox[1cm][r]{---~}\hspace{1.5mm}\makebox[.4cm][l]{\!\rawsymbolNo}\thespacer}
		\begin{tabularx}{\textwidth}{Xccc}%
			\toprule
			Benchmark
				& \citet{DBLP:journals/pacmpl/MeyerWW22}
				& This Paper
				& \thespacer Factor
				\\
			\midrule
			Fine-Grained set
				& \cellYes{44s}
				& \cellYes{45s}
				& \cellFactor{1.02}
				\\
			Lazy set
				& \cellYes{1m\,21s}
				& \cellYes{2m\,13s}
				& \cellFactor{1.65}
				\\
			FEMRS tree {(no maintenance)}
				& \cellYes{2m\,22s}
				& \cellYes{3m\,50s}
				& \cellFactor{1.62}
				\\
			Vechev\&Yahav~2CAS set
				& \cellYes{1m\,09s}
				& \cellYes{1m\,15s}
				& \cellFactor{1.08}
				\\
			Vechev\&Yahav~CAS set
				& \cellYes{0m\,52s}
				& \cellYes{2m\,20s}
				& \cellFactor{2.70}
				\\
			ORVYY set
				& \cellYes{0m\,54s}
				& \cellYes{1m\,36s}
				& \cellFactor{1.79}
				\\
			Michael set
				& \cellYes{3m\,06s}
				& \cellYes{6m\,53s}
				& \cellFactor{2.22}
				\\
			Michael set {(wait-free search)}
				& \cellYes{3m\,42s}
				& \cellYes{6m\,53s}
				& \cellFactor{1.86}
				\\
			Harris set
				& \cellYes{18m\,14s}
				& \cellYes{57m\,20s}
				& \cellFactor{3.15}
				\\
			Harris set {(wait-free search)}
				& \cellYes{19m\,54s}
				& \cellYes{43m\,00s}
				& \cellFactor{2.16}
				\\
			LO-tree (maintenance stubs)
				& \cellNo
				& \cellYes{16m\,43s}
				& \thespacer ---
				\\
			\bottomrule
	\end{tabularx}
\end{table*}


\section{Related Work}
\label{sec:related}

\looseness=-1
The hindsight principle~\cite{DBLP:conf/podc/OHearnRVYY10,DBLP:conf/wdag/Lev-AriCK15,DBLP:conf/wdag/FeldmanE0RS18,DBLP:journals/pacmpl/FeldmanKE0NRS20} and our temporal interpolation have relatives in classical program verification~\cite{DBLP:books/daglib/0080029,DBLP:series/txcs/Schneider97}.
So-called causality formulas, in our notation written as $\nowof{\apred}\rightarrow\pastof{\apredp}$, express that $\apredp$ is a prerequisite for seeing~$\apred$.
Temporal interpolation is more general in that it may take past information into account in order to infer the existence of an intermediary state.
Yet, the past invariance proof principle by \citet[\S4.1]{DBLP:books/daglib/0080029} inspired an application of \ruleref{temporal-interpolation} in the RDCSS proof \techreport{(\Cref{sec:RDCSS})}\nontechreport{\cite[\S{}F]{techreport}} to derive a contradiction in a case distinction.
The careful identification of verification conditions by \citet{DBLP:books/daglib/0080029} has also lead us to the definition of hypotheses that can be proven in isolation.
What sets our work apart is that we incorporate temporal interpolation into a modern program logic with powerful reasoning techniques~\cite{DBLP:journals/jfp/JungKJBBD18} such as framing~\cite{DBLP:conf/csl/OHearnRY01}, atomic triples~\cite{DBLP:conf/ecoop/PintoDG14}, and general separation algebras~\cite{DBLP:conf/lics/CalcagnoOY07}, in particular flows~\cite{DBLP:journals/pacmpl/KrishnaSW18,DBLP:conf/esop/KrishnaSW20}.

There are first tools that automate linearizability proofs based on hindsight reasoning.
The \texttt{poling} tool~\cite{DBLP:conf/cav/ZhuPJ15} implements the hindsight lemma in the formulation of~\citet{DBLP:conf/podc/OHearnRVYY10}.
The \plankton tool~\cite{DBLP:journals/pacmpl/MeyerWW22} automates a restricted form of hindsight reasoning that can be expressed via state-independent variables shared between a past and the current state.
However, it did not support general temporal interpolation prior to our extension.
Without this extension, the tool would have been unable to verify the LO-tree and other structures that require more complex hindsight reasoning.

We are not the first to study program logics defined over computations instead of states.
History-based local rely-guarantee~\cite{DBLP:conf/concur/FuLFSZ10,DBLP:conf/esop/GotsmanRY13} has an elaborate assertion language whose temporal operators are carefully harmonized with the rules of the program logic.
Our approach builds on the logic proposed by~\citet{DBLP:journals/pacmpl/MeyerWW22} from which it inherits the notion of past predicates over computations.
We introduce temporal interpolation by means of a new proof rule.
The soundness result shows that the proof rule can be eliminated, and hence is really a mechanism for structuring complex proofs.
This means that, in principle, all of our proofs can also be expressed in the logic of~\citet{DBLP:journals/pacmpl/MeyerWW22}. Doing so, however, requires one to repeat the soundness arguments within each program proof anew.
In particular, this (i) requires reasoning about the governed computations explicitly and (ii) thwarts the use of the frame rule.
Realistically, this would make the proofs intractable, even manual ones.
The conclusions~\cite{DBLP:journals/pacmpl/MeyerWW22} can draw directly about the past of the computation are all based on immutability arguments, and compared to what we propose here this is a very weak form of hindsight reasoning.
Notably, the version of \plankton presented in~\cite{DBLP:journals/pacmpl/MeyerWW22} cannot handle the example from \Cref{sec:Overview} nor the LO-tree from \Cref{sec:lotree}.
Comparing to other computation-based separation logics, we note that the formalization of computations matters: definitions based on interleaving products~\cite{DBLP:conf/sas/BellAW10} or the union of sets of events \cite{DBLP:conf/esop/SergeyNB15,DBLP:conf/ecoop/DelbiancoSNB17} seem to be less suited for temporal interpolation.

Prophecies were introduced to separation logic by~\citet{DBLP:phd/ethos/Vafeiadis08} and formalized by~\citet{DBLP:conf/tamc/ZhangFFSL12} to structural prophecies that foresee the actions of one thread, a restriction overcome by~\citet{DBLP:journals/pacmpl/JungLPRTDJ20}.
Temporal interpolation conducts full subproofs in the presence of interferences.
However, it is in the nature of Owicki-Gries, and has been observed early on~\cite{DBLP:journals/acta/OwickiG76}, that interferences may require auxiliary variables to increase precision.
What seems to make prophecies more difficult to use is the need to reason about the computation backward, against the control flow~\cite{DBLP:conf/cav/BouajjaniEEM17}.
This is shared with simulation and refinement-based proofs~\cite{DBLP:conf/pldi/LiangF13,DBLP:conf/popl/TuronTABD13}, where backward reasoning is known to be complete~\cite{DBLP:conf/cav/SchellhornWD12}.

Our proofs use standard techniques like boxed assertions~\cite{DBLP:conf/concur/VafeiadisP07,DBLP:phd/ethos/Vafeiadis08}, fractional permissions~\cite{DBLP:conf/sas/Boyland03}, and persistent points-to predicates~\cite{DBLP:conf/cpp/VindumB21}.
Combining these techniques is no contribution of ours.
In fact, they were already combined in the original \plankton tool from \citet{DBLP:journals/pacmpl/MeyerWW22}, although the use of fractional permissions and persistent points-to predicates has not been discussed there (probably due to their focus on lock-free implementations).


\begin{acks}
This work is funded in parts by the \grantsponsor{GS100000001}{National Science Foundation}{http://dx.doi.org/10.13039/100000001} under grant~\grantnum{GS100000001}{1815633} and by an Amazon Research Award.
The third author is supported by a Junior Fellowship from the Simons Foundation (855328, SW).
\end{acks}

\section*{Data-Availability Statement}

Our extended version of \plankton and the dataset (\Cref{table:benchmarks}) analysed in the present paper are available in the Zenode repository \cite{artifact}, \url{https://zenodo.org/record/7829982/}.

	\bibliography{bib,dblp}

	\techreport{
		\appendix

\clearpage


\section{Bugs in the LO-Tree and their Fixes}
\label{appendix:lo-bugs-full}

The original version of the LO-tree \cite{DBLP:conf/ppopp/DrachslerVY14} contains two bugs which we fixed in \Cref{fig:lotree:impl}.
The fist bug concerns \code{contains}: concurrent insertions and deletions of a value $\vk$ in the original implementation may lead to duplicates of $\vk$ in the tree that do not agree on their $\marksel$ field, making \code{contains} produce non-linearizable results.
The second bug concerns \code{insert}: the original sequence in which new nodes are linked into the logical ordering leads \code{contains} to miss values and thus produce non-linearizable results.
This bug has been reported by \citet{DBLP:journals/pacmpl/FeldmanKE0NRS20}.

\smartparagraph{Bug 1: Duplicate Values}
A subtle quirk of the LO-tree is the fact that an insertion of value $\vk$ may be unaware of a concurrent deletion of $\vk$ because the tree traversal of the insertion experienced a rotation but still ended up in the right position for the insertion (the validation in \code{locate} succeeds).
Successful validation requires that the deletion already removed $\vk$ from the logical ordering.
As a consequence, the insertion can proceed and insert $\vk$ into the logical ordering and into the tree.
If the deletion has not yet removed the old marked version of $\vk$, then the tree contains two nodes with value $\vk$ that disagree on the mark bit.
Hence, the result of \code{contains($\vk$)} is influenced by rotations.


\begin{figure}[b]
	{\begingroup
		\centering
		\newcommand{\MYXSHIFT}{15mm}
		\newcommand{\MYSTEP}[1]{%
			\hspace{-3mm}%
			\raisebox{-4.5pt}{\stackon{\bigleadsto}{\text{}}}%
			\hspace{-3mm}%
		}
		\newcommand{\MYMARK}[1]{%
			\node at (#1.south east) [yshift=-2.5pt] {\scalebox{1.2}{\textcolor{colorList}{\rawsymbolNo}}};
		}
		\newcommand{\MYLOCK}[1]{%
			\node at (#1.north) [xshift=0pt,yshift=1pt] {{\color{colorTree}\tiny\dellock}};
		}
		\newcommand{\MYLOCKP}[1]{%
			\node at (#1.south) [xshift=0pt,yshift=-1pt] {{\color{colorTree}\tiny\inslock}};
		}
		\newcommand{\MYLAB}[2][]{%
			\node (lab) at (#2) [#1] {{\small\code{insert($77$)}}};
			\draw[->,>=stealth,line width=.6pt,decorate,decoration={coil,aspect=0,amplitude=1pt, segment length=1.5mm,post length=1mm,pre length=1mm}] (lab) -- (node42);
		}
		\begin{tikzpicture}[flattrees, baseline=(node42.center)]
			\node (root) [] {}
				child {
					node (node5) [treenode] {$5$} edge from parent[draw=none]
						child { node {} edge from parent[draw=none] }
						child [treeptr] {
							node (node42) [treenode] {$42$}
								child { node {} edge from parent[draw=none] }
								child {
									node (node77) [treenode] {$77$} 
										child { node {} edge from parent[draw=none] }
										child {
											node (node99) [treenode] {$99$} 
										}
								}
						}
				}
			;
			\node (rootN) at (root) [treenode,xshift=\MYXSHIFT] {$\infty$};
			\draw[treeptr,densely dotted] (rootN) -- (node5);
			\MYLAB{.5,-2.75}
			\begin{scope}[on background layer]
				\draw[listptr] (node5) edge[out=-0,in=180,looseness=2] (node42);
				\draw[listptr] (node42) edge[out=-0,in=180,looseness=2] (node77);
				\draw[listptr] (node77) edge[out=-0,in=180,looseness=2] (node99);
				\draw[listptr] (node99) edge[out=60,in=180,looseness=1.0] (rootN);
			\end{scope}
		\end{tikzpicture}
		\MYSTEP{a}
		\begin{tikzpicture}[flattrees, baseline=(node77.center)]
			\node (root) [] {}
				child {
					node (node5) [treenode] {$5$} edge from parent[draw=none]
						child { node {} edge from parent[draw=none] }
						child [treeptr] {
							node (node77) [treenode] {$77$}
								child {
									node (node42) [treenode] {$42$} 
								}
								child {
									node (node99) [treenode] {$99$} 
								}
						}
				}
			;
			\node (rootN) at (root) [treenode,xshift=\MYXSHIFT] {$\infty$};
			\draw[treeptr,densely dotted] (rootN) -- (node5);
			\MYLAB{1.15,-2.75}
			\begin{scope}[on background layer]
				\draw[listptr] (node5) edge[out=-0,in=180,looseness=1.2] (node42);
				\draw[listptr] (node42) edge[out=0,in=180,looseness=2] (node77);
				\draw[listptr] (node77) edge[out=-0,in=180,looseness=2] (node99);
				\draw[listptr] (node99) edge[out=60,in=180,looseness=1.2] (rootN);
			\end{scope}
		\end{tikzpicture}
		\MYSTEP{b}
		\begin{tikzpicture}[flattrees, baseline=(node77.center)]
			\node (root) [] {}
				child {
					node (node5) [treenode] {$5$} edge from parent[draw=none]
						child { node {} edge from parent[draw=none] }
						child [treeptr] {
							node (node77) [treenode] {$77$}
								child {
									node (node42) [treenode] {$42$} 
								}
								child {
									node (node99) [treenode] {$99$} 
								}
						}
				}
			;
			\node (rootN) at (root) [treenode,xshift=\MYXSHIFT] {$\infty$};
			\draw[treeptr,densely dotted] (rootN) -- (node5);
			\MYMARK{node77}
			\MYLOCK{node5}
			\MYLOCK{node77}
			\MYLOCK{node99}
			\MYLAB{1.15,-2.75}
			\begin{scope}[on background layer]
				\draw[listptr] (node5) edge[out=-0,in=180,looseness=1.2] (node42);
				\draw[listptr] (node42) edge (node99);
				\draw[listptr] (node99) edge[out=60,in=180,looseness=1.2] (rootN);
			\end{scope}
		\end{tikzpicture}
		\MYSTEP{c}
		\begin{tikzpicture}[flattrees, baseline=(node77.center)]
			\node (root) [] {}
				child {
					node (node5) [treenode] {$5$} edge from parent[draw=none]
						child { node {} edge from parent[draw=none] }
						child [treeptr] {
							node (node77) [treenode] {$77$}
								child {
									node (node42) [treenode] {$42$} 
										child { node {} edge from parent[draw=none] }
										child {
											node (copy) [treenode] {$77$}
										}
								}
								child {
									node (node99) [treenode] {$99$} 
								}
						}
				}
			;
			\node (rootN) at (root) [treenode,xshift=\MYXSHIFT] {$\infty$};
			\draw[treeptr,densely dotted] (rootN) -- (node5);
			\MYMARK{node77}
			\MYMARK{node77}
			\MYLOCK{node5}
			\MYLOCK{node77}
			\MYLOCK{node99}
			\MYLOCKP{node42}
			\MYLOCKP{copy}
			\begin{scope}[on background layer]
				\draw[listptr] (node5) edge[out=-0,in=180,looseness=1.2] (node42);
				\draw[listptr] (node42) edge[out=-0,in=180,looseness=2] (copy);
				\draw[listptr] (copy) edge[out=0,in=180,looseness=2] (node99);
				\draw[listptr] (node99) edge[out=60,in=180,looseness=1.2] (rootN);
			\end{scope}
		\end{tikzpicture}
	\endgroup}
	\caption{%
		Sequence of events to produce a tree overlay containing two versions of value $77$, a marked one and an unmarked one.
		Which version is found by \code{contains} depends on whether or not the tree traversal experiences rotations, yielding non-linearizable results.
		\label{fig:lotree:bug-duplicate}
	}
\end{figure}

We make the malicious scenario precise.
To that end, consider \Cref{fig:lotree:bug-duplicate}.
In the first (leftmost) state, node $77$ is logically contained in the data structure.
Moreover, there is an \code{insert($77$)} underway whose tree traversal is currently at node $42$.
The second state is the result of a left rotation on node $42$.
Next, a \code{delete($77$)} starts.
Its tree traversal finds node $77$, and subsequently marks and unlinks it from the logical ordering.
Before the deletion removes node $77$ from the tree (\Cref{code:lotree:removeFromTree}), the insertion continues.
Ominously, the insertion manages to proceed: \code{locate} is able to validate that $77$ should be inserted between node $42$ and its now successor $99$.
The insertion will insert $77$ into the logical ordering and into the tree as a child of node $42$ which is situated in the left subtree of the marked node $77$ whose deletion is stalled.
This is the last (rightmost) state in \Cref{fig:lotree:bug-duplicate}.
Note that this scenario is not prevented by the $\treesel$s acquired by \code{prepareTreeDeletion} and \code{prepareTreeInsertion} (in \Cref{fig:lotree:bug-duplicate}, the $\treesel$s held by \code{delete} resp. \code{insert} according to \citet{DBLP:conf/ppopp/DrachslerVY14} are marked with ${\footnotesize\dellock}$ resp. ${\footnotesize\inslock}$).
In the last state, the original implementation of \code{contains($77$)}, which coincides with the one from \Cref{fig:lotree:impl} without \Cref{code:lotree:fix-contains}, produces non-linearizable results.
The problem is this: without rotations, \code{traverse($77$)} will return the marked node $77$.
Hence, \code{traverse($77$)} will not follow the logical ordering and returns $\false$ because the found node is marked.
This is not linearizable.
To see why, consider the execution of the above scenario:
\[
	\begin{tikzpicture}
		\draw[|-|,thick] (0,0) -- node[above,yshift=-.7mm] {\small\code{contains($77$)}$=\true$} (3,0);
		\draw[|-|,thick] (3.5,0) -- node[above,yshift=-.7mm] {\small\code{insert($77$)}$=\true$} (6.5,0);
		\draw[|-|,thick] (7,0) -- node[above,yshift=-.7mm] {\small\code{contains($77$)}$=\false$} (10,0);
		\draw[|-|,thick] (5,-.7) -- node[above,yshift=-.7mm] {\small\code{delete($77$)}$=\true$} (11,-.7);
		\node at (-1.5,0) {(Thread $t$)};
		\node at (-1.5,-.7) {(Thread $t'$)};
		\draw[densely dotted] (3.25,.4) -- (3.25,-1.25);
		\node at (3.25,-1.15) [anchor=east, align=right] {{\scriptsize first state of \Cref{fig:lotree:bug-duplicate}}};
		\draw[densely dotted] (6.75,.4) -- (6.75,-1.25);
		\node at (6.75,-1.15) [anchor=west, align=left] {{\scriptsize last state of \Cref{fig:lotree:bug-duplicate}}};
	\end{tikzpicture}
\]
The first \code{contains($77$)} of thread $t$ is executed in the first state of \Cref{fig:lotree:bug-duplicate}, certifying that $77$ is indeed in the data structure.
Then, we perform the insertion and deletion of $77$ concurrently as described above.
After the insertion is finished, the same thread starts \code{contains($77$)} which returns $\false$, again as described above.
There are three possible linearizations of that execution: \[
	\begin{array}{llll}
		\mymathtt{contains($77$)}=\true
			;& \makeBug{\mymathtt{insert($77$)}=\true}
			;& \mymathtt{contains($77$)}=\false
			;& \mymathtt{delete($77$)}=\true
		\\
		\mymathtt{contains($77$)}=\true
			;& \makeBug{\mymathtt{insert($77$)}=\true}
			;& \mymathtt{delete($77$)}=\true
			;& \mymathtt{contains($77$)}=\false
		\\
		\mymathtt{contains($77$)}=\true
			;& \mymathtt{delete($77$)}=\true
			;& \mymathtt{insert($77$)}=\true
			;& \makeBug{\mymathtt{contains($77$)}=\false}
	\end{array}
\]
It is easy to see that all linearizations \makeBug{violate} the sequential specification of a set data type, meaning that the implementation is not linearizable.

The above linearizations also reveal that we can alleviate the problem by making the second \code{contains} return $\true$ so that the last linearization complies with the sequential specification of a set data type.
Our implementation from \Cref{fig:lotree:impl} achieves this by adding \Cref{code:lotree:fix-contains}: after the tree traversal, the logical ordering is followed ($\predsel$ fields) until an unmarked node is encountered.
This ensures that the final result is not \emph{confused} by concurrent deletions.
Once at an unmarked node, \code{contains} proceeds as devised by \citet{DBLP:conf/ppopp/DrachslerVY14}.

\smartparagraph{Bug 2: Insertion Order}
\looseness=-1
\citet{DBLP:journals/pacmpl/FeldmanKE0NRS20} identified another bug in the \code{insert} method.
In the original version by \citet{DBLP:conf/ppopp/DrachslerVY14}, new nodes are inserted first into the backward logical ordering and then into the forward one (compared to \Cref{fig:lotree:impl}, \Cref{code:lotree:fix-insert-succ,code:lotree:fix-insert-pred} are reversed).
To see why this is problematic, assume an insertion of a new node $\pnode$ with value $\vk$ between nodes $\pprev$ and $\pnext$ already linked $\predof{\pnext}$ to $\pnode$ but $\succof{\pprev}$ is still pointing to $\pnext$.
Then, \code{contains($\vk$)} will find $\pnode$ only if the tree traversal takes it to nodes that appear after $\pnext$ in the logical order.
For earlier nodes, \code{contains} will only follow $\succsel$ fields which cannot yet reach $\pnode$.
It is easy to see that this violates linearizability.

We fixed this bug by changing the order in which $\pnode$ is linked into the logical ordering, cf. \Cref{code:lotree:fix-insert-succ,code:lotree:fix-insert-pred}.
\citet{DBLP:journals/pacmpl/FeldmanKE0NRS20} apply the same fix.
However, they also change \code{insert} to link new nodes first into the tree overlay and then into the logical ordering (without modifying \code{contains}).
This violates linearizability:
if a new node $\pnode$ with value $\vk$ is inserted into the tree but not yet into the logical ordering, method \code{contains} will find $\vk$ if and only if it is not affected by concurrent rotations.
This gives rise to a linearizabilty violation similar to the one above.


\begin{figure}[tb]
	{\begingroup
		\centering
		\newcommand{\MYXSHIFT}{15mm}
		\newcommand{\MYSTEP}[1]{%
			\hspace{-0mm}%
			\raisebox{-4.5pt}{\stackon{\bigleadsto}{\text{}}}%
			\hspace{-0mm}%
		}
		\newcommand{\MYMARK}[1]{%
			\node at (#1.south east) [yshift=-2.5pt] {\scalebox{1.2}{\textcolor{colorList}{\rawsymbolNo}}};
		}
		\newcommand{\MYLOCK}[1]{%
			\node at (#1.north) [xshift=0pt,yshift=1pt] {{\color{colorTree}\tiny\dellock}};
		}
		\newcommand{\MYLOCKP}[1]{%
			\node at (#1.south) [xshift=0pt,yshift=-1pt] {{\color{colorTree}\tiny\inslock}};
		}
		\newcommand{\MYLAB}[2][]{%
			\node (lab) at (#2) [#1] {{\small\code{contains($101$)}}};
			\draw[->,>=stealth,line width=.6pt,decorate,decoration={coil,aspect=0,amplitude=1pt, segment length=1.5mm,post length=1mm,pre length=1mm}] (lab) -- (node42);
		}
		\begin{tikzpicture}[flattrees, baseline=(node42.center)]
			\node (root) [] {}
				child {
					node (node5) [treenode] {$5$} edge from parent[draw=none]
						child { node {} edge from parent[draw=none] }
						child [treeptr] {
							node (node42) [treenode] {$42$}
								child { node {} edge from parent[draw=none] }
								child {
									node (node77) [treenode] {$77$} 
										child { node {} edge from parent[draw=none] }
										child {
											node (node99) [treenode] {$99$} 
												child { node {} edge from parent[draw=none] }
												child {
													node (node101) [treenode] {$101$} 
												}
										}
								}
						}
				}
			;
			\node (rootN) at (root) [treenode,xshift=\MYXSHIFT] {$\infty$};
			\draw[treeptr,densely dotted] (rootN) -- (node5);
			\MYLOCKP{node99}
			\begin{scope}[on background layer]
				\draw[listptr] (node5) edge[out=-0,in=180,looseness=2] (node42);
				\draw[listptr] (node42) edge[out=-0,in=180,looseness=2] (node77);
				\draw[listptr] (node77) edge[out=-0,in=180,looseness=2] (node99);
				\draw[listptr] (node99) edge[out=60,in=180,looseness=1.0] (rootN);
			\end{scope}
		\end{tikzpicture}
		\MYSTEP{a}
		\begin{tikzpicture}[flattrees, baseline=(node42.center)]
			\node (root) [] {}
				child {
					node (node5) [treenode] {$5$} edge from parent[draw=none]
						child { node {} edge from parent[draw=none] }
						child [treeptr] {
							node (node42) [treenode] {$42$}
								child { node {} edge from parent[draw=none] }
								child {
									node (node77) [treenode] {$77$} 
										child { node {} edge from parent[draw=none] }
										child {
											node (node99) [treenode] {$99$} 
												child { node {} edge from parent[draw=none] }
												child {
													node (node101) [treenode] {$101$} 
												}
										}
								}
						}
				}
			;
			\node (rootN) at (root) [treenode,xshift=\MYXSHIFT] {$\infty$};
			\draw[treeptr,densely dotted] (rootN) -- (node5);
			\MYLAB{.5,-3.25}
			\MYLOCKP{node99}
			\MYLOCK{node5}
			\MYLOCK{node42}
			\MYLOCK{node77}
			\begin{scope}[on background layer]
				\draw[listptr] (node5) edge[out=-0,in=180,looseness=2] (node42);
				\draw[listptr] (node42) edge[out=-0,in=180,looseness=2] (node77);
				\draw[listptr] (node77) edge[out=-0,in=180,looseness=2] (node99);
				\draw[listptr] (node99) edge[out=60,in=180,looseness=1.0] (rootN);
			\end{scope}
		\end{tikzpicture}
		\MYSTEP{b}
		\begin{tikzpicture}[flattrees, baseline=(node77.center)]
			\node (root) [] {}
				child {
					node (node5) [treenode] {$5$} edge from parent[draw=none]
						child { node {} edge from parent[draw=none] }
						child [treeptr] {
							node (node77) [treenode] {$77$}
								child {
									node (node42) [treenode] {$42$} 
								}
								child {
									node (node99) [treenode] {$99$} 
										child { node {} edge from parent[draw=none] }
										child {
											node (node101) [treenode] {$101$} 
										}
								}
						}
				}
			;
			\node (rootN) at (root) [treenode,xshift=\MYXSHIFT] {$\infty$};
			\draw[treeptr,densely dotted] (rootN) -- (node5);
			\MYLAB{1.0,-3.25}
			\MYLOCKP{node99}
			\MYLOCK{node5}
			\MYLOCK{node42}
			\MYLOCK{node77}
			\begin{scope}[on background layer]
				\draw[listptr] (node5) edge[out=-0,in=180,looseness=1.2] (node42);
				\draw[listptr] (node42) edge[out=0,in=180,looseness=2] (node77);
				\draw[listptr] (node77) edge[out=-0,in=180,looseness=2] (node99);
				\draw[listptr] (node99) edge[out=60,in=180,looseness=1.2] (rootN);
			\end{scope}
		\end{tikzpicture}
	\endgroup}
	\caption{%
		Sequence of events to show that inserting value $101$ into the tree overlay before inserting it into the logical ordering results in \code{contains} finding the value only if it does not experiences rotations, yielding non-linearizable results.
		\label{fig:lotree:bug-feldman}
	}
\end{figure}

To see linearizability violation, consider \Cref{fig:lotree:bug-feldman}.
In the first (leftmost) state, an \code{insert($101$)} has already linked value $101$ into the tree overlay but not yet into the logical ordering---note that this is the order proposed by \citet{DBLP:journals/pacmpl/FeldmanKE0NRS20} and differs from our one in \Cref{fig:lotree:impl}.
(The $\footnotesize\inslock$ lock is the one held by \code{insert($101$)}.)
In the second state, there is a \code{contains($101$)} underway and it reached node $42$.
For the last state, a rotation was executed.
(The $\footnotesize\dellock$ locks are the ones held by the rotation.)
As a consequence of this rotation, the \code{contains($101$)} currently at node $42$ will no longer be able to find $101$.
To obtain a linearizability violation, consider the following execution:
\[
	\begin{tikzpicture}
		\draw[|-|,thick] (.5,0) -- node[above,yshift=-.7mm] {\small\code{contains($101$)}$=\true$} (3.5,0);
		\draw[|-|,thick] (4.0,0) -- node[above,yshift=-.7mm] {\small\code{contains($101$)}$=\false$} (7.5,0);
		\draw[|-|,thick] (8,0) -- node[above,yshift=-.7mm] {\small\code{contains($101$)}$=\true$} (11,0);
		\draw[|-|,thick] (0,-.7) -- node[above,yshift=-.7mm] {\small\code{insert($101$)}$=\true$} (11.25,-.7);
		\node at (-1.5,0) {(Thread $t$)};
		\node at (-1.5,-.7) {(Thread $t'$)};
		\draw[densely dotted] (0.25,.4) -- (0.25,-1.25);
		\node at (0.25,-1.15) [anchor=east, align=right] {{\scriptsize first state of \Cref{fig:lotree:bug-feldman}}};
		\draw[densely dotted] (3.75,.4) -- (3.75,-1.25);
		\node at (3.75,-1.15) [anchor=east, align=left] {{\scriptsize second state of \Cref{fig:lotree:bug-feldman}}};
		\draw[densely dotted] (4.25,.4) -- (4.25,-1.25);
		\node at (4.25,-1.15) [anchor=west, align=left] {{\scriptsize last state of \Cref{fig:lotree:bug-feldman}}};
	\end{tikzpicture}
\]
Here, the first \code{contains($101$)} returns $\true$ as it can find $101$ via the tree overlay.
The second \code{contains($101$)} is the one depicted in \Cref{fig:lotree:bug-feldman}, which cannot find $101$ due to the rotation it experienced, and returns $\false$.
The last \code{contains($101$)} returns $\true$ as can find $101$ via the tree overlay, too.
We point out that the implementation of \code{contains} given by \citet{DBLP:journals/pacmpl/FeldmanKE0NRS20}, like ours from \Cref{fig:lotree:impl}, does not consider the logical ordering if node $101$ is found via the tree overlay and can thus return $\true$ indeed.
There are four possible linearizations for the above execution: \[
	\begin{array}{llll}
		\mymathtt{insert($77$)}=\true
			;& \mymathtt{contains($77$)}=\true
			;& \makeBug{\mymathtt{contains($77$)}=\false}
			;& \mymathtt{contains($77$)}=\true
		\\
		\mymathtt{contains($77$)}=\true
			;& \mymathtt{insert($77$)}=\true
			;& \makeBug{\mymathtt{contains($77$)}=\false}
			;& \mymathtt{contains($77$)}=\true
		\\
		\mymathtt{contains($77$)}=\true
			;& \makeBug{\mymathtt{contains($77$)}=\false}
			;& \mymathtt{insert($77$)}=\true
			;& \mymathtt{contains($77$)}=\true
		\\
		\mymathtt{contains($77$)}=\true
			;& \makeBug{\mymathtt{contains($77$)}=\false}
			;& \mymathtt{contains($77$)}=\true
			;& \mymathtt{insert($77$)}=\true
	\end{array}
\]
It is easy to see that all linearizations \makeBug{violate} the sequential specification of a set data type, meaning that the implementation is not linearizable.
Overall, this means that inserting into the tree overlay before inserting into the logical ordering as done by \citet{DBLP:journals/pacmpl/FeldmanKE0NRS20} is incorrect.
It is worth pointing out that the linearizability violation is independent of the order in which new nodes are inserted into the logical ordering ($\succsel$ first vs. $\predsel$ first).


\newcommand{\lockvar}{l}
\newcommand{\tid}{\mathit{tid}}
\newcommand{\cas}[3]{\mathsf{CAS}(#1, #2, #3)}

\section{A Control-Flow-Sensitive Generalization}\label{Section:Generalization}
Temporal interpolation derives information between $\weakpastof{\apred}$ and $\nowof{\apredp}$ from an abstraction of the program of interest, namely $\stmtof{\theInterference}$. 
This abstraction is control-flow insensitive, and there are situations in which it is too rough. 
Particularly problematic seem to be local computations and mutually exclusive accesses.
As for the mutual exclusion, consider a data structure in which a node's mark field is protected by the node's lock and may be set from false to true and from true to false. 
Imagine we find $\weakpastof{\apred}\cap\nowof{\apredp}$, where $\apred$ expresses, amongst other things, that we have the lock, $n.$\code{lock} $\mapsto 1$, and $\apredp$ says that the node is unmarked, $n.$\code{mark} $\mapsto 0$. 
The goal is to derive $\weakpastof{(\apred\cap\apredp)}$. 
The control-flow insensitive~\ruleref{temporal-interpolation} will not allow us to do so.
Predicate $\apred$ may refer to other fields of node $n$ that are not protected by the lock, and so the predicate will not be stable beyond the moment in the past where we find it.  
However, we will fail to conclude that the mark field was $0$ in that moment. 
The reason is that the self-interferences in $\stmtof{\theInterference}$ may aribtrarily release the lock held in $\apred$,
and then the mark field may experience arbitrary changes on the way to $\apredp$. 
With the control-flow sensitive version of temporal interpolation that we develop below, Rule~\ruleref{temporal-interpolation-cf},  we will be able to conclude that predicate $\apredp$ held true already in the moment we found $\apred$, and we thus have $\weakpastof{(\apred\cap\apredp)}$. 
The reasoning is as follows.
With the control flow at hand, we know that the thread of interest has not released the lock on the way from~$\apred$ to~$\apredp$.
This means no interference can modify the mark field.
We also know that the thread of interest has not modified the mark field. 
Together, the mark field was not changed on the way from $\apred$ to $\apredp$.

To incorporate control-flow information into our program abstraction, the idea is to modify the past predicate $\pastof{\apred}$ to a so-called \emph{history predicate} $\histof{\apred}{\astmt}$. 
The history predicate is meant to say that there has been a moment in the computation in which $\apred$ was true, and from that moment on the thread has executed a sequence of commands from program $\astmt$.  
This latter information is what will make temporal information control-flow sensitive. 
To formalize the semantics of the predicate, we need to adapt the separation algebra. 

Given a separation algebra $(\setstates, \statemult, \emp)$ and a (potentially infinite) set of commands $\setcom$, we define the separation algebra of \emph{histories} $\sethist\defeq\sethisteps.\setstates$ with $\sethisteps\defeq(\setstates^+.\setcom)^*.\setstates^*$.  Histories interleave non-empty sequences of states with commands. 
The intention behind this definition will become clear in a moment. 
The multiplication of histories is similar to the one for computations: we share the past and use the multiplication from the given separation algebra in the current state. 
It is defined, $\ahist_1.\astate_1\statemultdef \ahist_2.\astate_2$, if $\ahist_1=\ahist_2$ and $\astate_1\statemultdef\astate_2$, and in this case yields $\ahist_1.\astate_1\statemult\ahist_2.\astate_2\defeq \ahist_1.(\astate_1\mstar\astate_2)$. 
The set of units is $\emphist\defeq\sethisteps.\emp$.  
\begin{lemma}
If $(\setstates, \statemult, \emp)$ is a separation algebra, so is $(\sethist, \statemult, \emphist)$. 
\end{lemma}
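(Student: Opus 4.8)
The plan is to verify the separation-algebra axioms for $(\sethist, \statemult, \emphist)$ directly, in exact parallel to the already-stated lemma for computations $(\setstates^+, \mstar, \setstates^*.\emp)$. In fact the argument uses nothing about the internal form of histories beyond the following: every element of $\sethist$ has the shape $\ahist.\astate$ with a prefix $\ahist \in \sethisteps$ and a current state $\astate \in \setstates$; the product is defined only when the prefixes coincide; and it then behaves as $\mstar$ on the current states. So one may state and prove the slightly more general fact that for any set $X$ and any separation algebra $(\setstates, \mstar, \emp)$ the ``prefixing'' $(X.\setstates, \statemult, X.\emp)$ with this shared-prefix product is again a separation algebra, and obtain the claim by instantiating $X \defeq \sethisteps$ (just as $X \defeq \setstates^*$ recovers the computation lemma).

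First I would check that $\statemult$ is a well-defined partial operation on $\sethist$: if $\ahist_1.\astate_1 \statemultdef \ahist_2.\astate_2$ then $\ahist_1 = \ahist_2 \in \sethisteps$ and $\astate_1 \statemultdef \astate_2$, so the result $\ahist_1.(\astate_1 \mstar \astate_2)$ lies in $\sethisteps.\setstates = \sethist$. Commutativity is immediate, since the definedness condition ``$\ahist_1 = \ahist_2$ and $\astate_1 \statemultdef \astate_2$'' is symmetric and $\astate_1 \mstar \astate_2 = \astate_2 \mstar \astate_1$.

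The one clause that deserves care is associativity together with its definedness coherence. By definition, $(\ahist_1.\astate_1 \statemult \ahist_2.\astate_2) \statemult \ahist_3.\astate_3$ is defined iff $\ahist_1 = \ahist_2 = \ahist_3$, $\astate_1 \statemultdef \astate_2$, and $(\astate_1 \mstar \astate_2) \statemultdef \astate_3$; applying the corresponding coherence for $\setstates$ this is equivalent to $\ahist_1 = \ahist_2 = \ahist_3$, $\astate_2 \statemultdef \astate_3$, and $\astate_1 \statemultdef (\astate_2 \mstar \astate_3)$, which is exactly definedness of $\ahist_1.\astate_1 \statemult (\ahist_2.\astate_2 \statemult \ahist_3.\astate_3)$; and when defined both sides equal $\ahist_1.((\astate_1 \mstar \astate_2) \mstar \astate_3)$ by associativity in $\setstates$.

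Finally the unit conditions. Given $\ahist.\astate \in \sethist$, pick the unit $\stateunit \in \emp$ of $\astate$ in $\setstates$; then $\ahist.\stateunit \in \sethisteps.\emp = \emphist$ and $\ahist.\astate \statemult \ahist.\stateunit = \ahist.(\astate \mstar \stateunit) = \ahist.\astate$, so every history has a unit in $\emphist$. And for two distinct units $\ahist_1.\stateunit_1 \neq \ahist_2.\stateunit_2$ in $\emphist$: if $\ahist_1 \neq \ahist_2$ the product is undefined by the definition of $\statemult$; otherwise $\stateunit_1 \neq \stateunit_2$ and, since $\setstates$ is a separation algebra, $\stateunit_1 \statemultdef \stateunit_2$ fails, so the product is again undefined. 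Thus there is no real obstacle here; the only thing to get right is the associativity-coherence step, and it reduces transparently to the same statement for $\setstates$.
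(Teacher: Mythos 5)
Your verification is correct and is exactly the routine axiom-check the paper leaves implicit (it states this lemma without proof, just as it does for the analogous computation algebra $(\setstates^+,\mstar,\setstates^*.\emp)$). Your observation that the argument only uses the shared-prefix shape, so that $(X.\setstates,\statemult,X.\emp)$ is a separation algebra for any set $X$ of prefixes, also matches the paper's own remark that the construction "works in general."
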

The idea of a history $\ahist=\astateseq_1.\acom_1\astateseq_2\ldots\astateseq_n.\acom_n.\astateseq_{n+1}$ is to record the commands executed by the thread of interest. 
This means the state change from $\lastof{\astateseq_i}$ to $\firstof{\astateseq_{i+1}}$ is due to an execution of command $\acom_i$. 
We lift the semantics of commands $\acom\in\setcom$ to histories accordingly:
\begin{align*}
\csemof{\acom}(\ahist.\astate)\quad\defeq\quad \setcond{\ahist.\astate.\acom.\astate'}{\astate'\in\semof{\acom}(\ahist.\astate)}
\end{align*}
The state changes within the non-empty sequences of states $\astateseq_i$ are due to interferences from other threads. 
We do not record the command used in the interference, with the idea that the history predicate is meant to track thread-local information. 
The definition is as expected.

The \emph{history predicate} takes as input a state predicate $\apred\subseteq\setstates$ and a program $\astmt$ over $\setcom$:
\begin{align*}
\histof{\apred}{\astmt}\quad \defeq\quad &\sethisteps.\setcond{\astateseq_1.\acom_1.\astateseq_2\ldots \astateseq_{n}.\acom_n.\astateseq_{n+1}}{\firstof{\astateseq_1}\in\apred\wedge  \acom_1\ldots\acom_n\in\astmt}.
\end{align*}
Here, we understand $\astmt$ as a regular language and write $\acom_1\ldots\acom_n\in\astmt$ for membership, meaning the program is run to completion resp. a finite automaton for the language accepts the sequence of commands. 
We also write $\astmt_1\subseteq\astmt_2$ for the corresponding language inclusion.
As a special case, we may have the empty sequence of commands and $\apred$ holding in the current state. 
This means the history predicate has a weak understanding of the past, similar to $\weakpastof{\apred}$.  
\begin{lemma}
  The history predicate has the following properties.
  \begin{inparaenum}[(i)]
    \item[(i)] It is monotonic in both components: $\apred_1\subseteq\apred_2$ and $\astmt_1\subseteq\astmt_2$ imply $\histof{\apred_1}{\astmt_1}\subseteq\histof{\apred_2}{\astmt_2}$. 
    \item[(ii)] The interplay with commands is as expected: $\csemof{\acom}(\histof{\apred}{\astmt})\subseteq\histof{\apred}{\seqof{\astmt}{\acom}}$.
    \item[(iii)] It is interference-free: $\isInterferenceFreeOf[\theInterference]{\histof{\apred}{\astmt}}$. 
    \item[(iv)] With the purpose to track the execution of commands, it is not and should not be frameable.
    \item[(v)] If $\apred$ is intuitionistic, so is $\histof{\apred}{\astmt}$.  
  \end{inparaenum}
\end{lemma}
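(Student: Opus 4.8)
The plan is to establish all five items by directly unfolding the definition of $\histof{\apred}{\astmt}$ and tracking, for each operation, how it affects the two conditions that define membership: that the first \emph{tracked} state lies in $\apred$, and that the recorded command word lies in the language $\astmt$. Throughout I would fix notation: a history $\ahist\in\histof{\apred}{\astmt}$ has the form $\ahist = e.\astateseq_1.\acom_1.\astateseq_2\ldots\astateseq_n.\acom_n.\astateseq_{n+1}$ with prefix $e\in\sethisteps$, non-empty segments $\astateseq_i\in\setstates^+$, $\firstof{\astateseq_1}\in\apred$, and $\acom_1\ldots\acom_n\in\astmt$; the degenerate case $n=0$ (empty command word, $\apred$ holding at $\firstof{\astateseq_1}$) is the recurring subcase to watch.

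Item~(i) is then immediate: widening $\apred$ preserves $\firstof{\astateseq_1}\in\apred$ and widening the language preserves membership of the command word, leaving the prefix untouched. For~(ii) I would observe that executing $\acom$ appends $.\acom.\astate'$ with $\astate'\in\semof{\acom}(\astate)$; this extends the recorded word by the single letter $\acom$ and leaves both $e$ and $\firstof{\astateseq_1}$ in place, so the result lies in $\histof{\apred}{\seqof{\astmt}{\acom}}$ because $\acom_1\ldots\acom_n\in\astmt$ gives $\acom_1\ldots\acom_n\acom\in\seqof{\astmt}{\acom}$ (and for $n=0$, directly $\acom\in\seqof{\astmt}{\acom}$), together with a routine check that the extended sequence is again a well-formed history. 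For~(iii) I would note that an interference by another thread only appends a fresh state to the active segment $\astateseq_{n+1}$, recording no command, hence changes neither $\firstof{\astateseq_1}$ nor the command word --- the same reasoning that makes ordinary past information interference-free, now applied to histories.

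For~(v) I would take $\ahist_1\in\histof{\apred}{\astmt}$ and $\ahist_2\in\sethist$ with $\ahist_1\statemultdef\ahist_2$; by the definition of history multiplication the two have an identical prefix and their product merely replaces the common last state $\astate$ by $\astate\mstar\astate'$ with $\astate'\in\setstates$. If the tracked portion of $\ahist_1$ consists of more than one state, the first tracked state sits in the common prefix and membership is unaffected; otherwise the tracked portion is the single state $\astate\in\apred$, and intuitionism of $\apred$ yields $\astate\mstar\astate'\in\apred$. Either way the product lies in $\histof{\apred}{\astmt}$. For~(iv) I would first make explicit that in the history separation algebra the stuttering step a framed predicate must absorb is \emph{command-labeled}, i.e.\ \emph{frameable} means here that $\ahist.\astate$ in the predicate implies $\ahist.\astate.\acom.\astate$ in the predicate for every command $\acom$. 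Since appending $.\acom.\astate$ extends the recorded command word by $\acom$, this fails already for a singleton-language program: $\histof{\apred}{\acom_0}$ contains a history that tracks exactly one $\acom_0$-step and ends in a state outside $\apred$, but loses membership after a stutter labeled by some $\acom\neq\acom_0$. Hence the history predicate is not frameable, and it must not be: were it frameable it would silently fold commands of the framed-out context into its trace, defeating its purpose of faithfully recording the commands executed by the tracked thread.

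The main obstacle I anticipate is not any single calculation but getting the history bookkeeping exactly right --- in particular, pinning down precisely how interferences act on histories for~(iii), and invoking the correct command-labeled notion of frameability for~(iv) rather than the plain state-stutter notion used for ordinary computation predicates. Once those two points are settled, each of the five items reduces to a short unfolding of the definitions.
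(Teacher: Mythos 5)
Your proposal is correct: all five items follow by the direct unfolding of the definitions of $\sethist$, the history multiplication, $\csemof{\acom}$ on histories, and $\histof{\apred}{\astmt}$ that you carry out, and the paper itself states this lemma without proof, so your argument is exactly the intended one. In particular, you correctly isolate the two points that need care --- that interferences only append states to the final segment without recording a command (for (iii)), and that frameability for histories is the command-labeled stutter $\ahist.\astate \mapsto \ahist.\astate.\acom.\astate$, which extends the recorded word and thus fails for, e.g., a singleton-language program (for (iv)) --- as well as the degenerate single-state case in (v) where intuitionism of $\apred$ is actually used.
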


Neither the separation algebra of histories nor the history predicate require the state changes in a history to respect the semantics of commands and interferences.  
The reason we have not made this requirement, again, is that we do not know the interferences until we have built up a proof for the overall program.  
Fortunately, the set of governed computations provides the missing information.
The intersection $\histof{\apred}{\astmt}\cap\governeddefnb$ will keep from $\histof{\apred}{\astmt}$ only the histories in which the state changes are due to the commands and interferences. 
As before, our proofs will keep the intersection with $\governeddefnb$ implicit, which means we can think about $\histof{\apred}{\astmt}$ as having the expected semantics without having to add the notational overhead.

There is a technicality: we have to slightly redefine the set of governed computations: 
\begin{align*} 
\governeddefnb\quad&\defeq\quad \csemof{\stmtof{\theInterference}}_{\theInterference}(\Sigma)\downarrow
\ . 
\end{align*}
Recall that the program $\stmtof{\theInterference}$ has commands $\commandof{\acpred, \acom} = \mymathtt{atomic}\{\,\mymathtt{assume}(\acpred\mstar\true); \acom\,\}$, which are now recorded in the history. 
The program of interest, in turn, has plain commands~$\acom$. 
For the intersection with $\governeddefnb$ to be meaningful, the projection operation $\downarrow$ strips the atomic block and the assumption from $\commandof{\acpred, \acom}$, which has the effect of recording the block as $\acom$. 

We also lift the now and past predicates $\nowof{\apred}$ and $\pastof{\apred}$ to histories. 
The definition is as expected, and the \Cref{Lemma:PreciseIntuitionistic,Lemma:SLOperators,Lemma:InterplayCapStarPast} continue to hold. 

The analogue of Inclusion~\eqref{Equation:TIRelative} that we would like to use for temporal interpolation is
\begin{align}
\histof{\apred}{\astmt} \cap \nowof{\apredp}\cap\governeddefnb\quad \subseteq\quad \weakpastof{\apredpp}\ .\label{Equation:TICSF}
\end{align}
The hypothesis that, if true for the set $\theInterference$, justifies this inclusion is
\begin{align*}
\proghypof{\apred}{\astmt}{\apredp}{\apredpp}\quad\defeq\quad
\theInterferenceVar\semcalc\hoareOf{\nowof{\apred}}{\enrichof{\astmt}{\theInterferenceVar}}{\nowof{\apredp}\rightarrow\weakpastof{\apredpp}}\ .
\end{align*}
The hypothesis has the same pre- and postcondition as $\weakhypof{\apred}{\apredp}{\apredpp}$, but replaces the program $\stmtof{\theInterference}$ by $\enrichof{\astmt}{\theInterference}$. 
This enriched program uses the control-flow as recorded in $\astmt$, but enriches the commands by information about the states in which they are executed. 
Technically, function $\mathsf{enrich}$ turns every command $\acom$ into a choice over $\commandof{\acpred, \acom}$ with $(\acpred, \acom)\in \theInterference$, and preserves the remaining programming constructs:
\begin{alignat*}{5}
\enrichof{\acom}{\theInterference}\ &\defeq\ \sum_{(\acpred, \acom)\in\theInterference}\commandof{\acpred, \acom}&\qquad
\enrichof{\seqof{\astmt_1}{\astmt_2}}{\theInterference}\ &\defeq\ \seqof{\enrichof{\astmt_1}{\theInterference}}{\enrichof{\astmt_2}{\theInterference}}\\
\enrichof{\loopof{\astmt}}{\theInterference}\ &\defeq\ \loopof{\enrichof{\astmt}{\theInterference}}&\qquad
\enrichof{\choiceof{\astmt_1}{\astmt_2}}{\theInterference}\ &\defeq\ \choiceof{\enrichof{\astmt_1}{\theInterference}}{\enrichof{\astmt_2}{\theInterference}}\ .
\end{alignat*}
It is worth noting that the non-deterministic choice in $\enrichof{\acom}{\theInterference}$ can be avoided if we uniquely label each command in the program of interest (and therefore record a single interference for it).
The analogue of \Cref{Lemma:ATISound} that will guarantee soundness of the control-flow sensitive temporal interpolation rule is this. 

\begin{lemma}\label{Lemma:GeneralizedTISound}
If $\hypholdsof{\theInterference}{\proghypof{\apred}{\astmt}{\apredp}{\apredpp}}$, then $\histof{\apred}{\astmt}\cap \nowof{\apredp}\cap \governeddef\subseteq\weakpastof{\apredpp}$.
\end{lemma}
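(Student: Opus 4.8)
The plan is to mirror the proof of \Cref{Lemma:ATISound}, replacing the flat abstraction $\stmtof{\theInterference}$ by the control-flow-sensitive one $\enrichof{\astmt}{\theInterference}$ and reconciling the plain commands recorded in histories with the enriched ``atomic block'' commands $\commandof{\acpred,\acom}$ via the projection $\downarrow$. First I would unfold $\hypholdsof{\theInterference}{\proghypof{\apred}{\astmt}{\apredp}{\apredpp}}$: by definition there are a set $\thePredicates$ and a predicate $\acpred'$ with $\nowof{\apred}\subseteq\acpred'\in\thePredicates$, with $\isInterferenceFreeOf{\thePredicates}$, and with a derivation $\thePredicates, \theInterference\semCalc\hoareOf{\acpred'}{\enrichof{\astmt}{\theInterference}}{\nowof{\apredp}\rightarrow\weakpastof{\apredpp}}$. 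Then \Cref{Lemma:SoundInclusion} yields $\csemof{\enrichof{\astmt}{\theInterference}}_{\theInterference}(\acpred')\subseteq\nowof{\apredp}\rightarrow\weakpastof{\apredpp}$, and since $\csemof{\enrichof{\astmt}{\theInterference}}_{\theInterference}$ is monotone (it is a union over the argument computations) and $\nowof{\apred}\subseteq\acpred'$, we get $\csemof{\enrichof{\astmt}{\theInterference}}_{\theInterference}(\nowof{\apred})\subseteq\nowof{\apredp}\rightarrow\weakpastof{\apredpp}$.

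The heart of the argument is the claim $\histof{\apred}{\astmt}\cap\governeddef\subseteq\csemof{\enrichof{\astmt}{\theInterference}}_{\theInterference}(\nowof{\apred})\downarrow$. Fix $\ahist$ in the left-hand side. Membership in $\governeddef=\csemof{\stmtof{\theInterference}}_{\theInterference}(\Sigma)\downarrow$ says that $\ahist$ is obtained by stripping atomic blocks from an execution of $\stmtof{\theInterference}$ under $\theInterference$-interferences; in particular every command recorded in $\ahist$ is some $\acom$ for which $(\acpred,\acom)\in\theInterference$ and which was executed from a state satisfying $\acpred\mstar\true$, and every state change not tied to a recorded command is an $\theInterference$-interference. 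Membership in $\histof{\apred}{\astmt}$ supplies a position in $\ahist$ at which $\apred$ holds and after which the recorded commands spell a word $\acom_1\ldots\acom_n\in\astmt$. Splitting $\ahist$ there, the prefix ends in an $\apred$-state, hence lies in $\nowof{\apred}$, and reading the suffix forward exhibits $\ahist$ as the result of running $\enrichof{\astmt}{\theInterference}$ from that prefix under $\theInterference$-interferences: for each $i$ we resolve the choice in $\enrichof{\acom_i}{\theInterference}=\sum_{(\acpred,\acom_i)\in\theInterference}\commandof{\acpred,\acom_i}$ to the branch $\commandof{\acpred_i,\acom_i}$ already validated by the $\governeddef$-execution, and $\acom_1\ldots\acom_n\in\astmt$ makes $\commandof{\acpred_1,\acom_1}\ldots\commandof{\acpred_n,\acom_n}$ a legal control path of $\enrichof{\astmt}{\theInterference}$, while the states inside the interference blocks replay as $\theInterference$-interferences. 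Undoing the atomic-block labelling of the result is exactly $\downarrow$, so $\ahist\in\csemof{\enrichof{\astmt}{\theInterference}}_{\theInterference}(\nowof{\apred})\downarrow$.

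Finally I would combine the two ingredients. Since $\downarrow$ only relabels commands and preserves state sequences, it is monotone, commutes with intersection against the state predicate $\nowof{\apredp}$, and leaves $\weakpastof{\apredpp}$ invariant, so
\begin{align*}
\histof{\apred}{\astmt}\cap\nowof{\apredp}\cap\governeddef
&\ \subseteq\ \bigl(\csemof{\enrichof{\astmt}{\theInterference}}_{\theInterference}(\nowof{\apred})\cap\nowof{\apredp}\bigr)\downarrow\\
&\ \subseteq\ \bigl((\nowof{\apredp}\rightarrow\weakpastof{\apredpp})\cap\nowof{\apredp}\bigr)\downarrow
\ \subseteq\ \weakpastof{\apredpp},
\end{align*}
using the key claim, the inclusion from the first paragraph, and modus ponens together with $\weakpastof{\apredpp}\downarrow\subseteq\weakpastof{\apredpp}$. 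This is the desired statement.

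The step I expect to be the main obstacle is the middle paragraph: the bookkeeping around the redefined $\governeddef$ and the projection $\downarrow$. One must check carefully that stripping each enriched command $\commandof{\acpred,\acom}$ down to $\acom$ is precisely the correspondence relating a $\governeddef$-history (which records plain commands, as does $\histof{\apred}{\astmt}$) to an $\enrichof{\astmt}{\theInterference}$-execution (which records atomic blocks), and that this relabelling is transparent to the state-only predicates $\nowof{\apredp}$ and $\weakpastof{\apredpp}$. A secondary subtlety is that the $\apred$-position handed over by $\histof{\apred}{\astmt}$ need not sit immediately after a recorded command but may fall inside a block of interference states; this is harmless because $\histof{\apred}{\astmt}$ permits the empty command word and splicing a computation at an arbitrary state yields a legal computation, so the suffix execution of $\enrichof{\astmt}{\theInterference}$ may legitimately begin there.
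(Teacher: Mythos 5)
Your proposal is correct and follows essentially the same route as the paper's proof: both unfold the hypothesis via \Cref{Lemma:SoundInclusion} to obtain $\csemof{\enrichof{\astmt}{\theInterference}}_{\theInterference}(\nowof{\apred})\subseteq\nowof{\apredp}\rightarrow\weakpastof{\apredpp}$, both exhibit a given history in $\histof{\apred}{\astmt}\cap\governeddef$ as the $\downarrow$-projection of an execution of the enriched program started from the $\apred$-state supplied by the history predicate, and both conclude by observing that $\nowof{\apredp}$ and $\weakpastof{\apredpp}$ depend only on the state sequence, which the projection preserves. The two subtleties you flag (reconciling plain commands with atomic blocks via $\downarrow$, and the split point for $\apred$) are exactly the points the paper's proof handles by constructing $\ahist_2'$ from $\ahist_2$ and by starting the enriched execution from $\ahist_1.\firstof{\astateseq_1}$.
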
 

The control-flow sensitive version of temporal interpolation is:
\begin{mathpar}
 	\inferH{temporal-interpolation-cf}{
 		\apred, \apredp\text{ intuitionistic}
        }{
		\set{\acpred\cap\pastof{\apredpp}}, \emptyset, \set{(\lastof{\acpred}, \cskip)}, \set{\proghypof{\apred}{\astmt}{\apredp}{\apredpp}}\semcalcti\hoareOf{\acpred\cap\histof{\apred}{\astmt}\cap\nowof{\apredp}}{\cskip}{\acpred\cap\pastof{\apredpp}}
    }
\end{mathpar}
The rule expects a history predicate $\histof{\apred}{\astmt}$ together with $\nowof{\apredp}$ and allows us to conclude $\pastof{\apredpp}$ after a $\cskip$ step, provided the hypothesis $\proghypof{\apred}{\astmt}{\apredp}{\apredpp}$ can be shown to hold for the final set of interferences. 
The rule is used together with the program logic in \Cref{Figure:ProgramLogicTI}. 
Soundness follows very closely the argumentation for \ruleref{temporal-interpolation} in \Cref{Lemma:HistoryTracking}. 

Temporal interpolation only ensure that the predicate $\apredpp$ has been true some time in the past.
For linearizability proofs, it is important to know that this happens while the method executes. 
The proof of the hypothesis $\hoareOf{\nowof{\apred}}{\astmt(\theInterference)}{\nowof{\apredp}\rightarrow \weakpastof{\apredpp}}$ with $\astmt(\theInterference)$ being $\stmtof{\theInterference}$ or $\enrichof{\astmt}{\theInterference}$ already guarantees this: the premise $\nowof{\apred}$ in particular contains the computation consisting of a single state in $\apred$, so if the program takes us to $\nowof{\apredp}\rightarrow\weakpastof{\apredpp}$, then $\apredpp$ can only have been true in between the two moments in time. 
To encode this knowledge into the logical reasoning, a simple way is to work with ghost flags that are raised by ghost commands upon method start or in the moment $\apred$ became true. 
There is a detail: as the past operator $\pastof{\apred}$ has no constraints except for the one state $\apred$, we need to make explicit that in all moments before $\apred$ was true, the flag was down.
This would be done with a predicate of the form $\apredppp^+$. 
We prefer to keep the mechanism of flags implicit, taking for granted that temporal interpolation ensures the existence of appropriate moments.

Temporal interpolation resembles the rule of conjunction, and an interesting question is whether this analogy may lead to a lighter formulation of our proof principle. 
To make the analogy explicit, we would mimic temporal interpolation by executing the following steps:
\begin{inparaenum}[(i)]
\item conduct a proof in which $\hoareof{\acpred\cap \nowof{\apred}}{\astmt}{\acpredp\cap \nowof{\apredp}}$ holds, 
\item conduct a proof $\hoareof{\nowof{\apred}}{\astmt}{\nowof{\apredp}\rightarrow \pastof{\apredpp}}$, and finally
\item conjoin this proof with the original one, resulting in  $\hoareof{\acpred\cap \nowof{\apred}}{\astmt}{\acpredp\cap \nowof{\apredp}\cap\pastof{\apredpp}}$. 
\end{inparaenum}
Unfortunately, the simpler formulation of temporal interpolation has problems the solutions to which lead to the development we have presented. 
First, we will not see $\apred$ in the proof conducted in (i), because it is typically not interference-free.
So we will have to record its occurrence in a past predicate~$\weakpastof{\apred}$. 
But then we need a mechanism to identify the program between  $\apred$ and $\apredp$. 
History predicates offer such a mechanism.   
Another aspect is that in many cases $\nowof{\apredp}\rightarrow \pastof{\apredpp}$ can already be derived for a coarse abstraction of the program.
Self-interferences form such a coarse abstraction that allows for concise subproofs. 
Finally, the subproof conducted in (ii) does not have available the knowledge derived in the outer proof.
We may add the predicate $\acpred$ to the precondition $\nowof{\apred}$, but this means repeating the outer proof.
With the enrichment $\enrichof{\astmt}{\theInterference}$, we add all knowledge from the outer proof, including intermediary assertions, and can focus on the implication to be derived.



\section{Details of Section~\ref{sec:Preliminaries}}\label{Appendix:Preliminaries}

The transition rules among configurations are as follows:
	\begin{mathpar}
		\infrule{}{
			\pcStepOf{\acom}{\acom}{\cskip}
		}
		\and
		\infrule{}{
			\pcStepOf{\seqof{\cskip}{\astmt}}{\cskip}{\astmt}
		}
		\and
		\infrule{}{
			\pcStepOf{\loopof{\astmt}}{\cskip}{\choiceof{\cskip}{\seqof{\astmt}{\loopof{\astmt}}}}
		}
		\\
		\infrule{i \in \set{1,2}}{
			\pcStepOf{\choiceof{\astmt_1}{\astmt_2}}{\cskip}{\astmt_i}
		}
		\and
		\infrule{
			\pcStepOf{\astmt_1}{\acom}{\astmt_1'}
		}{
			\pcStepOf{\seqof{\astmt_1}{\astmt_2}}{\acom}{\seqof{\astmt'_1}{\astmt_2}}
		}
		\and\normalfont
		\infrule{
			\pcStepOf{\astmt_1}{\acom}{\astmt_2}\\
			(\asharedseq_2, \alocalseq_2)\in\csemOf{\acom}(\asharedseq_1, \alocalseq_1)\\\\
			\apc_2=\setcond{j\mapsto(\alocalseq.\alocal.\alocal,\astmt)}{\apc_1(j)=(\alocalseq.\alocal,\astmt)}
		}{
			(\asharedseq_1, \apc_1[i\mapsto(\alocalseq_1, \astmt_1)])
			\progStepRel (\asharedseq_2, \apc_2[i\mapsto(\alocalseq_2, \astmt_2)])
		}
	\end{mathpar}

The initial, accepting, and reachable configurations are defined by:
\begin{align*}
	\initset{\acpred}{\astmt}
		~&\defeq~
		\setcond{(\asharedseq, \apc)}{
			\forall i \; \exists \alocalseq.~~
			\apc(i) = (\alocalseq,\astmt) \wedge (\asharedseq,\alocalseq)\in\acpred
		}
	\\
	\acceptset{\acpredp}
		~&\defeq~
		\setcond{(\asharedseq, \apc)}{
			\forall i,\alocalseq.~~
			\apc(i)=(\alocalseq, \cskip)\Rightarrow (\asharedseq, \alocalseq)\in\acpredp \,
		}
	\\
	\reachset{\aconfig}
		~&\defeq~
		\setcond{\aconfig'}{\aconfig \progStepRel^* \aconfig'}
	\ .
\end{align*}

\begin{definition}
$\subModels\hoareOf{\acpred}{\astmt}{\acpredp}$, if $\reachset{\initset{\acpred}{\astmt}}\subseteq\acceptset{\acpredp}$. 
\end{definition}

The proof system due to \citet{DBLP:journals/pacmpl/MeyerWW22} consists of the following rules:
\begin{mathpar}
	\inferH{com}{
		\csemof{\acom}(\acpred)\subseteq \acpredp
	}{
		\setcompact{\acpredp}, \set{(\acpred, \acom)}\semCalc\hoareOf{\acpred}{\acom}{\acpredp}
	}
	\and
	\inferH{consequence}{
		\thePredicates', \theInterference'\semCalc\hoareOf{\acpred'}{\astmt}{\acpredp'}
		\quad
		\acpred\subseteq \acpred'
		\quad
		\acpredp'\subseteq \acpredp
		\quad
		\thePredicates'\subseteq \thePredicates
		\quad
		\theInterference'\subseteq\theInterference
	}{
		\thePredicates, \theInterference\semCalc\hoareOf{\acpred}{\astmt}{\acpredp}
	}
	\and
	\inferH{frame}{
		\thePredicates, \theInterference\semCalc\hoareOf{\acpred}{\astmt}{\acpredp}
		\quad 
		\acpredpp\text{ frameable}
	}{
		\thePredicates\mstar\acpredpp, \theInterference\mstar\acpredpp\semCalc\hoareOf{\acpred\mstar \acpredpp}{\astmt}{\acpredp\mstar\acpredpp}
	}\and
	\inferH{seq}{
		\thePredicates_1, \theInterference_1\semCalc\hoareOf{\acpred}{\astmt_1}{\acpredp}\;\;\;
		\thePredicates_2, \theInterference_2\semCalc\hoareOf{\acpredp}{\astmt_2}{\acpredpp}
	}{
		\setcompact{\acpredp}\cup\thePredicates_1\cup\thePredicates_2, \theInterference_1\cup\theInterference_2\semCalc\hoareOf{\acpred}{\astmt_1;\astmt_2}{\acpredpp}
	}\and
	\inferH{loop}{
		\thePredicates, \theInterference\semCalc\hoareOf{\acpred}{\astmt}{\acpred}
	}{
		\setcompact{\acpred}\cup\thePredicates, \theInterference\semCalc\hoareOf{\acpred}{\loopof{\astmt}}{\acpred}
	}\and
	\inferH{choice}{
		\thePredicates_1, \theInterference_1\semCalc\hoareOf{\acpred}{\astmt_1}{\acpredp}\;\;\;
		\thePredicates_2, \theInterference_2\semCalc\hoareOf{\acpred}{\astmt_2}{\acpredp}
	}{
		\thePredicates_1\cup\thePredicates_2, \theInterference_1\cup\theInterference_2\semCalc\hoareOf{\acpred}{\choiceof{\astmt_1}{\astmt_2}}{\acpredp}
	}
\end{mathpar}


\section{Proofs of Section~\ref{Section:TemporalInterpolation}}

\begin{proof}[Proof (of Lemma~\ref{Lemma:ProofStrategy})]
We use the fact that $\aninvpred\cap \nowof{\apredp}\subseteq\weakpastof{\apredpp}$ implies $\aninvpred\subseteq \nowof{\apredp}\rightarrow \weakpastof{\apredpp}$.
\end{proof}
\begin{proof}[Proof (of Lemma~\ref{Lemma:InterplayCapStarPast})]
$\subseteq$\quad 
Consider $\acomp.\astate\in \acpredp\mstar\acpredpp\cap \pastof{\apredpp}$. \\
Then $\acomp=\acomp_1.\astatep.\acomp_2$ with $\astatep\in\apredpp$.\\
Moreover, $\astate = \astate_1\mstar \astate_2$ with $\acomp.\astate_1\in \acpredp$ and $\acomp.\astate_2\in\acpredpp$.\\
We thus have $\acomp.\astate_1=\acomp_1.\astatep.\acomp_2.\astate_1\in \pastof{\apredpp}\cap\acpredp$.\\
Moreover, $\acomp.\astate=\acomp.\astate_1\mstar\acomp.\astate_2 \in (\pastof{\apredpp}\cap\acpredp)\mstar\acpredpp$.   \\\\
$\supseteq$\quad We have $(\acpredp\cap \pastof{\apredpp})\mstar\acpredpp\subseteq \acpredp\mstar\acpredpp\cap \pastof{\apredpp}\mstar\acpredpp$.\\
Since $\pastof{\apredpp}$ is intuitionistic by Lemma~\ref{Lemma:PreciseIntuitionistic}, we have $\acpredp\mstar\acpredpp\cap \pastof{\apredpp}\mstar\acpredpp\subseteq \acpredp\mstar\acpredpp\cap \pastof{\apredpp}$.
\end{proof}
\begin{proof}[Proof (of Lemma~\ref{Lemma:FrameElimination})]
The implication from right to left is by definition.
For the implication from left to right, we proceed by Noetherian induction on the height of the derivation tree for $\thePredicates, \theInterference, \theHyp \semcalcti\hoareOf{\acpred}{\astmt}{\acpredp}$.\\
The height of the derivation tree is the maximal number of consecutive rule applications leading to the correctness statement.\\[0.2cm]
{\bfseries Base case}\quad \\[0.2cm]
{\bfseries Case~\ruleref{com-ti}}\quad 
We have 
\begin{align*}
\infer[\ruleref{com-ti}]{\csemof{\acom}(\acpred)\subseteq\acpredp}{
\infer[\ruleref{frame-ti}]{\set{\acpredp}, \set{(\acpred, \acom)}, \emptyset \semcalcti\hoareOf{\acpred}{\acom}{\acpredp}}{\set{\acpredp}\mstar\acpredpp, \set{(\acpred, \acom)}\mstar\acpredpp, \emptyset \semcalcti\hoareOf{\acpred\mstar\acpredpp}{\acom}{\acpredp\mstar\acpredpp}\ .}}
\end{align*}
The task is to find a derivation tree that does not use \ruleref{frame-ti}.\\
The observation is that \ruleref{com-ti} can deal with the framed predicate $\acpred\mstar\acpredpp$ right away.\\
By the locality of commands, $\csemof{\acom}(\acpred)\subseteq\acpredp$ entails $\csemof{\acom}(\acpred\mstar\acpredpp)\subseteq\acpredp\mstar\acpredpp$.\\
So we get
\begin{align*}
\infer[\ruleref{com-ti}]{\csemof{\acom}(\acpred\mstar\acpredpp)\subseteq\acpredp\mstar\acpredpp}{\set{\acpredp\mstar\acpredpp}, \set{(\acpred\mstar\acpredpp, \acom)}, \emptyset \semcalctinf\hoareOf{\acpred\mstar\acpredpp}{\acom}{\acpredp\mstar\acpredpp\ .}}
\end{align*}
Note that $\set{\acpredp\mstar\acpredpp}=\set{\acpredp}\mstar\acpredpp$ and $ \set{(\acpred\mstar\acpredpp, \acom)}= \set{(\acpred, \acom)}\mstar\acpredpp$. \\
This derivation is \ruleref{frame}-free.\\[0.2cm]
{\bfseries Case~\ruleref{temporal-interpolation}}\quad For intuitionistic $\apred, \apredp$, and some $\apredpp$, we have 
\begin{align*}
\infer[\ruleref{temporal-interpolation}]{\apred, \apredp\text{ intuitionistic}}{
\infer[\ruleref{frame-ti}]{\set{\acpred\cap \pastof{\apredpp}}, \set{(\acpred, \cskip)}, \set{\weakhypof{\apred}{\apredp}{\apredpp}} \semcalcti\hoareOf{\acpred\cap\weakpastof{\apred}\cap\nowof{\apredp}}{\cskip}{\acpred\cap \pastof{\apredpp}}}{\set{\acpred\cap \pastof{\apredpp}}\mstar\acpredpp, \set{(\acpred, \cskip)}\mstar\acpredpp, \set{\weakhypof{\apred}{\apredp}{\apredpp}} \semcalcti\hoareOf{(\acpred\cap\weakpastof{\apred}\cap\nowof{\apredp})\mstar\acpredpp}{\cskip}{(\acpred\cap\pastof{\apredpp})\mstar\acpredpp}\ .}}
\end{align*}
We apply \ruleref{temporal-interpolation} followed by \ruleref{consequence-ti}:
\begin{align*}
\infer[\ruleref{temporal-interpolation}]{\apred, \apredp\text{ intuitionistic}}{
\infer[\ruleref{consequence-ti}]{\set{\acpred\mstar\acpredpp\cap \pastof{\apredpp}}, \set{(\acpred\mstar\acpredpp, \cskip)}, \set{\weakhypof{\apred}{\apredp}{\apredpp}} \semcalcti\hoareOf{\acpred\mstar\acpredpp\cap\weakpastof{\apred}\cap\nowof{\apredp}}{\cskip}{\acpred\mstar\acpredpp\cap \pastof{\apredpp}}}{\set{\acpred\cap \pastof{\apredpp}}\mstar\acpredpp, \set{(\acpred, \cskip)}\mstar\acpredpp, \set{\weakhypof{\apred}{\apredp}{\apredpp}} \semcalcti\hoareOf{(\acpred\cap\weakpastof{\apred}\cap\nowof{\apredp})\mstar\acpredpp}{\cskip}{(\acpred\cap\pastof{\apredpp})\mstar\acpredpp}\ .}}
\end{align*}
The equality $\acpred\mstar\acpredpp\cap \pastof{\apredpp}=\set{\acpred\cap \pastof{\apredpp}}\mstar\acpredpp$ is Lemma~\ref{Lemma:PreciseIntuitionistic}.\\
It is also used in the postcondition.\\
For the interferences, $\set{(\acpred, \cskip)}\mstar\acpredpp=\set{(\acpred\mstar\acpredpp, \cskip)}$.\\
We indeed strengthen the precondition, as
\begin{align*}
(\acpred\cap\weakpastof{\apred}\cap\nowof{\apredp})\mstar\acpredpp \subseteq \acpred\mstar\acpredpp\cap\weakpastof{\apred}\mstar\acpredpp\cap\nowof{\apredp}\mstar\acpredpp\subseteq\acpred\mstar\acpredpp\cap\weakpastof{\apred}\cap\nowof{\apredp}.
\end{align*}
The second inclusion uses that $\weakpastof{\apred}$ and $\nowof{\apredp}$ are intuitionistic by Lemma~\ref{Lemma:PreciseIntuitionistic}.\\
The derivation is \ruleref{frame-ti}-free.\\[0.2cm]
{\bfseries Case~\ruleref{temporal-interpolation-unordered}}\quad Similar to the previous case.\\[0.2cm]
{\bfseries Induction step}\quad  We assume that for every correctness statement derived with a tree of height at most $n$, we have a derivation without \ruleref{frame-ti}.\\
We consider a correctness statement that is derived with a tree of height $n+1$ in which the last rule is~\ruleref{frame-ti}. \\
This means we have 
\begin{align*}
\infer[\ruleref{frame-ti}]{\thePredicates, \theInterference, \theHyp \semcalcti\hoareOf{\acpred}{\astmt}{\acpredp}}{\thePredicates\mstar\acpredpp, \theInterference\mstar\acpredpp, \theHyp \semcalcti\hoareOf{\acpred\mstar\acpredpp}{\astmt}{\acpredp\mstar\acpredpp}\ .}
\end{align*}
and the premise has a derivation of height $n$.\\
To eliminate this application of \ruleref{frame-ti}, we consider the rule application that lead to the premise.\\[0.2cm]
{\bfseries Case~\ruleref{frame-ti}}\quad Then for some $\acpredppp$ we have $\thePredicates=\thePredicates'\mstar\acpredppp$, $\theInterference=\theInterference'\mstar\acpredppp$, $\acpred=\acpred'\mstar\acpredppp$, and $\acpredp=\acpredp'\mstar\acpredppp$. \\
The derivation of height $n+1$ thus has the shape
\begin{align*}
\infer[\ruleref{frame-ti}]{\thePredicates', \theInterference', \theHyp \semcalcti\hoareOf{\acpred'}{\astmt}{\acpredp'}}{
\infer[\ruleref{frame-ti}]{\thePredicates'\mstar\acpredppp, \theInterference'\mstar\acpredppp, \theHyp\semcalcti\hoareOf{\acpred'\mstar\acpredppp}{\astmt}{\acpredp'\mstar\acpredppp}}{(\thePredicates'\mstar\acpredppp)\mstar\acpredpp, (\theInterference'\mstar\acpredppp)\mstar\acpredpp, \theHyp \semcalcti\hoareOf{(\acpred'\mstar\acpredppp)\mstar\acpredpp}{\astmt}{(\acpredp'\mstar\acpredppp)\mstar\acpredpp}\ .}}
\end{align*}
We frame $\acpredppp\mstar\acpredpp$ with a single application of \ruleref{frame-ti}:
\begin{align*}
\infer[\ruleref{frame-ti}]{\thePredicates', \theInterference', \theHyp \semcalcti\hoareOf{\acpred'}{\astmt}{\acpredp'}}{\thePredicates'\mstar(\acpredppp\mstar\acpredpp), \theInterference'\mstar(\acpredppp\mstar\acpredpp), \theHyp \semcalcti\hoareOf{\acpred'\mstar(\acpredppp\mstar\acpredpp)}{\astmt}{\acpredp'\mstar(\acpredppp\mstar\acpredpp)}\ .}
\end{align*}
Since separating conjunction is associative, this is the desired correctness statement.\\
The difference, however, is that now the derivation tree has height only $n$.\\
Thus, the induction hypothesis applies and yields a \ruleref{frame-ti}-free derivation. \\[0.2cm]
{\bfseries Case~\ruleref{loop-ti}}\quad Then the derivation tree of height $n+1$ ends with
\begin{align*}
\infer[\ruleref{loop-ti}]{\thePredicates', \theInterference, \theHyp \semcalcti\hoareOf{\acpred}{\astmt}{\acpred}}{
\infer[\ruleref{frame-ti}]{\set{\acpred}\cup \thePredicates', \theInterference, \theHyp \semcalcti\hoareOf{\acpred}{\loopof{\astmt}}{\acpred}}{(\set{\acpred}\cup \thePredicates')\mstar\acpredpp, \theInterference\mstar\acpredpp, \theHyp \semcalcti\hoareOf{\acpred\mstar\acpredpp}{\loopof{\astmt}}{\acpred\mstar\acpredpp}\ .}}
\end{align*}
Note that $\thePredicates$ from above is $\set{\acpred}\cup \thePredicates'$ and $\acpredp$ is $\acpred$. \\
We construct a different end of the derivation tree in which we first apply \ruleref{frame-ti} and then \ruleref{loop-ti}: 
\begin{align*}
\infer[\ruleref{frame-ti}]{\thePredicates', \theInterference, \theHyp \semcalcti\hoareOf{\acpred}{\astmt}{\acpred}}{
\infer[\ruleref{loop-ti}]{\thePredicates'\mstar\acpredpp, \theInterference\mstar\acpredpp, \theHyp \semcalcti\hoareOf{\acpred\mstar\acpredpp}{\astmt}{\acpred\mstar\acpredpp}}{\set{\acpred\mstar\acpredpp}\cup (\thePredicates'\mstar\acpredpp), \theInterference\mstar\acpredpp, \theHyp \semcalcti\hoareOf{\acpred\mstar\acpredpp}{\loopof{\astmt}}{\acpred\mstar\acpredpp}\ .}}
\end{align*}
Since $\set{\acpred\mstar\acpredpp}\cup (\thePredicates'\mstar\acpredpp)=(\set{\acpred}\cup \thePredicates')\mstar\acpredpp$, the result is the desired correctness statement.\\[0.2cm]
The application of \ruleref{frame-ti} in the rewritten proof occurs within a derivation tree of height $n$.\\
By the induction hypothesis, we get $\thePredicates'\mstar\acpredpp, \theInterference\mstar\acpredpp, \theHyp \semcalctinf\hoareOf{\acpred\mstar\acpredpp}{\astmt}{\acpred\mstar\acpredpp}$, 
meaning we can derive the intermediary correctness statement without \ruleref{frame-ti}.\\
Adding another application of \ruleref{loop-ti} keeps the derivation \ruleref{frame-ti}-free.\\[0.2cm]
{\bfseries Case~\ruleref{consequence-ti}}\quad Then the derivation tree of height $n+1$ ends with
\begin{align*}
\infer[\ruleref{consequence-ti}]{\thePredicates', \theInterference', \theHyp' \semcalcti\hoareOf{\acpred'}{\astmt}{\acpredp'}}{
\infer[\ruleref{frame-ti}]{\thePredicates, \theInterference, \theHyp \semcalcti\hoareOf{\acpred}{\astmt}{\acpredp}}{\thePredicates\mstar\acpredpp, \theInterference\mstar\acpredpp, \theHyp \semcalcti\hoareOf{\acpred\mstar\acpredpp}{\astmt}{\acpredp\mstar\acpredpp}\ ,}}
\end{align*}
where $\acpred\subseteq\acpred'$ and $\thePredicates'\subseteq\thePredicates$, $\theInterference'\subseteq\theInterference$, $\theHyp'\subseteq\theHyp$, and $\acpredp'\subseteq\acpredp$. \\
We construct a different end of the derivation tree with first \ruleref{frame-ti} and then \ruleref{consequence-ti}: 
\begin{align*}
\infer[\ruleref{frame-ti}]{\thePredicates', \theInterference', \theHyp' \semcalcti\hoareOf{\acpred'}{\astmt}{\acpredp'}}{
\infer[\ruleref{consequence-ti}]{\thePredicates'\mstar\acpredpp, \theInterference'\mstar\acpredpp, \theHyp'\semcalcti\hoareOf{\acpred'\mstar\acpredpp}{\astmt}{\acpredp'\mstar\acpredpp}}{\thePredicates\mstar\acpredpp, \theInterference\mstar\acpredpp, \theHyp \semcalcti\hoareOf{\acpred\mstar\acpredpp}{\astmt}{\acpredp\mstar\acpredpp}\ .}}
\end{align*}
To be able to apply \ruleref{consequence-ti}, we note that $\acpred\subseteq\acpred'$ entails $\acpred\mstar\acpredpp\subseteq \acpred'\mstar\acpredpp$, $\thePredicates'\subseteq\thePredicates$ entails $\thePredicates'\mstar\acpredpp\subseteq\thePredicates\mstar\acpredpp$, and similar for the other inclusions.\\[0.2cm]
The application of \ruleref{frame-ti} in the rewritten proof occurs within a derivation tree of height $n$.\\
By the induction hypothesis, we get $\thePredicates'\mstar\acpredpp, \theInterference'\mstar\acpredpp, \theHyp' \semcalctinf\hoareOf{\acpred'\mstar\acpredpp}{\astmt}{\acpredp'\mstar\acpredpp}$.\\
Adding another application of \ruleref{consequence-ti} keeps the derivation \ruleref{frame-ti}-free.\\[0.2cm]
{\bfseries Case~\ruleref{seq-ti}}\quad Then the derivation tree of height $n+1$ ends with
\begin{align*}
\infer[\ruleref{seq-ti}]{\thePredicates_1, \theInterference_1, \theHyp_1 \semcalcti\hoareOf{\acpred}{\astmt_1}{\acpredppp}
\quad
\thePredicates_2, \theInterference_2, \theHyp_2 \semcalcti\hoareOf{\acpredppp}{\astmt_2}{\acpredp}
}{
\infer[\ruleref{frame-ti}]{\set{\acpredppp}\cup \thePredicates_1\cup\thePredicates_2, \theInterference_1\cup\theInterference_2, \theHyp_1\cup \theHyp_2 \semcalcti\hoareOf{\acpred}{\seqof{\astmt_1}{\astmt_2}}{\acpredp}}{(\set{\acpredppp}\cup \thePredicates_1\cup\thePredicates_2)\mstar\acpredpp, (\theInterference_1\cup\theInterference_2)\mstar\acpredpp, \theHyp_1\cup \theHyp_2 \semcalcti\hoareOf{\acpred\mstar\acpredpp}{\seqof{\astmt_1}{\astmt_2}}{\acpredp\mstar\acpredpp}\ .}}
\end{align*}
So $\thePredicates$ from above is $\set{\acpredppp}\cup \thePredicates_1\cup\thePredicates_2$ and similar for the other components.\\
We construct a different end of the derivation tree in which we first apply \ruleref{frame-ti} to the correctness statements from the two branches and then \ruleref{seq-ti}: 
\begin{align*}
\infer{\infer[\ruleref{frame-ti}]{\thePredicates_1, \theInterference_1, \theHyp_1 \semcalcti\hoareOf{\acpred}{\astmt_1}{\acpredppp}}{\thePredicates_1\mstar\acpredpp, \theInterference_1\mstar\acpredpp,  \theHyp_1 \semcalcti\hoareOf{\acpred\mstar\acpredpp}{\astmt_1}{\acpredppp\mstar\acpredpp}}
\quad
\infer[\ruleref{frame-ti}]{\thePredicates_2, \theInterference_2, \theHyp_2 \semcalcti\hoareOf{\acpredppp}{\astmt_2}{\acpredp}}{\thePredicates_2\mstar\acpredpp, \theInterference_2\mstar\acpredpp, \theHyp_2 \semcalcti\hoareOf{\acpredppp\mstar\acpredpp}{\astmt_2}{\acpredp\mstar\acpredpp}}
}{\set{\acpredppp\mstar\acpredpp}\cup \thePredicates_1\mstar\acpredpp\cup\thePredicates_2\mstar\acpredpp, \theInterference_1\mstar\acpredpp\cup\theInterference_2\mstar\acpredpp, \theHyp_1\cup\theHyp_2 \semcalcti\hoareOf{\acpred\mstar\acpredpp}{\seqof{\astmt_1}{\astmt_2}}{\acpredp\mstar\acpredpp}\ .}
\end{align*}
Since $\set{\acpredppp\mstar\acpredpp}\cup (\thePredicates_1\mstar\acpredpp)\cup(\thePredicates_2\mstar\acpredpp)=(\set{\acpredppp}\cup \thePredicates_1\cup\thePredicates_2)\mstar\acpredpp$ and similar for the other components, the result is the desired correctness statement.\\[0.2cm]
The applications of \ruleref{frame-ti} in the rewritten proof occur within derivation trees of height at most~$n$.\\
By the induction hypothesis, we get $\thePredicates_1\mstar\acpredpp, \theInterference_1\mstar\acpredpp, \theHyp_1 \semcalctinf\hoareOf{\acpred\mstar\acpredpp}{\astmt_1}{\acpredppp\mstar\acpredpp}$ and $\thePredicates_2\mstar\acpredpp, \theInterference_2\mstar\acpredpp, \theHyp_2 \semcalctinf\hoareOf{\acpredppp\mstar\acpredpp}{\astmt_2}{\acpredp\mstar\acpredpp}$.\\
Adding another application of \ruleref{seq-ti} keeps the derivation \ruleref{frame-ti}-free.\\[0.2cm]
{\bfseries Case~\ruleref{choice-ti}}\quad Similar to the previous case.
\end{proof}
\begin{proof}[Proof (of Lemma~\ref{Lemma:HistoryTracking})]
We proceed by Noetherian induction on the height of \ruleref{frame-ti}-free derivations.\\[0.2cm]
{\bfseries Base case}\quad \\[0.2cm] 
{\bfseries Case~\ruleref{com-ti}}\quad Consider
\begin{align*}
\infer[\ruleref{com-ti}]{\csemof{\acom}(\acpred)\subseteq\acpredp}{
\set{\acpredp}, \set{(\acpred, \acom)}, \emptyset \semcalctinf\hoareOf{\acpred}{\acom}{\acpredp}\ .}
\end{align*}
Consider $\theInterference$ with $\set{(\acpred, \acom)}\subseteq\theInterference$. \\
We have $\csemof{\acom}(\acpred\cap\governeddef)\subseteq\csemof{\acom}(\acpred)\cap\governeddef\subseteq \acpredp\cap \governeddef$.\\
The latter inclusion is by the assumption $\csemof{\acom}(\acpred)\subseteq\acpredp$.\\
To see the former, consider $\acomp.\astate.\astate'\in \csemof{\acom}(\acpred\cap\governeddef)$.\\
Then $\acomp.\astate\in\governeddef$ by definition.\\
Moreover, we have $\acomp.\astate\in\acpred$.\\
Hence, the state change $\astate.\astate'$ is covered by the interference $\set{(\acpred, \acom)}$. \\
Since $\set{(\acpred, \acom)}\subseteq\theInterference$, we get $\acomp.\astate.\astate'\in\governeddef$ as required.\\[0.2cm]
The inclusion allows us to derive 
\begin{align*}
\infer[\ruleref{com}]{\csemof{\acom}(\acpred\cap\governeddef)\subseteq\acpredp\cap\governeddef}{
\infer[\ruleref{consequence}]{\set{\acpredp\cap\governeddef}, \set{(\acpred\cap\governeddef, \acom)} \semcalc\hoareOf{\acpred\cap\governeddef}{\acom}{\acpredp\cap\governeddef}}{
\set{\acpredp}\cap\governeddef, \theInterference \semcalc\hoareOf{\acpred\cap\governeddef}{\acom}{\acpredp\cap\governeddef}\ .}
}
\end{align*}
For \ruleref{consequence}, $\set{\acpredp\cap\governeddef} = \set{\acpredp}\cap\governeddef$ and $\set{(\acpred\cap\governeddef, \acom)}\subseteq\set{(\acpred, \acom)}\subseteq\theInterference$.\\[0.2cm] 
{\bfseries Case~\ruleref{temporal-interpolation}}\quad Consider 
\begin{align*}
\infer[\ruleref{temporal-interpolation}]{\apred, \apredp\text{ intuitionistic}}{
\set{\acpred\cap\pastof{\apredpp}}, \set{(\acpred, \cskip)}, \set{\weakhypof{\apred}{\apredp}{\apredpp}} \semcalctinf\hoareOf{\acpred\cap\weakpastof{\apred}\cap\nowof{\apredp}}{\cskip}{\acpred\cap\pastof{\apredpp}}\ .}
\end{align*}
Let $\theInterference$ be a set of interferences with $\set{(\acpred, \cskip)}\subseteq \theInterference$ so that $\hypholdsof{\theInterference}{\weakhypof{\apred}{\apredp}{\apredpp}}$. \\
To obtain an ordinary derivation, we first apply \ruleref{com} and get
\begin{align*}
\infer[\ruleref{com}]{\csemof{\cskip}(\acpred\cap\weakpastof{\apredpp}\cap \governeddef)\subseteq\acpred\cap\pastof{\apredpp}\cap \governeddef}{
\set{\acpred\cap\pastof{\apredpp}\cap \governeddef}, \set{(\acpred\cap\weakpastof{\apredpp}\cap \governeddef, \cskip)} \semcalc\hoareOf{\acpred\cap\weakpastof{\apredpp}\cap\governeddef}{\cskip}{\acpred\cap\pastof{\apredpp}\cap \governeddef}\ .}
\end{align*}
To see the inclusion in the premise, we have $\csemof{\cskip}(\acpred)\subseteq \acpred$, 
because $\acpred$ is frameable. \\
Then $\csemof{\cskip}(\weakpastof{\apredpp})\subseteq \pastof{\apredpp}$, since $\cskip$ adds an extra step. \\
We thus get $\csemof{\cskip}(\acpred\cap\weakpastof{\apredpp})\subseteq\acpred\cap\pastof{\apredpp}$. \\
We can add the governed computations with the same argument as in the previous case.\\\\
We now apply \ruleref{consequence} and get
\begin{align*}
\infer[\ruleref{consequence}]{
\set{\acpred\cap\pastof{\apredpp}\cap \governeddef}, \set{(\acpred\cap\weakpastof{\apredpp}\cap \governeddef, \cskip)}\semcalc\hoareOf{\acpred\cap\weakpastof{\apredpp}\cap\governeddef}{\cskip}{\acpred\cap\pastof{\apredpp}\cap \governeddef}}{
\set{\acpred\cap\pastof{\apredpp}}\cap \governeddef, \theInterference \semcalc\hoareOf{\acpred\cap\weakpastof{\apred}\cap\nowof{\apredp}\cap\governeddef}{\cskip}{\acpred\cap\pastof{\apredpp}\cap \governeddef}\ .}
\end{align*}
We have generalized the set of interferences and strengthened the precondition. \\
As for the latter, note that we have $\hypholdsof{\theInterference}{\weakhypof{\apred}{\apredp}{\apredpp}}$ by the assumption.\\
Hence, Lemma~\ref{Lemma:ATISound} applies and yields $\acpred\cap\weakpastof{\apred}\cap\nowof{\apredp}\cap\governeddef\subseteq \acpred\cap \weakpastof{\apredpp}\cap\governeddef$.\\
As for the former, we have $\set{(\acpred, \cskip)}\subseteq \theInterference$ by the assumption.\\
This implies $\set{(\acpred\cap\weakpastof{\apredpp}\cap \governeddef, \cskip)}\subseteq\theInterference$. \\[0.2cm]
{\bfseries Case~\ruleref{temporal-interpolation-unordered}}\quad The argumentation is similar to the previous case, but one has to show that $\hypholdsof{\theInterference}{\weakhypof{\apred}{\apredp}{\apredpp}}$ and $\hypholdsof{\theInterference}{\weakhypof{\apredp}{\apred}{\apredpp}}$ justify $\weakpastof{\apred}\cap\weakpastof{\apredp}\cap\governeddefnb\subseteq\weakpastof{\apredpp}$. 
To this end, consider a commputation $\acomp\in \weakpastof{\apred}\cap\weakpastof{\apredp}\cap\governeddefnb$. 
There has been a moment in which $\apred$ was true and a moment in which $\apredp$ was true.
Say $\apred$ was earlier. 
Then $\acomp=\acomp_1.\astate.\acomp_2.\astatep.\acomp_3$ with $\astate\in\apred$ and $\astatep\in\apredp$. 
We thus have $\acomp_1.\astate.\acomp_2.\astatep\in \weakpastof{\apred}\cap\nowof{\apredp}\cap\governeddefnb$. 
Lemma~\ref{Lemma:ATISound} applies and yields $\acomp_1.\astate.\acomp_2.\astatep\in\weakpastof{\apredpp}$. 
The weak past predicate does not change if we append $\acomp_3$, and so also $\acomp\in\weakpastof{\apredpp}$. \\[0.2cm]
{\bfseries Induction step}\quad  We assume that for every correctness statement derived with a tree of height at most $n$, potentially using temporal interpolation but not using \ruleref{frame-ti}, and for all larger sets of interferences that satisfy the hypotheses, we can give an ordinary derivation in which the pre- and postcondition are strengthend by an intersection with the corresponding set of governed computations.\\
We consider a correctness statement that is derived with a tree of height $n+1$ and perform an analysis along the last rule that has been applied.\\[0.2cm]
{\bfseries Case~\ruleref{loop-ti}}\quad Then the derivation tree of height $n+1$ ends with
\begin{align*}
\infer[\ruleref{loop-ti}]{\thePredicates', \theInterference', \theHyp \semcalctinf\hoareOf{\acpred}{\astmt}{\acpred}}{
    \set{\acpred}\cup\thePredicates', \theInterference', \theHyp \semcalctinf\hoareOf{\acpred}{\loopof{\astmt}}{\acpred}\ .
}
\end{align*}
So $\thePredicates$ from above is $\set{\acpred}\cup\thePredicates'$. \\
Consider $\theInterference$  with $\theInterference'\subseteq\theInterference$ and $\hypholdsof{\theInterference}{\theHyp}$.\\[0.2cm]
Since the derivation of $\thePredicates', \theInterference', \theHyp \semcalctinf\hoareOf{\acpred}{\astmt}{\acpred}$ has height $n$, the induction hypothesis yields
\begin{align*}
\thePredicates'\cap\governeddef, \theInterference \semcalc\hoareOf{\acpred\cap\governeddef}{\astmt}{\acpred\cap\governeddef}\ .
\end{align*}
An application of \ruleref{loop} yields
\begin{align*}
\infer[\ruleref{loop}]{\thePredicates'\cap\governeddef, \theInterference\semcalc\hoareOf{\acpred\cap\governeddef}{\astmt}{\acpred\cap\governeddef}
}{
    \set{\acpred\cap\governeddef}\cup(\thePredicates'\cap\governeddef), \theInterference \semcalc\hoareOf{\acpred\cap\governeddef}{\loopof{\astmt}}{\acpred\cap\governeddef}\ .
}
\end{align*}
Since $\set{\acpred\cap\governeddef}\cup(\thePredicates'\cap\governeddef)=(\set{\acpred}\cup\thePredicates')\cap\governeddef$, this is as desired.\\[0.2cm]
{\bfseries Case~\ruleref{consequence-ti}}\quad Then the derivation tree of height $n+1$ ends with
\begin{align*}
\infer[\ruleref{consequence-ti}]{\thePredicates', \theInterference'', \theHyp' \semcalctinf\hoareOf{\acpred'}{\astmt}{\acpredp'}}{
    \thePredicates, \theInterference', \theHyp \semcalctinf\hoareOf{\acpred}{\astmt}{\acpredp}\ ,
}
\end{align*}
where $\acpred\subseteq \acpred'$ and $\thePredicates'\subseteq \thePredicates$, $\theInterference''\subseteq\theInterference'$, $\theHyp'\subseteq\theHyp$, and $\acpredp'\subseteq\acpredp$.\\ 
Consider $\theInterference$ with $\theInterference'\subseteq\theInterference$ and $\hypholdsof{\theInterference}{\theHyp}$.\\[0.2cm]
Since the tree for $\thePredicates', \theInterference'', \theHyp' \semcalctinf\hoareOf{\acpred'}{\astmt}{\acpredp'}$ has height $n$, and since $\theInterference''\subseteq\theInterference$ with $\hypholdsof{\theInterference}{\theHyp'}$, the induction hypothesis yields
\begin{align*}
\thePredicates'\cap \governeddef, \theInterference \semcalc\hoareOf{\acpred'\cap \governeddef}{\astmt}{\acpredp'\cap \governeddef}.
\end{align*}
Since we have $\thePredicates'\subseteq \thePredicates$, we get $\thePredicates'\cap \governeddef\subseteq \thePredicates\cap \governeddef$, and similarly $\acpred\cap \governeddef\subseteq \acpred'\cap \governeddef$ and  $\acpredp'\cap \governeddef\subseteq \acpredp\cap \governeddef$.\\
This justifies an application of Rule~\ruleref{consequence} 
\begin{align*}
\infer[\ruleref{consequence}]{
\thePredicates'\cap \governeddef, \theInterference \semcalc\hoareOf{\acpred'\cap \governeddef}{\astmt}{\acpredp'\cap \governeddef}
}{
\thePredicates\cap \governeddef, \theInterference \semcalc\hoareOf{\acpred\cap \governeddef}{\astmt}{\acpredp\cap \governeddef}\ .
}
\end{align*}
The resulting correctness statement is as desired.\\[0.2cm]
{\bfseries Case~\ruleref{seq-ti}}\quad Then the derivation tree of height $n+1$ ends with
\begin{align*}
\infer[\ruleref{seq-ti}]{
\thePredicates_1, \theInterference_1, \theHyp_1 \semcalctinf\hoareOf{\acpred}{\astmt_1}{\acpredppp}
\quad
\thePredicates_2, \theInterference_2, \theHyp_2 \semcalctinf\hoareOf{\acpredppp}{\astmt_2}{\acpredp}
}{
\set{\acpredppp}\cup \thePredicates_1\cup\thePredicates_2, \theInterference_1\cup\theInterference_2, \theHyp_1\cup\theHyp_2 \semcalctinf\hoareOf{\acpred}{\seqof{\astmt_1}{\astmt_2}}{\acpredp}\ .}
\end{align*}
So $\thePredicates$ from above is $\set{\acpredppp}\cup \thePredicates_1\cup\thePredicates_2$ and similar for the other components.\\
Consider $\theInterference$ with $\theInterference_1\cup\theInterference_2\subseteq\theInterference$ so that $\hypholdsof{\theInterference}{(\theHyp_1\cup\theHyp_2)}$.\\[0.2cm]
Since $\theInterference_1\subseteq\theInterference$ with $\hypholdsof{\theInterference}{\theHyp_1}$ and similar for the second correctness statement, and since the derivation trees for these statements have height at most $n$, the induction hypothesis applies and yields 
\begin{align*}
\thePredicates_1\cap \governeddef, \theInterference &\semcalc\hoareOf{\acpred\cap\governeddef}{\astmt_1}{\acpredppp\cap\governeddef}\\
\thePredicates_2\cap \governeddef, \theInterference &\semcalc\hoareOf{\acpredppp\cap\governeddef}{\astmt_2}{\acpredp\cap\governeddef}\ .
\end{align*}
We use these correctness statements given by the hypothesis as a premise for sequential composition:
\begin{align*}
\infer[\ruleref{seq}]{
\thePredicates_1\cap \governeddef, \theInterference\semcalc\hoareOf{\acpred\cap\governeddef}{\astmt_1}{\acpredppp\cap\governeddef}\\
\thePredicates_2\cap \governeddef, \theInterference \semcalc\hoareOf{\acpredppp\cap\governeddef}{\astmt_2}{\acpredp\cap\governeddef}
}{
\set{\acpredppp\cap\governeddef}\cup (\thePredicates_1\cap\governeddef)\cup(\thePredicates_2\cap\governeddef), \theInterference \semcalc\hoareOf{\acpred\cap\governeddef}{\seqof{\astmt_1}{\astmt_2}}{\acpredp\cap\governeddef}\ .}
\end{align*}
Since $\set{\acpredppp\cap\governeddef}\cup (\thePredicates_1\cap\governeddef)\cup(\thePredicates_2\cap\governeddef)=(\set{\acpredppp}\cup \thePredicates_1\cup\thePredicates_2)\cap\governeddef$, the latter correctness statement is as desired.\\[0.2cm]
{\bfseries Case~\ruleref{choice-ti}}\quad Similar to the previous case.
\end{proof}
\begin{proof}[Proof (of Theorem~\ref{Theorem:Soundness})]
Consider $\thePredicates, \theInterference, \theHyp \semcalcti\hoareOf{\acpred}{\astmt}{\acpredp}$ with $\acpred\in\thePredicates$, $\isInterferenceFreeOf[\theInterference]{\thePredicates}$, and $\hypholdsof{\theInterference}{\theHyp}$. \\
We have to show $\thePredicates\cap \governeddef, \theInterference \semCalc\hoareOf{\acpred\cap\governeddef}{\astmt}{\acpredp\cap \governeddef}$ with  $\acpred\cap\governeddef\in\thePredicates\cap\governeddef$ and $\isInterferenceFreeOf[\theInterference]{(\thePredicates\cap\governeddef)}$. \\\\
For the derivation, Lemma~\ref{Lemma:FrameElimination} yields $\thePredicates, \theInterference, \theHyp \semcalctinf\hoareOf{\acpred}{\astmt}{\acpredp}$.\\
Now Lemma~\ref{Lemma:HistoryTracking} plus $\hypholdsof{\theInterference}{\theHyp}$ shows $\thePredicates, \theInterference \semcalc\hoareOf{\acpred\cap\governeddef}{\astmt}{\acpredp\cap\governeddef}$, as desired.\\\\
If $\acpred\in\thePredicates$, then $\acpred\cap\governeddef\in\thePredicates\cap\governeddef$ by the definition of $\thePredicates\cap\governeddef$.\\\\
For interference freedom, we have $\isInterferenceFreeOf[\theInterference]{\thePredicates}$ by the assumption.\\
Moreover, $\isInterferenceFreeOf[\theInterference]{\governeddef}$.\\
If we intersect two interference-free predicates, we obtain an interference-free predicate.\\ 
So the last point $\isInterferenceFreeOf[\theInterference]{(\thePredicates\cap\governeddef)}$ follows. 
\end{proof}

\begin{proof}[Proof (of \Cref{Lemma:GeneralizedTISound})]
Consider $\ahist\in\histof{\apred}{\astmt}\cap \governeddef$. 
We turn it into a history to which $\hypholdsof{\theInterference}{\proghypof{\apred}{\astmt}{\apredp}{\apredpp}}$ applies and allows us to conclude $\nowof{\apredp}\rightarrow\weakpastof{\apredpp}$.

Since $\ahist\in\histof{\apred}{\astmt}$, $\ahist=\ahist_1.\ahist_2$ for some $\ahist_1\in\sethisteps$ and  $\ahist_2=\astateseq_1.\acom_1.\astateseq_2\ldots \astateseq_{n}.\acom_n.\astateseq_{n+1}$ with $\acom_1\ldots\acom_n\in\astmt$.
We also have $\ahist\in\governeddef$, and so $\ahist_2\in\csemof{\stmtof{\theInterference}}_{\theInterference}(\Sigma)\downarrow$.
Hence, there is $\ahist_2'=\astateseq_1.\acom_1'.\astateseq_2\ldots \astateseq_{n}.\acom_n'.\astateseq_{n+1}\in\csemof{\stmtof{\theInterference}}_{\theInterference}(\Sigma)$ with $\acom_i' = \commandof{\acpred_i, \acom_i}$ and $(\acpred_i, \acom_i)\in\theInterference$. 
Together, we obtain $\acom_1'\ldots \acom_n'\in\enrichof{\astmt}{\theInterference}$.

By the definition of $\governeddef$, the state changes in the $\astateseq_i$ are due to the interferences in $\theInterference$. 
Moreover, also the state changes around the commands respect the semantics of the commands. 
Hence, 
\begin{align*}
\ahist_1.\ahist_2'\in \csemof{\enrichof{\astmt}{\theInterference}}_{\theInterference}(\ahist_1.\firstof{\astateseq_1})\ .
\end{align*}
By the definition of $\histof{\apred}{\astmt}$, we get $\firstof{\astateseq_1}\in\apred$. 
This means $\ahist_1.\ahist_2'\in \csemof{\enrichof{\astmt}{\theInterference}}_{\theInterference}(\nowof{\apred})$.

The assumption $\hypholdsof{\theInterference}{\proghypof{\apred}{\astmt}{\apredp}{\apredpp}}$ and \Cref{Lemma:SoundInclusion} justify the inclusion 
\begin{align*}
\csemof{\enrichof{\astmt}{\theInterference}}_{\theInterference}(\nowof{\apred})\quad \subseteq\quad \nowof{\apredp}\rightarrow\weakpastof{\apredpp}\ .
\end{align*}
Hence, $\ahist_1.\ahist_2'\in \nowof{\apredp}\rightarrow\weakpastof{\apredpp}$. 
Since the now and the weak past predicate only refer to the states in the computation, which coincide for $\ahist_1.\ahist_2'$ and $\ahist_1.\ahist_2$, we can conclude 
$\ahist_1.\ahist_2\in \nowof{\apredp}\rightarrow\weakpastof{\apredpp}$, as desired.
\end{proof}


\begin{proof}[Proof (of \Cref{Lemma:FrameElimination})]
To eliminate \ruleref{frame-ti}, we proceed by Noetherian induction on the height of the derivation tree for $\thePredicates, \theInterference, \theHyp \semcalcti\hoareOf{\acpred}{\astmt}{\acpredp}$. 
The height of the derivation tree is the maximal number of consecutive rule applications leading to the correctness statement.
We give here the base case of \ruleref{temporal-interpolation} followed by \ruleref{frame-ti}. 
Consider intuitionistic predicates $\apred, \apredp$, and a predicate $\apredpp$. 
We have 
\begin{align*}
\infer[\ruleref{temporal-interpolation}]{\apred, \apredp\text{ intuitionistic}}{
\infer[\ruleref{frame-ti}]{\set{\acpred\cap \pastof{\apredpp}}, \set{(\acpred, \cskip)}, \set{\weakhypof{\apred}{\apredp}{\apredpp}} \semcalcti\hoareOf{\acpred\cap\weakpastof{\apred}\cap\nowof{\apredp}}{\cskip}{\acpred\cap \pastof{\apredpp}}}{\set{\acpred\cap \pastof{\apredpp}}\mstar\acpredpp, \set{(\acpred, \cskip)}\mstar\acpredpp, \set{\weakhypof{\apred}{\apredp}{\apredpp}} \semcalcti\hoareOf{(\acpred\cap\weakpastof{\apred}\cap\nowof{\apredp})\mstar\acpredpp}{\cskip}{(\acpred\cap\pastof{\apredpp})\mstar\acpredpp}\ .}}
\end{align*}
For \ruleref{frame-ti}-free derivations, we apply \ruleref{temporal-interpolation} followed by \ruleref{consequence-ti}:
\begin{align*}
\infer[\ruleref{temporal-interpolation}]{\apred, \apredp\text{ intuitionistic}}{
\infer[\ruleref{consequence-ti}]{\set{\acpred\mstar\acpredpp\cap \pastof{\apredpp}}, \set{(\acpred\mstar\acpredpp, \cskip)}, \set{\weakhypof{\apred}{\apredp}{\apredpp}} \semcalcti\hoareOf{\acpred\mstar\acpredpp\cap\weakpastof{\apred}\cap\nowof{\apredp}}{\cskip}{\acpred\mstar\acpredpp\cap \pastof{\apredpp}}}{\set{\acpred\cap \pastof{\apredpp}}\mstar\acpredpp, \set{(\acpred, \cskip)}\mstar\acpredpp, \set{\weakhypof{\apred}{\apredp}{\apredpp}} \semcalcti\hoareOf{(\acpred\cap\weakpastof{\apred}\cap\nowof{\apredp})\mstar\acpredpp}{\cskip}{(\acpred\cap\pastof{\apredpp})\mstar\acpredpp}\ .}}
\end{align*}
The equality $\acpred\mstar\acpredpp\cap \pastof{\apredpp}=\set{\acpred\cap \pastof{\apredpp}}\mstar\acpredpp$ is \Cref{Lemma:PreciseIntuitionistic}. 
It is also used in the postcondition. 
For the interferences, we have $\set{(\acpred\mstar\acpredpp, \cskip)}=\set{(\acpred, \cskip)}\mstar\acpredpp$. 
We indeed strengthen the precondition, as
\begin{align*}
(\acpred\cap\weakpastof{\apred}\cap\nowof{\apredp})\mstar\acpredpp \subseteq \acpred\mstar\acpredpp\cap\weakpastof{\apred}\mstar\acpredpp\cap\nowof{\apredp}\mstar\acpredpp\subseteq\acpred\mstar\acpredpp\cap\weakpastof{\apred}\cap\nowof{\apredp}.
\end{align*}
The second inclusion uses that $\weakpastof{\apred}$ and $\nowof{\apredp}$ are intuitionistic by \Cref{Lemma:PreciseIntuitionistic}.
\end{proof}

\begin{proof}[Proof (of \Cref{Lemma:HistoryTracking})]
We again proceed by Noetherian induction on the height of \ruleref{frame-ti}-free derivations and consider the difficult base case of \ruleref{temporal-interpolation}:
\begin{align*}
\infer{\apred, \apredp\text{ intuitionistic}}{
\set{\acpred\cap\pastof{\apredpp}}, \set{(\acpred, \cskip)}, \set{\weakhypof{\apred}{\apredp}{\apredpp}} \semcalctinf\hoareOf{\acpred\cap\weakpastof{\apred}\cap\nowof{\apredp}}{\cskip}{\acpred\cap\pastof{\apredpp}}\ .}
\end{align*}
Let $\theInterference$ be a set of interferences with $\set{(\acpred, \cskip)}\subseteq \theInterference$ so that $\hypholdsof{\theInterference}{\weakhypof{\apred}{\apredp}{\apredpp}}$. 
To obtain an ordinary derivation, we first apply \ruleref{com} and get
\begin{align*}
\infer{\csemof{\cskip}(\acpred\cap\weakpastof{\apredpp}\cap \governeddef)\subseteq\acpred\cap\pastof{\apredpp}\cap \governeddef}{
\set{\acpred\cap\pastof{\apredpp}\cap \governeddef}, \set{(\acpred\cap\weakpastof{\apredpp}\cap \governeddef, \cskip)} \semcalc\hoareOf{\acpred\cap\weakpastof{\apredpp}\cap\governeddef}{\cskip}{\acpred\cap\pastof{\apredpp}\cap \governeddef}\ .}
\end{align*}
To see the inclusion in the premise, we have $\csemof{\cskip}(\acpred)\subseteq \acpred$, 
because $\cskip$ is the identity and $\acpred$ is frameable. 
Then $\csemof{\cskip}(\weakpastof{\apredpp})\subseteq \pastof{\apredpp}$, since $\cskip$ adds an extra step. 
For the governed computations, consider $\acomp.\astate.\astate\in \csemof{\cskip}(\acpred\cap\governeddef)$. 
Then $\acomp.\astate\in\governeddef$, meaning the state changes in $\acomp.\astate$ are governed by the interferences. 
Moreover, we have $\acomp.\astate\in\acpred$. 
Hence, the state change from $\astate$ to $\astate$ is covered by the interference $\set{(\acpred, \cskip)}$. 
Since $\set{(\acpred, \cskip)}\subseteq\theInterference$, we get $\acomp.\astate.\astate\in\governeddef$ as required. 

We now apply \ruleref{consequence} and get
\begin{align*}
\infer{
\set{\acpred\cap\pastof{\apredpp}\cap \governeddef}, \set{(\acpred\cap\weakpastof{\apredpp}\cap \governeddef, \cskip)}\semcalc\hoareOf{\acpred\cap\weakpastof{\apredpp}\cap\governeddef}{\cskip}{\acpred\cap\pastof{\apredpp}\cap \governeddef}}{
\set{\acpred\cap\pastof{\apredpp}}\cap \governeddef, \theInterference \semcalc\hoareOf{\acpred\cap\weakpastof{\apred}\cap\nowof{\apredp}\cap\governeddef}{\cskip}{\acpred\cap\pastof{\apredpp}\cap \governeddef}\ .}
\end{align*}
We have generalized the set of interferences and strengthened the precondition. 
As for the former, we have $\set{(\acpred, \cskip)}\subseteq \theInterference$ by the assumption. 
This implies $\set{(\acpred\cap\weakpastof{\apredpp}\cap \governeddef, \cskip)}\subseteq\theInterference$. 
As for the latter, note that we have $\hypholdsof{\theInterference}{\weakhypof{\apred}{\apredp}{\apredpp}}$ by the assumption. 
Hence, \Cref{Lemma:ATISound} applies and yields $\acpred\cap\weakpastof{\apred}\cap\nowof{\apredp}\cap\governeddef\subseteq \acpred\cap \weakpastof{\apredpp}\cap\governeddef$. 
\end{proof}


\newcommand{\seqspec}{\mathcal{S}}
\newcommand{\impl}{\mathcal{I}}
\newcommand{\datadomain}{\mathsf{D}}
\newcommand{\opcall}{\mathtt{call}}
\newcommand{\opret}{\mathtt{ret}}
\newcommand{\alloperations}{\mathsf{OP}}
\newcommand{\anoperation}{\mathsf{op}}
\newcommand{\anactpar}{\mathit{a}}
\newcommand{\aretval}{\mathit{v}}
\newcommand{\defcall}{\opcall\ \anoperation(\anactpar)}
\newcommand{\defret}{\aretval = \opret\ \anoperation(\anactpar)}
\newcommand{\dropof}[1]{#1\!\!\downarrow}
\newcommand{\alphabet}{\Gamma}
\newcommand{\wppref}{\mathit{w}'_{\mathit{pre}}}
\newcommand{\wpop}{\mathit{w}'_{\mathit{op}}}

\section{Meta-theory for proving linearizability}
Linearizability assumes to be given a \emph{sequential specification} of an object.
A sequential specification is a language over operation calls and returns in which (i) every operation call is decorated by the actual parameters, (ii) the return immediately follows the call, and (iii) the return is decorated by the return values for the actual parameters. 
Let $\alloperations$ be the set of all operations for accessing the object and for simplicity assume that every operation $\anoperation$ accepts a single parameter $\anactpar$ and returns a single value $\aretval$ from a domain $\datadomain$. 
With this, a sequential specification is a subset 
\begin{align*}
\seqspec\quad \subseteq\quad \setcond{\defcall.\defret}{\anoperation\in\alloperations, \anactpar, \aretval\in\datadomain}^*. 
\end{align*}

Search structures store sets of keys $\abscontent\subseteq \nat$ and their operations modify these sets. 
We can therefore give the sequential specification as a set of predicates $\acssup_{\anoperation}(\abscontent,\abscontentp,\anactpar,\aretval)$, one per operation, that specify this modification relative to a given actual parameter and return value.  
For example, an insertion of key $k$ with return value $\aretval$ would be captured by
\begin{align*}
\acssup_{\texttt{insert}}(\abscontent,\abscontentp,k,\aretval)\quad\defeq\quad \abscontentp = \abscontent \cup\set{k}\ \wedge\ \aretval \Leftrightarrow k\notin\abscontent. 
\end{align*}

With the predicates at hand, we define an automaton whose trace language is the sequential specification of the search structure. 
The automaton is $(\powerset{\nat}, \bigcup_{\anoperation\in\alloperations}E_{\anoperation})$, the states are all possible search structure contents, and we have a set of labeled edges $E_{\anoperation}$ per operation.
This set is defined to contain all transitions
\begin{align*}
\abscontent\xrightarrow{\opcall\ \anoperation(k).\aretval = \opret\ \anoperation(k)} \abscontentp,\quad\text{where }\acssup_{\anoperation}(\abscontent,\abscontentp, k, \aretval)\text{ holds}. 
\end{align*}
We use $\seqspec(\abscontent)$ for the trace language of this automaton when starting in $\abscontent$, and write $\seqspec$ for $\seqspec(\emptyset)$. 

A concurrent implementation of the search structure is a program of the form
\begin{align*}
\astmt\quad = \quad (\sum_{\anoperation\in\alloperations}\astmt_{\anoperation})^* \ .
\end{align*}
Every operation is represented by a piece of code $\astmt_{\anoperation}$, 
and a thread executing the implementation may exercise the operations in arbitrary order. 
The semantics is as defined in Appendix~\ref{Appendix:Preliminaries}. 
The implementation is executed by an arbitrary number of threads, each represented by an id~$i$, which modify a global state from $\setshared$ and a local state from $\setlocal$. 
For linearizability, we need a small addition. 
We assume the execution of the first command in $\astmt_{\anoperation}$, say by thread~$i$, makes visible the letter $\opcall_i\ \anoperation(k)$, the execution of the operation's return command yields $\aretval = \opret_i\ \anoperation(k)$ with $\aretval$ the return value, and the execution of commands inside the operation makes visible the thread id $i$.  
The transition system from Appendix~\ref{Appendix:Preliminaries} then yields a language over the alphabet $\alphabet$ of thread ids and calls and returns decorated with thread ids. 
We write $\impl(\initset{\acpred}{\astmt})$ for the trace language starting in a configuration from $\initset{\acpred}{\astmt}$. 
We simply write $\impl$ if the initial global and local heaps are empty. 
We focus on traces in which all operations execute to completion.

The words in $\impl$ interleave the operations executed by different threads. 
Linearizability admits the following rewriting of such an interleaving:
\begin{align*}
w.a.b.u\quad\rightsquigarrow\quad w.b.a.u,
\end{align*}
provided $a$ and $b$ stem from different threads and it is not the case that $a$ is a return and $b$ a call. 
This means we may arbitrarily order overlapping operations, but may not change the order of consecutive operations (the real-time order).
To make the link to sequential specifications, we define the partial function $\dropof{w}$ that drops thread ids as letters from calls and returns. 
The function is only defined if word $w$ is sequential, meaning $w$ decomposes into infixes of commands by thread $i$ leading from an invocation to the corresponding return.
\begin{definition}{\cite{DBLP:journals/toplas/HerlihyW90}}
A concurrent implementation $\astmt$ is \emph{linearizable} wrt. sequential specification~$\seqspec$, if for every $u\in\impl$ there is $v$ with $u\rightsquigarrow^* v$ so that $\dropof{v}\ \in \seqspec$. 
\end{definition} 

Towards a proof principle for linearizability, we now tie words over $\alphabet$ to runs of the automaton underlying the sequential specification. 
We consider words $\abscontent_1.a_1.\abscontent_2.a_2\ldots$ that interleave search structure contents and thread ids resp. decorated calls and returns.
We call an infix  $\abscontent.i.\abscontentp$ of such a word a \emph{command of thread $i$}. 
We call an infix $\opcall_i\ \anoperation(k)\ .\ u\ .\ \aretval = \opret_i\ \anoperation(k)$ an \emph{operation of thread $i$}, if $u$ does not contain any calls or returns by $i$. 
We call such a word a \emph{computation}, if the projection to every thread yields a sequence of operations of that thread.
Note that a computation does not have to stem from~$\astmt$ but the term applies more broadly.  
Our proof principle is this.
\begin{definition}\label{Definition:AdheresTo}
Operation $\opcall_i\ \anoperation(k)\ .\ u\ .\ \aretval = \opret_i\ \anoperation(k)$ \emph{adheres to the sequential specification}, if (1) it contains a command $\abscontent.i.\abscontentp$ of thread $i$, the \emph{linearization point}, with $\acssup_{\anoperation}(\abscontent,\abscontentp,k,\aretval)$, and (2) for all other commands $\abscontent.i.\abscontentp$ of $i$ we have $\abscontent = \abscontentp$. 
We say that a computation $w$ \emph{adheres to the sequential specification}, if this holds for every operation in $w$.  
We use $w_{\Gamma}$ for the projection to~$\Gamma$. 
\end{definition}
If the proof principle holds, the computation actually is a run of the automaton underlying the sequential specification. 
To see this, note that the commands in (2) do not alter the contents of the data structure, and so the sequential specification can stay in the same state. 
A linearization point may result in a contents modification, in which case the operation's predicate in the sequential specification is guaranteed to hold. 
Since the predicate defines the edges of the automaton underlying the sequential specification, 
the contents modification can be tracked in the automaton. 
\begin{theorem}
If computation $w$ adheres to the sequential specification, then $w_{\Gamma}$ is linearizable.  
\end{theorem}
\begin{proof}
Let $w$ be a computation that adheres to the sequential specification. 
We show that there is a computation $w'$ so that (1) $w'$ adheres to the sequential specification, (2) $w_{\Gamma}\rightsquigarrow^* w'_{\Gamma}$, (3)~$\dropof{w'_{\Gamma}}$ is defined, and (4) the last contents in $w$ and $w'$ is the same.
This is enough to establish linearizability of $w_{\Gamma}$.
Since $w'$ adheres to the sequential specification by (1), we have that $w'_{\Gamma}\in \seqspec$. 
This follows from the paragraph before the theorem, arguing that the automaton underlying the sequential specification has $w'$ as a run. 
We moreover have $w_{\Gamma}\rightsquigarrow^* w'_{\Gamma}$ by~(2) and $\dropof{w'_{\Gamma}}$ defined by (3).

We proceed by induction on the number of linearization points in the computation.  
In the base case of a single linearization point, there is nothing to do.
Assume the claim holds for computations with $n$ linearization points. 
Let computation $w$ have $n+1$ linearization points.
Then $w$ has the shape $w_1.a.w_2.\abscontent.i.\abscontentp.w_3.b.w_4$ so that $\abscontent.i.\abscontentp$ is the last linearization point and $a$ and $b$ are the call and return of the corresponding operation. 
We know that~$w_4$ does not contain calls, otherwise $\abscontent.i.\abscontentp$ would not be the last linearization point.
Moreover, $w_4$ will not contain $i$-commands. 
This allows us to move all commands from $w_4$ before $b$, resulting in $w_1.a.w_2.\abscontent.i.\abscontentp.w_3.w_4.b$. 
Relation $\rightsquigarrow$ allows us to move the commands of other threads out of $w_2$ and $w_3$. 
We do so from left to right in order to preserve the fact that we have a computation and the order of linearization points potentially present in $w_2$.
The result is $w'\defeq \wppref.\wpop$ with
\begin{align*}
\wppref\ \defeq\ w_1.w_2'.w_3'.w_4'\qquad\text{and}\qquad \wpop\ \defeq\ a.w_2''.\abscontent.i.\abscontentp.w_3''.b\ .
\end{align*} 
Here, $w_2'$ and $w_3'$ contain the commands from $w_2$ resp. $w_3$ that belong to threads different from $i$, and $w_2''$ and $w_3''$ contain the $i$-commands.
In $w_2'$ we maintain the memory contents we had in $w_2$. 
In $w_3'$, $w_4'$, and $w_2''$, we change the memory contents to $\abscontent$. 
Note that $w'$ is a computation and $w_{\Gamma}\rightsquigarrow^* w'_{\Gamma}$. 

We argue that $w'$ adheres to the specification, by showing that $\abscontent$ is the last contents in~$w_2'$. 
Let $\abscontent''.j.\abscontent'''$ be the last linearization point in $w_2$ before rewriting.
Since $w$ adheres to the specification, the subsequent commands will not modify the contents and $\abscontent'''=\abscontent$ has to hold. 
Since we move the commands out of $w_2$ from left to right, $\abscontent''.j.\abscontent$ will also be the last linearization point in $w_2'$. 

Since $w'$ is a computation that adheres to the specification and $\wpop$ is an operation, also $\wppref$ is a computation that adheres to the specification. 
Since it only has $n$ linearization points, the induction hypothesis applies to $\wppref$ and yields a computation $u$ with properties (1) to (4). 
We append the last operation and obtain $u' \defeq u.\wpop$. 
Then $u'$ is again a computation. 

We show that $u'$ has properties (1) to (4). 
To see (1), that $u'$ adheres to the sequential specification, note that $u$ adheres to the sequential specification by (1) from the induction hypothesis. 
Moreover, the last contents in $\wppref$ is $\abscontent$, and by (4) from the hypothesis this is also the last contents in~$u$. 
Since~$\abscontent$ is also the first contents in $\wpop$, and since $\wpop$ adheres to the specification, we have that $u'$ adheres to the specification. 
To see (3), note that $\dropof{u}$ is defined by (3) from the hypothesis, and $\wpop$ is a sequential operation, hence $\dropof{u'}$ is defined.
For (4), the last contents in $w$ and $u'$ is $\abscontentp$. 

It remains to show (2), namely $w_{\Gamma}\rightsquigarrow^* u'_{\Gamma}$.
We showed above $w_{\Gamma}\rightsquigarrow^* w'_{\Gamma}$ with $w'=\wppref.\wpop$.
By (2) from the hypothesis, we have ${\wppref}_{\Gamma} \rightsquigarrow^* u_{\Gamma}$. 
The rewriting relation is stable under contexts.
We can thus also execute this rewriting with $\wpop$ appended, yielding $w'_{\Gamma}={\wppref}_{\Gamma}.{\wpop}_{\Gamma} \rightsquigarrow^* u_{\Gamma}.{\wpop}_{\Gamma}=u'_{\Gamma}$.
\end{proof}

To apply the proof principle, we associate with every global state $\ashared$ reachable when executing the concurrent implementation of the search structure its contents $\abscontent(\ashared)\subseteq\nat$. 
It is defined as the unique $\abscontent\subseteq\nat$ for which $\ashared\models \acss(\abscontent)$. 
Recall that $\acss(\abscontent)\defeq \exists\somenodes.~\inv(\abscontent,\somenodes)$, the contents predicate is derived from the invariant, \cref{sec:lotree:locality}. 
Since the invariant is guaranteed to be maintained, $\abscontent(\ashared)$ is guaranteed to be defined. 
With this definition, we can understand the words $w\in \impl$ as interleavings $w=\abscontent(\ashared_1).a_1.\abscontent(\ashared_2).a_2\ldots$
It is readily checked that these interleavings form computations in the above sense.  
We say that $\astmt$ adheres to the sequential specification, if this holds for all $w \in\impl$ when seen as computations. 
\begin{corollary}\label{Corollary:ProofPrinciple}
If $\astmt$ adheres to the sequential specification $\seqspec$, then $\astmt$ is linearizable wrt. $\seqspec$. 
\end{corollary}
The proof rules in Figure~\ref{fig:linearizability-rules} implement the check that the execution of every operation adheres to the sequential specification, and thus Corollary~\ref{Corollary:ProofPrinciple} applies. 
To be precise, Rule~\ruleref{com-lin-void} checks that a command does not alter the search structure content, as required by Condition~(2) in Definition~\ref{Definition:AdheresTo}. 
Rule~\ruleref{com-lin-impure} explicitly checks that contents modification, actual parameter, and return value together respect the predicate specifying the operation.
This is one requirement of Condition~(1), but Definition~\ref{Definition:AdheresTo} requires more: there should be at most one linearization point. 
Uniqueness is guaranteed by the fact that the rule expects an $\anobl$ token in the precondition, produces a $\aful{v}$ token in the postcondition, and a $\aful{v}$ token cannot be transformed into an $\anobl$ token nor can an $\anobl$ token be produced by commands.  
We have argued here about the modification of the search structure contents on the level of rules.
Definition~\ref{Definition:AdheresTo} refers to computations, instead.  
The close correspondence between rules and program semantics is made precise in the program logic's soundness proof (proof of \Cref{Theorem:SoundnessComput}), which can be found in~\cite{DBLP:journals/corr/abs-2207-02355}.



\section{Impure Future-Dependent Linearization Points}
\label{sec:RDCSS}

\newcommand{\lm}{\mathit{lm}}
\newcommand{\rstate}{\mathsf{Rstate}}
\newcommand{\authrstate}{\rstate^\bullet}
\newcommand{\pactive}{\mathsf{Active}}
\newcommand{\inactive}{\mathsf{Inactive}}
\newcommand{\clock}{\mathsf{Clock}}
\newcommand{\proto}{\mathsf{Proto}}
\newcommand{\oblc}[1]{\OBLname_{#1}}
\newcommand{\fracpto}[3]{#1 \stackrel{\scriptsize{#2}}{\mapsto} #3}
\newcommand{\fracPerm}[2]{\text{\nicefrac{#1}{\!#2}}}
\newcommand{\ful}[1]{\FULname_{#1}}
\newcommand{\pful}[1]{\circ \FULname_{#1}}
\newcommand{\fulc}[2]{#1 \mapsto \ful{#2}}
\newcommand{\pfulc}[2]{#1 \mapsto \pful{#2}}

The rule given in \cref{sec:linearizability} for proving linearizability with retrospective reasoning is restricted to pure future-dependent linearization points. However, the approach can be generalized to handle impure future-dependent linearization points, i.e., those that modify the abstract state of the data structure.

In the presence of impure future-dependent linearization points, the
abstract state of the data structure at any given point in time of the
concurrent execution may depend on future thread interferences.
Rather than tracking a single abstract state in the proof, the idea is
to track a set of abstract states, one for each possible future.
This set of abstract states can be defined purely in terms of the
computation history. This idea is inspired
by the original proof of the Herlihy/Wing queue~\cite{DBLP:journals/toplas/HerlihyW90}. A similar idea has also been explored in~\cite{DBLP:conf/esop/KhyzhaDGP17}.

Each of the tracked abstract states carries its own obligation/fulfillment
token for each active operation.
When a thread changes the physical representation of the data
structure, the change may affect the abstract state for some but not
all possible futures.
For the affected abstract state, the proof obligation is to show that
the change is consistent with the sequential specification and that
the associated obligation token can be traded in for the fulfillment
token.
A modification of the data structure may also eliminate some of the
possible abstract states but it must not eliminate all.

At the return point of an operation the proof obligation is to show that the thread has indeed linearized for all possible abstract states at that point. This step can then make use of retrospective reasoning using temporal interpolation, similar to the rule \ruleref{com-lin-pure}.

This more general construction necessitates a helping protocol that governs the transfer of linearizability obligations between threads to handle cases where an impure linearization point of an operation lies in another thread. These proofs are therefore more difficult to automate than those involving only pure future-dependent linearization points.

We consider the automation of proofs involving impure future-dependent linearization points future work. However, to provide evidence that our logical is equipped to express such proofs, we here discuss a second case study: verifying the RDCSS data structure~\cite{DBLP:conf/wdag/HarrisFP02}. This case study involves impure future-dependent linearization points and helping. However, the data structure's abstract state is always uniquely determined by the computation history. So there is still no need to track sets of abstract states in the proof of this data structure.

\subsection{High-level Overview of RDCSS}

RDCSS, which stands for \emph{restricted double compare single swap}, is a data structure that implements a form of multi-word compare and swap operation. The data structure governs a memory location $r$ and its logical value $n$ by an abstract predicate $\rstate(r,n)$. It provides two operations, \lstinline+rdcss+ and \lstinline+get+, whose sequential specification is shown in \Cref{fig:rdcss-spec}. The operation \lstinline+get($r$,$n$)+ simply returns the current logical value $n$ of $r$. The operation \lstinline+rdcss($r$,$\ell$,$n_1$,$m_1$,$n_2$)+ takes a reference to a second memory location $\ell$ and only if the current value of $\ell$ is $m_1$ and the current value of $r$ is $n_1$, does it update $r$ to the new value $n_2$. Otherwise, it leaves $r$ unchanged. In all cases, the operation returns the old value of $r$.

\begin{figure}
  \begin{minipage}[t]{\linewidth}
    \begin{lstlisting}[gobble=4,language=SPL,mathescape=true,escapechar=@]
      $\annot{\rstate(r,n)}$ get($r$) $\annot{v.\; v = n \land \rstate(r,n)}$
      $\annot{\fracpto{\ell}{q}{m} \MSTAR \rstate(r,n)}$ rdcss($r$,$\ell$,$n_1$,$m_1$,$n_2$) $\annot{v.\; v = n \land \fracpto{\ell}{q}{m} \MSTAR \rstate(r,\ite{m=m_1 \land n=n_1}{n_2}{n})}$
    \end{lstlisting}
  \end{minipage}
  \caption{RDCSS data structure specification\label{fig:rdcss-spec}}
\end{figure}

\begin{figure}
  \begin{minipage}[t]{\linewidth}
    \begin{lstlisting}[gobble=4,language=SPL,mathescape=true,escapechar=@]
    datatype Descr = D($\ell$: Ref[Val], $n_1$: Val, $m_1$: Val, $n_2$: Val)
    datatype State = A($n$: Val) | I($\ell$: Ref[Descr])      

    method complete($r$: Ref[State], $d$: Ref[Descr]) {
      val D($\ell$, $n_1$, $m_1$, $n_2$) = !$d$ 
      val $m$ = !$\ell$ @\label{line:rdcss-complete-read-ell}@
      val $n'$ = $m$ == $m_1$ ? $n_2$ : $n_1$
      CmpX($r$, A($d$), I($n'$)) @\label{line:rdcss-complete-try-set-inactive}@
    }
    
    method get($r$: Ref[State]): Val {
      match !$r$ with {
        case I($v$) => return $v$
        case A($d$) =>
          complete($r$, $d$)
          return get($r$)
      }
    }

    method rdcss($r$: Ref[State], $\ell$: Ref[Val], $n_1$: Val, $m_1$: Val, $n_2$: Val): Val {
      val $d$ = new Ref(D($\ell$, $n_1$, $m_1$, $n_2$)) @\label{line:rdcss-alloc-descriptor}@
      val $s$ = CmpX($r$, I($n_1$), A($d$)) @\label{line:rdcss-try-make-active}@
      match $s$ with {
        case I($n$) =>
          if ($n$ == $n_1$) {
            complete($r$, $d$) @\label{line:rdcss-active-complete}@
            return $n_1$
          } else return $n$
        case A($d'$) =>
          complete($r$, $d'$)
          return rdcss($r$, $\ell$, $n_1$, $m_1$, $n_2$)
      }
    }    
  \end{lstlisting}
\end{minipage}
\caption{RDCSS data structure implementation\label{fig:rdcss}}
\end{figure}

An implementation of the data structure is shown in \Cref{fig:rdcss}.
The key challenge for the implementation is that the \lstinline+rdcss+ operation must read $r$ and $\ell$ in a single logically atomic step, even though two physical steps are required to read both locations.
So other threads may interfere and change the value of either location between the two reads.
In particular, the location $\ell$ is extraneous to the data structure and, hence, the client may concurrently update its value while an \lstinline+rdcss+ operation is in progress.
The data structure solves this challenge by maintaining two state modes. 
If the structure is in \emph{inactive} mode, indicated by storing the value $\m{I}(n)$ in $r$, then no \lstinline+rdcss+ operation is in progress and the logical value is $n$. In particular, a \lstinline+get+ operation can simply read out $n$ from the inactive state and return. If an \lstinline+rdcss($r$,$\ell$,$n_1$,$m_1$,$n_2$)+ operation starts, it first checks whether the structure is in inactive mode and whether its value is $n_1$. If yes, then it changes the state into \emph{active} mode by replacing $\m{I}(n_1)$ in $r$ with $\m{A}(d)$ where $d$ is a fresh location allocated on Line~\ref{line:rdcss-alloc-descriptor}. The location $d$ stores a \emph{descriptor} value $\m{D}(\ell,n_1,m_1,n_2)$ that remembers the actual arguments of this \lstinline+rdcss+ operation. The check and update are performed using a single atomic compare and exchange operation (\lstinline+CmpX+) on Line~\ref{line:rdcss-try-make-active}. The \lstinline+CmpX+ returns the old value $s$ of $r$ before the attempted update. If the update succeeded, the operation is completed by calling \lstinline+complete($r$,$d$)+ on Line~\ref{line:rdcss-active-complete}. The \lstinline+complete+ method then reads the value $m$ of $\ell$ (Line~\ref{line:rdcss-complete-read-ell}), and sets the state back to inactive, $\m{I}(n')$, for the new or old value $n' = (\ite{m = m_1}{n_2}{n_1})$ (Line~\ref{line:rdcss-complete-try-set-inactive}).

The correctness of the implementation hinges on the fact that the active state value $\m{A}(d)$ acts like a lock that gives the active \lstinline+rdcss+ operation exclusive access to the abstract state $\rstate(r,n_1)$. Excluding other \lstinline+rdcss+ operations from accessing the abstract state guarantees that at Line~\ref{line:rdcss-complete-read-ell}, $r$ still has the old logical value $n_1$ that it had on Line~\ref{line:rdcss-try-make-active}. Line~\ref{line:rdcss-complete-read-ell} must be the linearization point because it is the only point where one can guarantee that the logical value of $r$ is $n_1$ and, at the same time, the value of $\ell$ is $m$. Concurrent \lstinline+get+ operations are then still prevented from reading the old value $n_1$ between the linearization point and the point when the physical state of the data structure is updated to store the new value $n'$ on Line~\ref{line:rdcss-complete-try-set-inactive}.

A complication in the algorithm is that concurrent operations are not simply blocked while an \lstinline+rdcss+ operation is active. Instead, the implementation provides a fast path: a concurrent operation encountering an active state $\m{A}(d)$ will try to help complete the active \lstinline+rdcss+ operation using the information provided in the descriptor $d$. Consequently, there can be an unbounded number of threads that concurrently read the value $\ell$ on Line~\ref{line:rdcss-complete-read-ell} and then compete for setting $r$ back to the inactive state on 
Line~\ref{line:rdcss-complete-try-set-inactive}. Thus, only the thread who will ``win this race'' and execute the \lstinline+CmpX+ first should linearize the active \lstinline+rdcss+ at Line~\ref{line:rdcss-complete-read-ell}. This makes the linearization point of \lstinline+rdcss+ future-dependent.

\subsection{Linearizability Proof}
\label{sec:rdcss-lin-proof}

\citet{DBLP:journals/pacmpl/JungLPRTDJ20} provided a fully-mechanized proof of RDCSS, correcting a technical issue in an earlier pencil-and-paper proof by \citet{DBLP:phd/ethos/Vafeiadis08}. Both proofs share the same basic idea: one introduces a prophecy variable $p$ for each active \lstinline+rdcss+ operation. The prophecy $p$ predicts the sequence in which the helping threads will execute the \lstinline+CmpX+ on Line~\ref{line:rdcss-complete-try-set-inactive}. By case analysis on the value of $p$ at Line~\ref{line:rdcss-complete-read-ell}, a helping thread can then determine whether it will be the first thread to execute the \lstinline+CmpX+ and should therefore linearize the active \lstinline+rdcss+.

We here provide an alternative proof that uses temporal interpolation instead of prophecy reasoning. However, we note that our proof draws on ideas from~\cite{DBLP:journals/pacmpl/JungLPRTDJ20} to encode the ownership transfer of the linearization obligation and receipt resources between the active and helping threads via the shared data structure invariant.

The need for prophecy variables arises because the linearizability reasoning outlined in \Cref{sec:linearizability} demands that impure operations are committed at the actual linearization point, i.e., Line~\ref{line:rdcss-complete-read-ell} for \lstinline+rdcss+. If we take a closer look at the specification of the operation, we observe that it consists of two parts. The first part is pure and states that $m$ and $n$ are the values of $\ell$ and $r$ at the linearization point, which are then related to $m_1$ and $n_1$. The second part is impure in the case where the logical value of $r$ is updated to the value $n_2$. Without prophecies, the pure part can still be established at the linearization point. However, the impure part can only be established at the point when the winning thread updates the physical state on Line~\ref{line:rdcss-complete-try-set-inactive}. Establishing the two parts at different points in time is permissible if we can show that no other operation can have been linearized between the two points. In a sense, we can think of \lstinline+rdcss+ as having a \emph{linearization interval} rather than a linearization point. All concurrent operations on the data structure logically perceive this interval as a single point, which we identify with the beginning of the interval at Line~\ref{line:rdcss-complete-read-ell}.

To capture this argument formally, we extend the program logic from \Cref{sec:linearizability} for deriving linearizability judgments of the form $\thePredicates, \theInterference, \theHyp \semcalclin \hoareOf{\acpred}{\acom}{\acpredp}$.

We need to augment the program state with auxiliary ghost state for the relevant bookkeeping. First, we introduce a resource $\clock(r,c)$ for $c \in \nat$ that counts the number of \lstinline+get+ and \lstinline+rdcss+ operations that have already linearized. We will use this resource to express that no operations have linearized over some period of time. The underlying separation algebra is that of partial maps from references $r$ to clock values $c$ with disjoint union as composition. The resource is initialized to $\clock(r,0)$ when the instance $r$ is created and the clock is incremented each time a linearizability obligation resource $\oblc{y}$ is fulfilled.

Next, we change the separation algebra of obligation and receipt resources $\Sigma_{\mathsf{lin}}$ to allow a thread to linearize other threads. In particular, we track the two types of resources in different components of the ghost state and endow each with their own separation algebra. First, we introduce a separation algebra $\Sigma_\OBLname$ of multisets of $\oblc{y}$ values with separating conjunction defined as multiset union. The intuition for the multiset structure is that many operations with the same parameter values may be executing concurrently. So we need to track exactly how many such obligations are available at any time. In assertions, we will write $\oblc{y}$ to represent the singleton multiset containing $\oblc{y}$. 

For the receipt resources we give a two-layered construction. First, we introduce a separation algebra $\Sigma_\FULname$ of values of the form $\ful{y,v}$ and $\pful{y,v}$ where each $\pful{y,v}$ value is the unit of the value $\ful{y,v}$ and separating conjunction is undefined in all other cases. The intuition is that once we have obtained a fulfillment resource $\ful{y,v}$, we can snapshot it as a $\pful{y,v}$ to keep a persistent record of its existence even after $\ful{y,v}$ has been consumed by the postcondition of its associated operation. The second step is to lift this separation algebra to partial maps $\nat \pto \Sigma_\FULname$ in the expected way. That is, partial maps $h$ and $h'$ only compose if for every $c \in \nat$ either $h(c)$ or $h'(c)$ is undefined or $h(c)$ and $h'(c)$ compose. In assertions, we write $\fulc{c}{y,v}$ for the singleton map $\{\fulc{c}{y,v}\}$ and similarly for $\pfulc{c}{y,v}$.

We add a $\Sigma_\OBLname$ and a $\Sigma_\FULname$ component to both the global and local state.

Finally, to encode the helping mechanism, our data structure invariant will be a computation predicate rather than a state predicate. However, recall that in our linearizability proof rules, the predicate describing the abstract state of the data structure occurs below past operators in some of the rules. It must therefore be a state predicate. To circumnavigate this issue, we introduce an auxiliary ghost resource that tracks abstract predicates $\rstate(r,n)$ for all the existing RDCSS instances. The underlying separation algebra is that of partial maps from references $r$ to values $n$ with disjoint union as composition. The abstract predicate $\rstate(r,n)$ in assertions thus represents the singleton map $\{r \mapsto n\}$.  Our actual data structure invariant will then be of the form $\authrstate(r,n) = \rstate(r,n) \MSTAR \inv(r,n)$ where $\inv(r,n)$ is a computation predicate that ties $n$ to the physical state of $r$.

The derivation rules for the judgments $\thePredicates, \theInterference, \theHyp \semcalclin \hoareOf{\acpred}{\acom}{\acpredp}$ are appropriately updated to work with the new ghost state. For example, the rule for a pure linearization point (instantiated for \lstinline+get+) now looks like this:
\begin{mathpar}
  \inferhreftop{com-lin-pure}{com-lin-pure-rdcss}{
    \acpred \subseteq \pastOF{v=n\land\rstate(r,n)}\\
    \apred = \oblc{r}\mstar\clock(r,c)\\
    \apredp = \fulc{c}{r,v}\mstar\clock(r,c+1)
  }{
    \acpred\mstar\apredp, \{(\acpred \mstar \apred, \cskip + (\apred \leadsto \apredp))\}, \emptyset
    \semcalclin
    \hoareOf{\acpred\mstar\apred}{\cskip}{\acpred\mstar\apredp}
  }
\end{mathpar}
More interesting is the rule we will use for handling the linearization interval of the \lstinline+rdcss+ operation:
\begin{mathpar}
  \inferH{com-lin-mixed}{
    \thePredicates, \theInterference, \theHyp
    \semcalcti
    \hoareOf{\acpred}{\acom}{\acpredp}\\\\
    \acpred \subseteq \pastOF{\fracpto{\ell}{q}{m} \MSTAR \rstate(r,n) \MSTAR \clock(r,c)} \mstar \inv(r,n)\\\\
    \acpredp \subseteq \inv(r,n') \land n' = \ite{m=m_1 \land n = n_1}{n_2}{n}\\\\
    \apred = \rstate(r,n) \mstar \oblc{r,\ell,n_1,m_1,n_2}\mstar\clock(r,c)\\\\
    \apredp = \rstate(r,n') \mstar \fulc{c}{r,\ell,n_1,m_1,n_2,n}\mstar\clock(r,c+1)
  }{
    \thePredicates \mstar \apredp, \theInterference + (\apred \leadsto \apredp), \theHyp
    \semcalclin
    \hoareOf{\acpred\mstar\apred}{\acom}{\acpredp\mstar\apredp}
  }
\end{mathpar}
The premise of the rule states that we can show that $\acom$ changes the physical state of the data structure such that its logical value is changed from $n$ to $n'$ while preserving the invariant. Moreover, the new value $n'$ satisfies the postcondition of \lstinline+rdcss+. This captures the impure part of the specification. The additional precondition $\pastOF{\fracpto{\ell}{q}{m} \MSTAR \rstate(r,n) \MSTAR \clock(c)}$ then ensures that there was some past state at logical time $c$ when the value of $\ell$ was $m$ and the logical value of $r$ was $n$. This captures the pure part of the specification. Because the abstract state transition also happens at logical time $c$, the specification is logically satisfied at a single point in time.

\paragraph{Data structure invariant.} The invariant $\inv(r,n)$ of the RDCSS data structure that we use for our proof is shown in \Cref{fig:rdcss-invariant}. The disjunction $\inactive(r,n) \lor \pactive(r,n)$ keeps track of the resources associated with the inactive and active state modes of the data structure and ties the logical value $n$ to the physical state. The invariant also keeps track of the clock resource $\clock(r,c)$. Throughout the rest of this section, we just write $\clock(c)$ instead of $\clock(r,c)$ since we will always reason about a single fixed $r$. The final conjunct $\proto(r,c)$ stores some resources for each past operation that has already linearized before time $c$. In particular, it is used to encode the \emph{helping protocol}. That is, it governs the transfer of the fulfillment resource for a completed \lstinline+rdcss+ operation from the helping thread that linearized the operation back to the thread that performed the operation.

The predicate $\inactive(r,n)$ simply stores the resource $r \mapsto \m{I}(n)$, indicating that $r$ is in inactive mode. Likewise, $\pactive(r,n)$ stores $r \mapsto \m{A}(d)$ to indicate that $r$ is in active mode. The predicate additionally contains a fraction of the descriptor location $d$. The invariant ties the logical value $n$ to the value $n_1$ that is physically stored in $d$ (i.e., the value last stored in $r$ before $r$ became active). It also contains a fraction of the permission $\ell$ to ensure that helping threads can always safely dereference $\ell$. The final conjunct $\oblc{r,\ell,n_1,m_1,n_2}$ is the linearization obligation of the active \lstinline+rdcss+ operation. The winning thread will convert this resource into the linearization receipt when it linearizes the operation and then transfer it to $\proto(r,c)$. Likewise the permissions to $l$ and $d$ are transferred to $\proto(r,c)$ at this point. They need to remain in the invariant forever, even after the active operation has been completed, because helping threads may still read these locations afterwards.

The constraint $q_d > \nicefrac{1}{2}$ on the fractional permission of $d$ plays two important roles in the proof. First, the correctness of the implementation relies on the fact that the descriptors $d$ are never reused after an operation has completed. Otherwise, there is an ABA problem. The implementation assumes a garbage collected semantics. This allows the invariant to retain the permissions for descriptors that will no longer be accessed. The invariant ensures that more than half of $d$'s permission remains in $\proto(c)$ after $d$ has been used by a past \lstinline+rdcss+ operation. One can then conclude that $d$ cannot have been reused by the currently active operation, as this would also require more than half of the permission in $\pactive(r,n)$, exceeding the maximal full permission amount.

Similarly, the constraints on the permission amounts on $d$ are used to govern the ownership transfer of the linearizability receipt for the associated \lstinline+rdcss+ operation. The thread executing the active \lstinline+rdcss+ operation retains $\nicefrac{1}{4}$ of the permission on $d$ in its local state throughout its own execution of \lstinline+complete+. By the time that the call to \lstinline+complete+ has returned, some thread must have linearized the active operation, which will increase the clock value $c$. At the point when the clock is incremented, the predicate $\proto(c)$ in the invariant forces the helping thread to relinquish ownership of the receipt $\fulc{c}{r,\ell,n_1,m_1,n_2,n_1}$ and transfer it to the invariant. The active thread will then use the knowledge that the clock must have incremented to retrieve the receipt from the invariant by trading it in for its $\nicefrac{1}{4}$ permission on $d$. If another thread had already retrieved the receipt, then the invariant would already own more than $\nicefrac{3}{4}$ of the permission on $d$, contradicting the fact that the active thread still owns $\nicefrac{1}{4}$.

In the remainder of the section, we discuss the proof in some more detail.

\begin{figure}
  \begin{align*}
    \authrstate(r,n) \defeq {} \; &  \rstate(r, n) \MSTAR \inv(r,n)\\
    \inv(r,n) \defeq {} & \exists c .\; (\inactive(r, n) \lor \pactive(r, n)) \MSTAR \clock(c) \MSTAR \proto(r,c)  \\
    \inactive(r, n) \defeq {} \; & r \mapsto \m{I}(n)\\
    \pactive(r, n_1) \defeq {} \; & \exists\, d\, q_d\, \ell\, q_\ell\, m\, m_1\, n_2.\; r \mapsto \m{A}(d) \MSTAR \fracpto{d}{q_d}{\m{D}(\ell, n_1, m_1, n_2)} \land q_d > \nicefrac{1}{2} \\
   & \qquad \MSTAR \fracpto{\ell}{q_\ell}{m} \MSTAR \oblc{r,\ell,n_1,m_1,n_2}\\
    \proto(r,c) \defeq {} \; & \bigmstar{c' < c}.\; \exists\, d\, q_d\, \ell\, q_\ell\, m\, n_1\, m_1\, n_2.\; \pfulc{c'}{r,n_1} \\
    & \qquad \lor \fracpto{\ell}{q_\ell}{m} \MSTAR \fracpto{d}{q_d}{\m{D}(\ell, n_1, m_1, n_2)} \land q_d > \nicefrac{1}{2} \MSTAR \pastof{(\clock(c') \MSTAR r \mapsto \m{A}(d))}\\
    & \qquad \MSTAR \left(\fulc{c'}{r,\ell,n_1,m_1,n_2,n_1} \lor \fracpto{d}{\fracPerm{1}{4}}{\m{D}(\ell,n_1,m_1,n_2)}\right)
  \end{align*}
  \caption{RDCSS data structure invariant\label{fig:rdcss-invariant}}
\end{figure}

\paragraph{Proof of \lstinline+rdcss+.}
We start with the proof of the \lstinline+rdcss+ operation whose outline is shown in \Cref{fig:rdcss-rdcss-proof}. The precondition corresponds to the precondition of the sequential specification in \Cref{fig:rdcss-spec}, except that we have replaced the abstract predicate $\rstate(r,n)$ by the full invariant $\authrstate(r,n)$ and also added the linearization obligation $\oblc{r,\ell,n_1,m_1,n_2}$. After the allocation of the descriptor $d$, the thread gains the full permission $d \mapsto \m{D}(\ell,n_1,m_1,n_2)$ in its local state, leading to the interference-free assertion on Line~\ref{line:rdcss-proof-alloc-d}. Next, the thread tries to change the state of $r$ to active using the \lstinline+CmpX+. The proof then proceeds by case analysis on the returned old value of $r$.

\begin{figure}
  \begin{lstlisting}[gobble=4,language=SPL,mathescape=true,escapechar=@]
    $\annot{\boxedInline{\fracpto{\ell}{q}{m} \MSTAR \authrstate(r,n)} \MSTAR \oblc{r,\ell,n_1,m_1,n_2}}$
    method rdcss($r$: Ref[State], $\ell$: Ref[Val], $n_1$: Val, $m_1$: Val, $n_2$: Val): Val {
      val $d$ = new Ref(D($\ell$, $n_1$, $m_1$, $n_2$))
      $\annot{\boxedInline{\fracpto{\ell}{q}{m} \MSTAR \authrstate(r,n) \land r \mapsto u} \MSTAR d \mapsto \m{D}(\ell,n_1,m_1,n_2) \MSTAR \oblc{r,\ell,n_1,m_1,n_2}}$ @\label{line:rdcss-proof-alloc-d}@
      match CmpX($r$, I($n_1$), A($d$)) with {
        case I($n'$) =>
          if ($n'$ == $n_1$) {
            $\annotml{\boxedInline{\authrstate(r,n_1) \land (r \mapsto \m{A}(d) \MSTAR \fracpto{\ell}{q}{m} \MSTAR  \fracpto{d}{\fracPerm{5}{8}}{\m{D}(\ell,n_1,m_1,n_2)} \MSTAR \oblc{r,\ell,n_1,m_1,n_2}) \land \clock(c))} \\ \MSTAR \fracpto{d}{\fracPerm{1}{8}}{\m{D}(\ell,n_1,m_1,n_2)} \MSTAR \fracpto{d}{\fracPerm{1}{4}}{\m{D}(\ell,n_1,m_1,n_2)}}$ @\label{line:rdcss-proof-cmpx-success}@
            $\annotml{\boxedInline{\authrstate(r,n) \land \fracpto{\ell}{q}{m}} \land \weakpastof{(\boxedInline{r \mapsto \m{A}(d) \MSTAR \clock(c)})} \MSTAR  \fracpto{d}{q_d}{\m{D}(\ell,\_)} \MSTAR \fracpto{d}{\fracPerm{1}{4}}{\m{D}(\ell,n_1,m_1,n_2)}}$ @\label{line:rdcss-proof-cmpx-success-equal-before-complete}@
            complete($r$, $d$)
            $\annotml{\boxedInline{\authrstate(r,n) \land \clock(c') \land c' > c} \land \pastof{(\boxedInline{r \mapsto \m{A}(d) \MSTAR \clock(c)})} \MSTAR \fracpto{d}{\fracPerm{1}{4}}{\m{D}(\ell,n_1,m_1,n_2)}}$@\label{line:rdcss-proof-cmpx-success-equal-after-complete}@
            // Hypothesis @\color{green!60!black}$\forall r\,c\,c'.\;\weakhypof{\boxedInline{r \mapsto \m{A}(d) \MSTAR \clock(c)}}{\boxedInline{\pfulc{c}{r,n} \MSTAR \clock(c') \land c' > c}}{\false}$@
            // Hypothesis: @\color{green!60!black}$\forall r\,c\,d\,d'.\;\weakhypof{\boxedInline{r \mapsto \m{A}(d) \MSTAR \clock(c)}}{\boxedInline{r \mapsto \m{A}(d') \MSTAR \clock(c)}}{d = d'}$@
            $\annotml{\fracpto{\ell}{q}{m} \mstar \boxedInline{\authrstate(r,n)} \MSTAR \fulc{c}{r,\ell,n_1,m_1,n_2,n_1}}$ @\label{line:rdcss-proof-cmpx-success-equal-done}@
            return $n_1$
          } else {
            $\annot{\boxedInline{\;\fracpto{\ell}{q}{m} \MSTAR \authrstate(r,n) \land r \mapsto I(n) \MSTAR \clock(c)} \land n \neq n_1 \MSTAR \oblc{r,\ell,n_1,m_1,n_2}}$ @\label{line:rdcss-proof-cmpx-success-nonequal}@
            $\annot{\boxedInline{\;\fracpto{\ell}{q}{m} \MSTAR \authrstate(r,n) \land r \mapsto I(n) \MSTAR \clock(c+1)} \land n \neq n_1 \MSTAR \fulc{c}{r,\ell,n_1,m_1,n_2,n}}$
            $\annot{\boxedInline{\;\fracpto{\ell}{q}{m} \MSTAR \authrstate(r,n)} \MSTAR \fulc{c}{r,\ell,n_1,m_1,n_2,n}}$
            return $n$
          }
        case A($d'$) =>
          $\annotml{\boxedInline{\fracpto{\ell}{q}{m} \MSTAR \authrstate(r,n) \land  \fracpto{\ell'}{q_{\ell'}}{m'}} \land \weakpastof{(\boxedInline{r \mapsto \m{A}(d') \MSTAR \clock(c)})} \MSTAR \fracpto{d'}{q'}{\m{D}(\ell',\_)} \\ {} \MSTAR d \mapsto \m{D}(\ell,n_1,m_1,n_2) \MSTAR \oblc{r,\ell,n_1,m_1,n_2}}$ @\label{line:rdcss-proof-cmpx-fail-before-complete}@
          complete($d$)
          $\annot{\boxedInline{\fracpto{\ell}{q}{m} \MSTAR \authrstate(r,n)} \MSTAR \oblc{r,\ell,n_1,m_1,n_2}}$ @\label{line:rdcss-proof-cmpx-fail-after-complete}@
          return rdcss($r$, $\ell$, $n_1$, $m_1$, $n_2$)
      }
    }    
    $\annot{v.\;\boxedInline{\fracpto{\ell}{q}{m} \MSTAR \authrstate(r,n)} \MSTAR \fulc{c}{r,\ell,n_1,m_1,n_2,v}}$
  \end{lstlisting}
  \caption{Proof of \lstinline+rdcss+\label{fig:rdcss-rdcss-proof}}
\end{figure}

If the old value of $r$ was $I(n_1)$, the \lstinline+CmpX+ succeeded and we end up on Line~\ref{line:rdcss-proof-cmpx-success}. Here, we know that the new value of $r$ must now be $\m{A}(d)$. To show that the invariant is maintained we need to establish $\pactive(r,n_1)$. So the prove moves $\ell$ to the invariant, along with $\nicefrac{5}{8}$ of the permission on $d$ and the linearization obligation. This proves the invariant. Before we proceed we must inspect the specification of \lstinline+complete+, which is as follows:
\begin{lstlisting}[gobble=2,language=SPL,numbers=none,escapechar=@]
  $\annot{\boxedInline{\authrstate(r,n) \land \fracpto{\ell'}{q'}{m'}} \land \weakpastof{(\boxedInline{r \mapsto \m{A}(d) \MSTAR \clock(c)})} \MSTAR \fracpto{d}{q}{\m{D}(\ell', n_1, m_1, n_2)}}$
  complete($r$,$d$)
  $\annot{\boxedInline{\authrstate(r,n) \land \clock(c') \land c' > c} \MSTAR \pastof{(\boxedInline{r \mapsto \m{A}(d) \MSTAR \clock(c)})}}$
\end{lstlisting}
The proof splits the remaining $\nicefrac{3}{8}$ permission on $d$ still available in the local state into $\nicefrac{1}{8}$ that is used towards satisfying the precondition of \lstinline+complete+ and another $\nicefrac{1}{4}$ that will be framed across the call to \lstinline+complete+. In addition, it creates the weak past predicate $\weakpastof{(\boxedInline{r \mapsto \m{A}(d) \MSTAR \clock(c)})}$ which establishes the connection between the value of $r$ at the current clock time $c$ and the descriptor value $d$. The resulting interference-free assertion is shown on Line~\ref{line:rdcss-proof-cmpx-success-equal-before-complete}. It implies the precondition of \lstinline+complete+.

The postcondition of \lstinline+complete+ yields the interference-free assertion on Line~\ref{line:rdcss-proof-cmpx-success-equal-after-complete}. At this point, the proof uses the fact $\authrstate(r,n) \land \clock(c') \land c' > c$ to obtain the following fact from $\proto(r,c')$:
\begin{align*}
\pfulc{c}{r,n_1'} \lor {} & \arraycolsep=2pt \left(\begin{array}{l}
\fracpto{\ell'}{q_{\ell'}}{m} \MSTAR \fracpto{d'}{q_d}{\m{D}(\ell', n_1', m_1', n_2')} \land q_d > \nicefrac{1}{2} \MSTAR \pastof{(\clock(c) \MSTAR r \mapsto \m{A}(d'))}\\
    {} \MSTAR \left(\fulc{c}{r,\ell',n_1',m_1',n_2',n_1'} \lor \fracpto{d'}{\fracPerm{1}{4}}{\m{D}(\ell',n_1',m_1',n_2')}\right)
\end{array}\right)
\end{align*}
The first disjunct $\pfulc{c}{r,n_1'}$ would imply that a \lstinline+get+ operation linearized at time $c$. We can obtain a contradiction for this case using temporal interpolation with the hypothesis
\[\forall c\,c'.\;\weakhypof{\boxedInline{r \mapsto \m{A}(d) \MSTAR \clock(c)}}{\boxedInline{\pfulc{c}{r,n} \MSTAR \clock(c') \land c' > c}}{\false}\]
The hypothesis holds because $r \mapsto \m{A}(d) \MSTAR \clock(c)$ is invariant as long as the clock does not increase. It also yields a contradiction with $\clock(c') \land c' > c$. The interferences obtained from the code of \lstinline+get+ and \lstinline+rdcss+ that increase the clock require as precondition $r \mapsto \m{I}(\_)$, which is incompatible with $r \mapsto \m{A}(d)$. So we only need to consider the interferences from \lstinline+complete+. These produce a receipt resource $\pfulc{c}{r,\ell',n_1',m_1',n_2',n'}$ that is invariant under all interferences and contradicts $\pfulc{c}{r,n}$.

Thus, the second disjunct obtained from $\proto(r,c')$ must hold. The next step is to show that $d$ must equal $d'$. This is done by unordered temporal interpolation, instantiating the following hypothesis two times symmetrically for $d$ and $d'$:
\[\forall r\,c\,d\,d'.\;\weakhypof{\boxedInline{r \mapsto \m{A}(d) \MSTAR \clock(c)}}{\boxedInline{r \mapsto \m{A}(d') \MSTAR \clock(c)}}{d = d'}\]
This hypothesis holds because $r \mapsto \m{A}(d) \MSTAR \clock(c)$ implies the invariant $r \mapsto \m{A}(d) \MSTAR \clock(c) \lor \clock(c') \land c' > c$, which in turn implies $d=d'$ with $r \mapsto \m{A}(d') \MSTAR \clock(c)$.

Now the proof uses $\fracpto{d}{q_d}{\m{D}(\ell', n_1', m_1', n_2')} \mstar \fracpto{d}{\nicefrac{1}{4}}{\m{D}(\ell, n_1, m_1, n_2)}$ to conclude the equality
\[(\ell', n_1', m_1', n_2')=(\ell, n_1, m_1, n_2)\enspace.\]
The next step is to derive $\fulc{c}{r,\ell,n_1,m_1,n_2,n_1}$ using the reasoning about the fractional permissions on $d$ that we already outlined when discussing the invariant. This resource can be transferred out of the invariant into the local state provided we conversely transfer $\fracpto{d}{\fracPerm{1}{4}}{\m{D}(\ell,n_1,m_1,n_2)}$ back into the global state so that the invariant is maintained. This finally yields the assertion on Line~\ref{line:rdcss-proof-cmpx-success-equal-done} which implies the postcondition and completes this case.

The next case is when the return value of the \lstinline+CmpX+ is $\m{I}(n)$ for $n \neq n_1$. This yields the assertion on Line~\ref{line:rdcss-proof-cmpx-success-nonequal}. This case corresponds to the pure case of the operation's specification. Hence, the proof can directly linearize the operation at this point and complete this case.

The final case is when the return value of the \lstinline+CmpX+ is $\m{A}(d)$. That is, a concurrent \lstinline+rdcss+ operation is already active. To satisfy the precondition of the call to \lstinline+complete+, the proof proceeds as follows. First, the postcondition of \lstinline+CmpX+ gives us $r \mapsto \m{A}(d')$ for some descriptor $d'$. Hence, we obtain $\pactive(r,n) \land r \mapsto \m{A}(d')$ from the invariant. We can now transfer some fraction $q'$ of the permission $\fracpto{d}{q_d}{\m{D}(\ell, n_1, m_1, n_2)}$ into the local state that leaves enough in the global state to maintain the invariant, say $q' = \frac{q_d - \nicefrac{1}{2}}{2}$. We also transfer some fraction $q_{\ell'}$ of the permission on $\ell'$ into the local state. Finally, we derive the weak past predicate $\weakpastof{(\boxedInline{r \mapsto \m{A}(d') \MSTAR \clock(c)})}$ which establishes the connection between the value of $r$ at the current clock time $c$ and the descriptor value $d'$. The resulting interference-free assertion on Line~\ref{line:rdcss-proof-cmpx-fail-before-complete} implies the precondition of \lstinline+complete+. The postcondition of \lstinline+complete+ yields the assertion on Line~\ref{line:rdcss-proof-cmpx-fail-after-complete} which implies the precondition of the recursive call.

\paragraph{Proof of \lstinline+complete+ and \lstinline+get+.} The proofs of \lstinline+complete+ and \lstinline+get+ follow similar reasoning. Their outlines are shown in \Cref{fig:rdcss-complete-proof} and \Cref{fig:rdcss-get-proof}. We omit a detailed description but provide the key reasoning steps for \lstinline+complete+ inline. The proof of \lstinline+get+ closely follows that of \lstinline+rdcss+ but is simpler.

\begin{figure}
  \begin{lstlisting}[gobble=4,language=SPL,mathescape=true,escapechar=@]
    $\annot{\boxedInline{\authrstate(r,n) \land \fracpto{\ell'}{q'}{m'}} \land \weakpastof{(\boxedInline{r \mapsto \m{A}(d) \MSTAR \clock(c)})} \MSTAR \fracpto{d}{q}{\m{D}(\ell', n_1, m_1, n_2)}}$
    method complete($r$: Ref[State], $d$: Ref[Descr]) {
      $\annot{\boxedInline{\authrstate(r,n) \land \fracpto{\ell'}{q'}{m'}} \land \weakpastof{(\boxedInline{r \mapsto \m{A}(d) \MSTAR \clock(c)})} \MSTAR \fracpto{d}{q}{\m{D}(\ell', n_1, m_1, n_2)}}$
      val D($\ell$, $n_1$, $m_1$, $n_2$) = !$d$ 
      $\annot{\boxedInline{\authrstate(r,n) \land \fracpto{\ell}{q'}{m'} \MSTAR \clock(c')} \MSTAR \pastof{(\boxedInline{r \mapsto \m{A}(d) \MSTAR \clock(c)})} \MSTAR \fracpto{d}{q}{\m{D}(\ell, n_1, m_1, n_2)}}$
      val $m$ = !$\ell$
      // Hypothesis: @\color{green!60!black}$\forall r\,c\,c'.\;\weakhypof{\boxedInline{\clock(c)}}{\boxedInline{\clock(c')}}{c' \geq c}$@
      $\annotml{\boxedInline{\authrstate(r,n')} \MSTAR \pastof{\left(\boxedInline{\fracpto{\ell}{q'}{m} \MSTAR \rstate(r, n) \MSTAR \clock(c')} \MSTAR \fracpto{d}{q}{\m{D}(\ell, n_1, m_1, n_2)}\right)} \land c' \geq c \\ {} \MSTAR \pastof{(\boxedInline{r \mapsto \m{A}(d) \MSTAR \clock(c)})} \MSTAR \fracpto{d}{q}{\m{D}(\ell, n_1, m_1, n_2)}}$
      val $u$ = $m$ == $m_1$ ? $n_2$ : $n_1$
      $\annotml{\boxedInline{\authrstate(r,n') \land r \mapsto u \MSTAR \clock(c'')} \MSTAR \pastof{\left(\boxedInline{\fracpto{\ell}{q'}{m} \MSTAR \rstate(r,n) \MSTAR \clock(c')} \MSTAR \fracpto{d}{q}{\m{D}(\ell, n_1, m_1, n_2)}\right)} \land c' \geq c \\ {} \land u = (\ite{m = m_1}{n_2}{n_1})  \MSTAR \pastof{(\boxedInline{r \mapsto \m{A}(d) \MSTAR \clock(c)})} \MSTAR \fracpto{d}{q}{\m{D}(\ell, n_1, m_1, n_2)}}$
      val $v$ = CmpX($r$, A($d$), I($u$))
      // Case $v = \m{A}(d)$
        // First: show @\color{green!60!black}$c'' \geq c'$ and hence $c'' \geq c$@ 
        // Hypothesis: @\color{green!60!black}$\forall r\,c'\,c''.\;\weakhypof{\boxedInline{\clock(c')}}{\boxedInline{\clock(c'')}}{c'' \geq c'}$@
        // Now do case analysis on @\color{green!60!black}$c'' = c \lor c'' > c$@
        // Case @\color{green!60!black}$c'' > c$@
          // First show that @\color{green!60!black}$\pfulc{c}{r,n}$@ cannot hold
          // Hypothesis @\color{green!60!black}$\forall r\,c\,c''.\;\weakhypof{\boxedInline{r \mapsto \m{A}(d) \MSTAR \clock(c)}}{\boxedInline{\pfulc{c}{r,n} \MSTAR \clock(c'') \land c'' > c}}{\false}$@
          // Now @\color{green!60!black}$\proto(r,c'')$ in $\authrstate$ gives $\fracpto{d'}{q_d'}{\m{D}(\ell, n_1, m_1, n_2)} \land q_d' > \nicefrac{1}{2} \MSTAR \pastof{(\clock(c) \MSTAR r \mapsto \m{A}(d'))}$@
          // Note also that @\color{green!60!black}$r \mapsto \m{A}(d)$ gives us $\pactive(r,c'') \land r \mapsto \m{A}(d)$ from $\authrstate$@
          // In turn, this gives us @\color{green!60!black}$\fracpto{d}{q_d}{\m{D}(\ell, \_)} \land q_d > \nicefrac{1}{2}$.@
          // That is @\color{green!60!black}$\fracpto{d}{q_d}{\m{D}(\ell, \_)} \land q_d > \nicefrac{1}{2} \MSTAR \fracpto{d'}{q_d'}{\m{D}(\ell, n_1, m_1, n_2)} \land q_d' > \nicefrac{1}{2}$@
          // Now use unordered interpolation to infer @\color{green!60!black}$d=d'$@ 
          // Hypothesis: @\color{green!60!black}$\forall r\,c\,d\,d'.\;\weakhypof{\boxedInline{r \mapsto \m{A}(d) \MSTAR \clock(c)}}{\boxedInline{r \mapsto \m{A}(d') \MSTAR \clock(c)}}{d = d'}$@
          // @\color{green!60!black}Contradiction because $q_d + q_d' > 1$@
        // Case @\color{green!60!black}$c'' = c$ (and hence $c'' = c = c'$)@
          // Show @\color{green!60!black}$n = n_1$@ with unordered interpolation
          // Hypothesis:
          //   @\color{green!60!black}$\forall r\,c\,d\,n\,n_1.\;\weakhypof{\boxedInline{r \mapsto \m{A}(d) \MSTAR \clock(c)}}{\boxedInline{\rstate(r,n) \MSTAR \clock(c) } \MSTAR \fracpto{d}{q}{\m{D}(\ell, n_1, \_)}}{n = n_1}$@
          // Hypothesis:
          //   @\color{green!60!black}$\forall r\,c\,d\,n\,n_1.\;\weakhypof{\boxedInline{\rstate(r,n) \MSTAR \clock(c) } \MSTAR \fracpto{d}{q}{\m{D}(\ell, n_1, \_)}}{\boxedInline{r \mapsto \m{A}(d) \MSTAR \clock(c)}}{n = n_1}$@
          $\annotml{\boxedInline{\authrstate(r,u) \land r \mapsto \m{I}(n_2) \MSTAR \clock(c+1) \MSTAR \fulc{c}{r,\ell,n_1,m_1,n_2,n_1}} \MSTAR \pastof{(\boxedInline{r \mapsto \m{A}(d) \MSTAR \clock(c)})}}$
          $\annotml{\boxedInline{\authrstate(r,n) \land \clock(c') \land c' > c} \MSTAR \pastof{(\boxedInline{r \mapsto \m{A}(d) \MSTAR \clock(c)})}}$
      // Case $v \neq \m{A}(d)$
        // Hypothesis: @\color{green!60!black}$\forall r\,d\,c\,c'.\;\weakhypof{\boxedInline{r \mapsto \m{A}(d) \MSTAR \clock(c)}}{\boxedInline{r \mapsto v \MSTAR \clock(c') \land v \neq \m{A}(d)}}{c' > c}$@
        $\annot{\boxedInline{\authrstate(r,n) \land \clock(c') \land c' > c} \MSTAR \pastof{(\boxedInline{r \mapsto \m{A}(d) \MSTAR \clock(c)})}}$
    }
    $\annot{\boxedInline{\authrstate(r,n) \land \clock(c') \land c' > c} \MSTAR \pastof{(\boxedInline{r \mapsto \m{A}(d) \MSTAR \clock(c)})}}$
 \end{lstlisting}
  \caption{Proof of \lstinline+complete+\label{fig:rdcss-complete-proof}}
\end{figure}

\begin{figure}
  \begin{lstlisting}[gobble=4,language=SPL,mathescape=true,escapechar=@]
    $\annot{\boxedInline{\authrstate(r,n)} \MSTAR \oblc{r}}$
    method get($r$: Ref[State]): Val {
      $\annot{\boxedInline{\authrstate(r,n) \land \clock(c)} \MSTAR \oblc{r}}$
      $\annot{\boxedInline{\authrstate(r,n) \land r \mapsto u \MSTAR \rstate(r,n)}}$
      match !$r$ with {
        case I($v$) =>
          $\annot{\boxedInline{\authrstate(r,n')} \land \pastof{(\boxedInline{r \mapsto I(n) \land \rstate(r,n) \land v = n})}}$
          $\annot{\boxedInline{\authrstate(r,n')} \land \pastof{(\boxedInline{\rstate(r,n) \land v = n})}}$
          $\annot{\boxedInline{\authrstate(r,n') \land \clock(c+1)} \MSTAR \fulc{c}{r,v}}$
          return $n$
        case A($d$) =>
          $\annot{\boxedInline{\authrstate(r,n) \land r \mapsto \m{A}(d)} \land \weakpastof{(\boxedInline{r \mapsto \m{A}(d) \MSTAR \clock(c)})} \MSTAR \oblc{r}}$
          $\annot{\boxedInline{\authrstate(r,n) \land \fracpto{\ell}{q}{m}} \land \weakpastof{(\boxedInline{r \mapsto \m{A}(d) \MSTAR \clock(c)})} \MSTAR \fracpto{d}{q}{\m{D}(\ell,n_1,m_1,n_2)} \MSTAR \oblc{r}}$
          complete($r$, $d$)
          $\annot{\boxedInline{\authrstate(r,n)} \MSTAR \oblc{r}}$
          return get($r$)
        }
    }
    $\annot{v.\;\boxedInline{\authrstate(r,n)} \MSTAR \FULc{r,v}}$
  \end{lstlisting}
  \caption{Proof of \lstinline+get+\label{fig:rdcss-get-proof}}
\end{figure}


	}
\end{document}